\def\>{\rangle}
\def\<{\langle}
\def\id{\mathsf{id}}
\def\mB{\mathcal{B}}
\def\mE{\mathcal{E}}
\def\mF{\mathcal{F}}
\def\mN{\mathcal{N}}
\def\mP{\mathcal{P}}
\def\mS{\mathcal{S}}
\def\mD{\mathcal{D}}
\renewcommand{\qedsymbol}{\nobreak \ifvmode \relax \else
	\ifdim \lastskip<1.5em \hskip-\lastskip \hskip1.5em plus0em
	minus0.5em \fi \nobreak \vrule height0.75em width0.5em
	depth0.25em\fi}
\renewcommand{\geq}{\geqslant}
\renewcommand{\leq}{\leqslant}
\newtheorem{theorem}{Theorem}
\newtheorem*{theorem*}{Theorem}
\newtheorem{lemma}[theorem]{Lemma}
\newtheorem*{lemma*}{Lemma}
\newtheorem{definition}[theorem]{Definition}
\newtheorem*{definition*}{Definition}
\theoremstyle{remark}
\newtheorem*{remark}{Remark}
\newcommand{\bea}{\begin{eqnarray}}
\newcommand{\eea}{\end{eqnarray}}
\newcommand{\be}{\begin{equation}}
\newcommand{\ee}{\end{equation}}
\newcommand{\ba}{\begin{equation}\begin{aligned}}
\newcommand{\ea}{\end{aligned}\end{equation}}
\def\be{\begin{equation}}
\def\ee{\end{equation}}
\newcommand{\cptp}{{\rm CPTP}}
\newcommand{\post}{{\rm post}}
\newcommand{\pree}{{\rm pre}}
\newcommand{\mA}{\mathcal{A}}
\newcommand{\mH}{\mathcal{H}}
\newcommand{\mM}{\mathcal{M}}
\newcommand{\lr}{\rangle\langle}
\newcommand{\ra}{\rangle}
\newcommand{\tr}{{\rm Tr}}
\newcommand{\mbb}[1]{\mathbb{#1}}
\newcommand{\ket}[1]{|#1\rangle}
\newcommand{\eqdef}{\coloneqq}
\def\tA{\tilde{A}}
\def\tR{\tilde{R}}
\def\trho{\tilde{\rho}}
\def\mf{\mathfrak{F}}
\def\cptp{{\rm CPTP}}
\newcommand{\ox}{\otimes}
\newcommand{\proj}[1]{| #1\rangle\!\langle #1 |}
\begin{document}
	
\title{How to quantify a dynamical quantum resource}

\author{Gilad Gour}
\email{giladgour@gmail.com}
\affiliation{Department of Mathematics and Statistics, Institute for Quantum Science and Technology, 
  University of Calgary, Calgary, Alberta T2N 1N4, Canada}

\author{Andreas Winter}
\email{andreas.winter@uab.cat}
\affiliation{ICREA \& F\'{i}sica Te\`{o}rica: Informaci\'{o} i Fen\`{o}mens Qu\`{a}ntics, 
  Departament de F\'{i}sica, Universitat Aut\`{o}noma de Barcelona, ES-08193 Bellaterra (Barcelona), Spain}
	
\date{7 June 2019}

\begin{abstract}
      We show that the generalization of the relative entropy of a resource 
      from states to channels is not unique, and there are at least six such 
      generalizations. We then show that two of these generalizations are 
      asymptotically continuous, satisfy a version of the asymptotic equipartition 
      property, and their regularizations appear in the power exponent of 
      channel versions of the quantum Stein's Lemma. To obtain our results, 
      we use a new type of ``smoothing" that can be applied to functions of 
      channels (with no state analog). We call it ``liberal smoothing" as 
      it allows for more spread in the optimization. Along the way, we show 
      that the diamond norm can be expressed as a max relative entropy distance to 
      the set of quantum channels, and prove a variety of properties of all 
      six generalizations of the relative entropy of a resource.
\end{abstract}
\maketitle

{\it Introduction--} 
In recent years it has been recognized that many properties of physical systems, 
such as quantum entanglement, asymmetry, coherence, athermality, contextuality, 
and many others, can be viewed as resources circumventing certain constraints 
imposed on physical systems (see~\cite{CG2019} and references therein). 
Each resource can be classified as being classical or quantum, static (e.g. 
entangled state) or dynamic (e.g. quantum channel), noisy or noiseless, 
leading to numerous interesting quantum information processing tasks~\cite{DHW2008} 
(e.g. quantum teleportation~\cite{BB1993}).
While there are many ways to quantify the resourcefulness of such properties, 
all quantifiers of a resource must satisfy certain conditions such as 
monotonicity under the set of free operations. Typically, there are  numerous 
measures that satisfy these conditions, but what can single out a given 
measure is an operational interpretation, giving it meaning beyond its 
sheer ability to quantify somewhat vaguely the resource.

The relative entropy of a resource, which was originally defined in~\cite{Vedral-1997a} 
for entanglement theory, is an example of a measure that has such an operational 
interpretation in many quantum resource theories (QRTs). First, it was shown 
in~\cite{HOHH2002,Gour-2009a} to be a unique measure in reversible QRTs, and 
then was shown to be the unique asymptotic rate of interconversion among 
static resources under resource non-generating operations~\cite{Brandao-2015a}. 
Moreover, it was shown very recently~\cite{BM2018,AHJ2018} that resource erasure as a universal 
operational task leads to the (regularized) relative entropy of a
resource as the optimal rate (this idea was first laid out in~\cite{GPW2005}).
In addition, this measure satisfies the asymptotic equipartition property 
(AEP)~\cite{BP2010}, appears as an optimal rate in the generalized quantum 
Stein's Lemma~\cite{BP2010}, and is asymptotically continuous~\cite{SH2006,C2006}, 
a property linked to it also being a non-lockable measure~\cite{HHHO2005}. 
Due to all of these properties, the relative entropy of a resource plays 
a major role in many QRTs~\cite{CG2019}.

In this paper we study six generalizations of the quantum relative entropy 
of a resource from static resources (i.e. states) to dynamic ones (i.e channels). 
Four of these measures were introduced very recently in~\cite{LY2019,LW2019}. 
We show that for two of them, the relative entropy of the dynamical resource 
is asymptotically continuous, satisfies a version of the AEP, and a version 
of their regularization appear as optimal rates in a version of the quantum 
Stein's Lemma for channels. In addition, we show that all these measures are 
indeed generalizations to dynamical resources in the sense that they reduce 
to the relative entropy of a static resource for replacement (i.e. constant) channels.

{\it Resource theories of quantum processes--}~\cite{LY2019,LW2019,GS2019,G2019,TEZ+2019}
A quantum resource theory (QRT), consists of a function $\mf$ taking any pair 
of physical systems $A$ and $B$ to a subset of completely positive and trace 
preserving (CPTP) maps $\mf(A\to B)\subset \cptp(A\to B)$, where $\cptp(A\to B)$ 
is the set of all CPTP maps (i.e. quantum channels) from $\mB(A)$ (bounded 
operators on Hilbert space of system $A$) to $\mB(B)$. 
The mapping $\mf$ is a quantum resource theory if the following two conditions hold:
\begin{enumerate}
  \item For any physical system $A$ the set $\mf(A\to A)$ contains the identity map $\id_A$. 
  \item For any three systems $A,B,C$, if $\mM\in\mf(A\to B)$ and $\mN\in\mf(B\to C)$ 
        then $\mN\circ\mM\in \mf(A\to C)$.
\end{enumerate}
Denoting by $1$ the trivial Hilbert space we identify $\mf(1\to A)$ with the set 
of free density matrices in $\mB(A)$. That is, a density matrix $\rho\in\mf(1\to A)$ 
can be viewed as the CPTP map $\rho(z)=z\rho$ for all $z\in\mbb{C}$. 
For simplicity, we will write $\mf(1\to A)\equiv\mf(A)$.
Typically, QRTs are physical in the sense that they arise from some physical 
constraints, and therefore admit a tensor product structure. That is, the set of 
free operations $\mf$ satisfies the following additional conditions:
\begin{enumerate}
  \item[3.] The free operations are ``completely free'': For any three physical 
          systems $A$, $B$, and $C$, if $\mM\in\mf(A\to B)$ then 
          $\id_C\otimes\mM\in\mf(CA\to CB)$.
  \item[4.] Discarding a system (i.e. the trace) is a free operation: For any 
          system $A$, the set $\mf(A\to 1)$ is not empty.
\end{enumerate}
The above additional conditions are very natural and satisfied by almost all QRTs 
studied in literature. They implies the following properties~\cite{CG2019}:
\begin{itemize}
  \item If $\mM_1$ and $\mM_2$ are free channels then also $\mM_1\otimes\mM_2$ is free. 
  \item Appending free states is a free operation: For any given free state 
        $\sigma\in\mf(B)$, the CPTP map $\mM_{\sigma}(\rho):=\rho\otimes\sigma$ 
        is a free map, i.e., it belongs to $\mf(A\to AB)$.
  \item The replacement map $\mM_\sigma(\rho)\eqdef\sigma$, for any density 
        matrix $\rho\in\mB(A)$ and a fixed free state $\sigma\in\mf(B)$, is a 
        free channel; i.e. $\mM_\sigma\in\mf(A\to B)$.
\end{itemize}
It is also physical to assume that $\mf(A\to B)$ is a closed set, since otherwise 
there exists a sequence of free channels whose limit is a resource channel. 
Finally, we will assume that for any integer $n$, free channel 
$\mN\in\mf(A_1\cdots A_n\to B_1\cdots B_n)$, and two permutation channels 
$\mP_A^{\pi}$ and $\mP_B^{\pi^{-1}}$ corresponding to a permutation $\pi$ 
on $n$ elements, we have
\be
  \mP_B^{\pi^{-1}}\circ\mN_{A_1\cdots A_n\to B_1\cdots B_n}\circ\mP_A^{\pi}
                               \in\mf(A_1\cdots A_n\to B_1\cdots B_n)\;.\nonumber
\ee
Note that almost all QRTs discussed in literature satisfy this last condition 
including entanglement theory, coherence, athermality, etc.
In the rest of this  paper we will assume that $\mf$ satisfies all the 
above conditions.

The most general physical operation that can be performed on a dynamical resource 
$\mN\in\cptp(A\to B)$ can be characterized with a superchannel~\cite{Pavia1,G2019}, 
$\Theta$, defined for all $\mN\in\cptp(A\to B)$ as a transformation of the form
\be
  \label{super}
  \Theta[\mN_{A\to B}]=\mE^{\post}_{BE\to B'}\circ\mN_{A\to B}\circ\mE^{\pree}_{A'\to AE}\;,
\ee
where $\mE^{\post}\in\cptp(BE\to B')$ and $\mE^{\pree}\in\cptp(A'\to AE)$ are 
quantum channels. We say that the superchannel $\Theta$ is free if in addition 
$\mE^{\post}\in\mf(BE\to B')$ and $\mE^{\pree}\in\mf(A'\to AE)$ 
(i.e. $\mE^{\post}$ and $\mE^{\pree}$ are free). 
Therefore, any measure of a resource $E:\cptp\to\mbb{R}$ must satisfy
\be
  \label{mono}
  E\big(\Theta[\mN_{A\to B}]\big)\leq E\big(\mN_{A\to B}\big)\;,
\ee
for all $\mN\in\cptp(A\to B)$ and all free superchannels $\Theta$. In addition, 
we require that $E(\mN)=0$ if $\mN\in\mf(A\to B)$. This condition implies that $E$ 
is non-negative. To see it, take $\mE^{\post}_{BE\to B'}$ in~\eqref{super} to 
be the replacement map whose output is some free state in $\mf(B')$, and observe 
that for this case $0=E(\Theta[\mN])\leq E(\mN)$ for all $\mN\in\cptp(A\to B)$.

{\it The relative entropy of a resource--}
We will consider here two generalization of the relative entropy of a resource 
from the state domain to the channel domain, and leave four further generalizations 
to the supplemental material (SM).
The first relative entropy of a dynamical resource $\mN\in\cptp(A\to B)$ is defined as
\begin{align}
  \label{maina}
  D_\mf(\mN)\eqdef\inf_{\mM\in\mf(A\to B)}D(\mN\|\mM)\;,
\end{align}
with the channel divergence~\cite{Cooney2016,Wilde2018,G2019}
\be
  D(\mN\|\mM) \eqdef \max_{\varphi\in\mD({RA})} 
                     D\left(\mN_{A\to B}(\varphi_{RA})\|\mM_{A\to B}(\varphi_{RA})\right)\;,
\ee
and $D(\rho\|\sigma)=\tr[\rho\log\rho-\rho\log\sigma]$ is the relative entropy.
The optimization is over all states $\varphi_{RA}$, where w.l.o.g. we can take 
$R\cong A$ and $\varphi_{RA}$ is pure~\cite{Cooney2016,Wilde2018}.
If the optimization over $\mD(RA)$ is replaced with optimization over the set 
of all density matrices $\mf(RA)$, then one gets the second generalization~\cite{LY2019}
\begin{align}
  \label{maind}
  E_\mf(\mN) \eqdef &\min_{\mM\in\mf(A\to B)}\sup_{\rho\in\mf(RA)}\nonumber\\
                    &D\left(\mN_{A\to B}(\rho_{RA})\|\mM_{A\to B}(\rho_{RA})\right)\;,
\end{align}
where the supremum is over all free states $\rho\in\mf(RA)$ and all dimensions 
$|R|$, and the minimum is over all free channels in $\mf(A\to B)$. Both $D_\mf$ 
and $E_\mf$, as well as other generalizations, were introduced very recently 
in~\cite{LY2019,LW2019}, and in the SM we list all of them along with a few 
new ones and discuss some of their properties. For clarity, we leave the 
technical details of all proofs to the SM.

\begin{theorem}
\label{properties}
The above relative entropies have the following properties:
\begin{enumerate}
\item {\rm\bf [Monotonicity]} $D_{\mf}$ and $E_\mf$ behave monotonically under 
      free superchannels. Specifically, let $\mE^{\post}\in\cptp(BE\to B')$ and 
      $\mE^{\pree}\in\cptp(A'\to AE)$ be completely resource RNG channels, and 
      let $\Theta$ has the form~\eqref{super}. Then, for all $\mN\in\cptp(A\to B)$
      \be
        D_\mf\big(\Theta[\mN]\big) \leq D_\mf\big(\mN\big)\;\;;\;\;E_\mf\big(\Theta[\mN]\big)
                                   \leq E_\mf\big(\mN\big)\;.
      \ee
\item {\rm\bf [Reduction]} Let $\mN\in\cptp(A\to B)$ be a constant channel 
      $\mN(X_A)=\tr[X_A]\omega_B$ for all $X_A\in\mB(A)$ and a fixed density 
      matrix $\omega_B\in\mD(B)$. Then,
      \be
        D_{\mf}(\mN)=E_{\mf}(\mN)=D_\mf(\omega_B)\eqdef\min_{\sigma\in\mf(B)}D(\omega_B\|\sigma_B)\;.
      \ee
\item {\rm\bf [Faithfulness]} $D_\mf(\mN_{A\to B})=0$ if and only if 
      $\mN\in\mf(A\to B)$.
      If $E_{\mf}(\mN)=0$ for some $\mN\in\cptp(A\to B)$ then $\mN$ must be 
      completely resource non-generating (RNG). Moreover, if for $|R|=|A|$ the set 
      $\mf(RA)$ contains a pure state with full Schmidt rank, then 
      \be
        E_\mf(\mN_{A\to B})=0\iff\mN\in\mf(A\to B)\;.
      \ee
\end{enumerate}
\end{theorem}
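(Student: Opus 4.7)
The plan is to handle the three parts in order, with monotonicity serving as the central step, reduction following from evaluating the channel divergence on replacement inputs, and faithfulness from the definiteness of the state relative entropy combined with the structure of the free set.

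For monotonicity I would pair the infimum over free channels for $\mN$ with the \emph{same} free channel pushed through $\Theta$. Concretely, given $\mM\in\mf(A\to B)$, the channel $\Theta[\mM]=\mE^{\post}\circ\mM\circ\mE^{\pree}$ belongs to $\mf(A'\to B')$, since $\id_E\otimes\mM$ is free by the completely-free axiom and the composition of free channels is free. Then for any input $\varphi_{RA'}$, applying data processing of the state relative entropy first under $\mE^{\post}$ and then observing that $\mE^{\pree}(\varphi_{RA'})$ is an admissible input for the channel divergence of $\mN$ against $\mM$ gives $D(\Theta[\mN](\varphi)\,\|\,\Theta[\mM](\varphi))\le D(\mN(\mE^{\pree}(\varphi))\,\|\,\mM(\mE^{\pree}(\varphi)))\le D(\mN\|\mM)$. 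Maximizing over $\varphi$ on the left and infimizing over $\mM$ on the right yields $D_\mf(\Theta[\mN])\le D_\mf(\mN)$. The same argument works for $E_\mf$ once we note that if $\rho\in\mf(RA')$ then $(\id_R\otimes\mE^{\pree})(\rho)\in\mf(RAE)$ by complete freeness of $\mE^{\pree}$, so the supremum over the smaller set $\mf(RA')$ on the left is bounded by the supremum over $\mf(RAE)$ on the right.

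For the reduction, one direction is immediate: taking $\mM=\mM_\sigma$ with $\sigma\in\mf(B)$, which is free by the consequences of the axioms, gives $\mN(\varphi_{RA})=\varphi_R\otimes\omega_B$ and $\mM(\varphi_{RA})=\varphi_R\otimes\sigma_B$, so the channel divergence equals $D(\omega_B\|\sigma_B)$ independently of $\varphi$; infimizing over $\sigma$ yields $D_\mf(\mN)\le D_\mf(\omega_B)$, and the identical calculation restricted to product inputs $\varphi_R\otimes\tau_A$ with $\tau_A\in\mf(A)$ gives $E_\mf(\mN)\le D_\mf(\omega_B)$. For the reverse inequality, I would feed the channel divergence with a free state $\tau_A\in\mf(A)$ (available whenever the QRT admits a free state on $A$, as in all standard examples), obtaining $D(\mN\|\mM)\ge D(\omega_B\|\mM(\tau_A))\ge D_\mf(\omega_B)$, where the last step uses that $\mM(\tau_A)\in\mf(B)$ because the composition of free channels is free.

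Faithfulness of $D_\mf$ follows from closedness of $\mf(A\to B)$: the infimum is attained at some $\mM^\star\in\mf$, and $D(\mN\|\mM^\star)=0$ forces $\mN(\varphi)=\mM^\star(\varphi)$ for every pure bipartite input $\varphi$, hence $\mN=\mM^\star\in\mf$. For $E_\mf$ a vanishing value yields an $\mM^\star\in\mf$ agreeing with $\mN$ on every free input $\rho\in\mf(RA)$, which by complete freeness of $\mM^\star$ gives $(\id_R\otimes\mN)(\rho)=(\id_R\otimes\mM^\star)(\rho)\in\mf(RB)$, proving that $\mN$ is completely resource non-generating. Under the full-Schmidt-rank hypothesis, a single pure $\psi_{RA}\in\mf(RA)$ with $|R|=|A|$ pins down a channel uniquely via the Choi state, so $\mN$ and $\mM^\star$ must coincide as channels and $\mN\in\mf$. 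I expect the reverse direction of the reduction to be the most delicate step, since it relies on exhibiting a free state on $A$ to feed into the channel divergence, whereas the other steps follow cleanly from data processing, the closedness of $\mf$, and the complete freeness axiom.
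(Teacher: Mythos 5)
Your proposal is correct and follows essentially the same route as the paper: for monotonicity, restricting the minimization to channels of the form $\Theta[\mM]$ and applying data processing together with the fact that $\mE^{\pree}$ maps (free) inputs to admissible (free) inputs; for the reduction, comparing against replacement channels for the upper bound and feeding a free input state (whose image under a free $\mM$ lies in $\mf(B)$) for the lower bound; and for faithfulness, using definiteness of the relative entropy plus the observation that a full-Schmidt-rank pure free state determines the channel. The only cosmetic point is that in your upper bound for $E_\mf$ there is no need to restrict to product inputs: since $D(\rho_R\otimes\omega_B\,\|\,\rho_R\otimes\sigma_B)=D(\omega_B\|\sigma_B)$ for every free $\rho_{RA}$, the bound already holds for the full supremum.
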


In contrast to the monotonicity property above, the function $D_\mf$ behaves 
monotonically under \emph{any} RNG superchannel. This follows directly from the following:
\ba\nonumber
  D_\mf(\Theta[\mN])&=    \min_{\Omega\in\mf(A'\to B')}D(\Theta[\mN_{A\to B}]\|\Omega_{A'\to B'})\\
                    &\leq \min_{\mM\in\mf(A\to B)}D(\Theta[\mN_{A\to B}]\|\Theta[\mM_{A\to B}])\\
                    &\leq \min_{\mM\in\mf(A\to B)}D(\mN_{A\to B}\|\mM_{A\to B})=D_{\mf}(\mN)\;,
\ea
where the first inequality follows from the fact that $\Theta$ is RNG, and the second 
from the data processing inequality of the channel divergence~\cite{G2019}. Note also 
that from their definitions we always have
\be
  E_{\mf}(\mN)\leq D_{\mf}(\mN)\quad\forall\;\mN\in\cptp(A\to B)\;.
\ee

One may wonder if exchanging the min-max order in~\eqref{maina} and~\eqref{maind} 
would yield other relative entropy based measures that are in general different 
than $D_\mf$ and $E_\mf$. However, in the following theorem we show that this is 
not the case.

\begin{theorem}
  \label{minmax}
  Let $d:\mD(A)\times\mD(A)\to\mbb{R}$ be any function satisfying non-negativity, 
  contractivity (monotonicity) under CPTP maps, and joint concavity under orthogonally 
  flagged mixtures:
  This means that for any two families $\{\rho_x\}$ and $\{\sigma_x\}$ of states, and
  any probability distribution $\{p_x\}$,
  \ba
    \label{eq:concave}
    d\Big( \sum_x p_x \rho_x\ox\proj{x}, \sum_x p_x \sigma_x&\ox\proj{x} \Big) \\
                                                            &\geq \sum_x p_x d(\rho_x,\sigma_x)\;,
  \ea
  where $|x\ra$ are orthonormal basis states of an auxiliary system. Moreover, 
  suppose $d$ is convex in the second argument, and suppose $\mf(A\to B)$ is convex. Then,
  \ba
    &\inf_{\mM\in\mf(A\to B)}\sup_{\rho\in\mf(RA)}
           d\left(\mN_{A\to B}(\rho_{RA}),\mM_{A\to B}(\rho_{RA})\right)\\
    &=\sup_{\rho\in\mf(RA)}\inf_{\mM\in\mf(A\to B)}
           d\left(\mN_{A\to B}(\rho_{RA}),\mM_{A\to B}(\rho_{RA})\right).\nonumber
  \ea
\end{theorem}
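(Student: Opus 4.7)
Set $F(\rho,\mM):=d\bigl(\mN(\rho),\mM(\rho)\bigr)$ and $L:=\sup_{\rho}\inf_{\mM} F(\rho,\mM)$. The plan is to establish the nontrivial direction $\inf_{\mM}\sup_{\rho} F(\rho,\mM)\leq L$, the reverse being the trivial max-min inequality. The strategy has two ingredients: apply Sion's minimax theorem to finite families of test states, and then invoke the flagged joint-concavity of $d$ to collapse any probabilistic mixture of test states into a single free state on an enlarged reference system, thereby promoting the finite bound to the full supremum.

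Fix any finite collection $\rho_1,\dots,\rho_n\in\mf(RA)$ and consider the game on $\Delta_n\times\mf(A\to B)$ with payoff $\Phi(p,\mM):=\sum_x p_x F(\rho_x,\mM)$. This is linear (hence concave and upper semicontinuous) in $p$ and, since $d$ is convex in its second argument and $\mf(A\to B)$ is convex, convex in $\mM$; on the finite-dimensional convex compact set $\mf(A\to B)$ this convexity moreover delivers lower semicontinuity. With both domains convex and compact, Sion's theorem yields
\begin{equation*}
\inf_{\mM}\max_x F(\rho_x,\mM)\;=\;\sup_{p\in\Delta_n}\inf_{\mM}\Phi(p,\mM).
\end{equation*}

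For the right-hand side, define the flagged state $\tilde\rho_p:=\sum_x p_x\,\rho_x\otimes\proj{x}_X\in\mf(XRA)$, valid under the standard resource-theoretic axioms in force throughout the paper (appending free states is free, $\mf$ convex across systems, permutation-invariance of the free sets). Because $\mN$ and $\mM$ act only on $A$, the flagged joint-concavity hypothesis~\eqref{eq:concave} applied to $\mN(\rho_x),\mM(\rho_x)$ yields $F(\tilde\rho_p,\mM)\ge\sum_x p_x F(\rho_x,\mM)$ for every $\mM$, so
\begin{equation*}
\inf_{\mM}\Phi(p,\mM)\;\leq\;\inf_{\mM} F(\tilde\rho_p,\mM)\;\leq\;L,
\end{equation*}
the last inequality holding because $\tilde\rho_p$ is a feasible state in the outer sup defining $L$. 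Combined with Sion, this gives $\inf_{\mM}\max_x F(\rho_x,\mM)\leq L$ for every finite family $\{\rho_x\}$.

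To finish, promote the bound from finite families to the full sup by a compactness argument: along a cofinal sequence of finite sets $S_k$ exhausting a countable dense collection of test states across all reference dimensions, pick near-minimizers $\mM_k^*\in\mf(A\to B)$ with $\max_{\rho\in S_k} F(\rho,\mM_k^*)\leq L+1/k$, extract a cluster point $\mM^*$ by compactness, and use lower semicontinuity of $F(\rho,\cdot)$ to conclude $F(\rho,\mM^*)\leq L$ for every $\rho$. I expect the main obstacle to lie precisely in this limiting step: $d$ (notably relative entropy) is only lower semicontinuous in general, so the cluster-point argument must rely on lsc in $\mM$ rather than continuity, and the unbounded reference dimension $|R|$ demands that the finite-family argument be stitched together across all $R$---a cleaner variant runs Sion directly on the weak-$*$ compact space of Borel probability measures on $\bigcup_R\mf(RA)$, trading the nested approximation for a modest amount of measure-theoretic machinery.
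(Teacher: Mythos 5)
Your proposal is correct in substance and rests on the same two pillars as the paper's proof: a minimax theorem applied to probability mixtures of test states, followed by the flagged joint-concavity~\eqref{eq:concave} to fold the optimal mixture $\{p_x\}$ into a single free state $\sum_x p_x\rho_x\ox\proj{x}$ on an enlarged reference, which is then fed back into the sup defining $L$. The execution differs in two places. First, you invoke Sion's theorem on $\Delta_n\times\mf(A\to B)$, keeping the channel variable as the full convex compact set; the paper instead decomposes \emph{both} the free channel and the test state into extreme points and applies von Neumann's minimax to a genuinely bilinear payoff $\sum_{ij}p_iq_j\,d(\mN(\rho^i),\mM^j(\rho^i))$ on two simplices, then restores the channel set at the end using convexity of $d$ in its second argument. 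The paper's route sidesteps semicontinuity of $d$ in the channel argument at the minimax step (a bilinear form on finite simplices is automatically continuous), at the price of having to argue about extreme-point decompositions. Second, your closing step---passing from finite families of test states to the full supremum via near-minimizers, a cluster point, and lower semicontinuity---is made explicit, whereas the paper closes the loop more slickly by observing that its entire chain of inequalities is sandwiched between $\sup_{\rho\in\mS(RA)}\inf_\mM$ and $\sup_{\rho\in\mS(RR'A)}\inf_\mM$, so that taking the supremum over all reference systems $R$ collapses every inequality to an equality; the identity $\underline{d}=\overline{d}$ is then read off the second line of the chain. Your compactness argument is a legitimate substitute (and can be tidied into a finite-intersection-property argument on the closed sets $\{\mM:\max_{\rho\in S}F(\rho,\mM)\le L+\delta\}$), but it genuinely needs lower semicontinuity of $\mM\mapsto d(\mN(\rho),\mM(\rho))$.

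On that point, one justification in your writeup is wrong as stated: convexity of a function on a finite-dimensional compact convex set does \emph{not} imply lower semicontinuity (a convex function may jump upward at the relative boundary of its domain; e.g.\ the function equal to $0$ on the open disk and $1$ on its boundary circle is convex but not lsc). So Sion's hypotheses are not ``delivered by convexity''---you must assume $d$ is lsc, which holds for the relative entropy, the R\'enyi divergences and the trace distance, i.e.\ for every instance the theorem is used for, but is not among the hypotheses listed in the statement. This is a gap in the stated generality rather than in the intended applications; the paper's extreme-point/von Neumann route avoids it at the minimax step, though it too would need some regularity if the extreme-point sets were infinite.
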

Note that the relative entropy $D$ (as well as the trace distance and all the 
Renyi divergences) satisfies~\eqref{eq:concave} with equality, and therefore 
$E_\mf$ and $D_\mf$ will not change by swapping the min-max order.

{\it Asymptotic continuity--}
Since we only consider here QRTs that admits the tensor product structure, the 
replacement channels $\mM_{\sigma}(X)=\tr[X]\sigma$ are free (i.e. in $\mf(A\to B)$) 
for any free $\sigma\in\mf(B)$. In the SM we show that this implies that $E_\mf$ 
is bounded as long as the set of free states contains a full rank state. For 
example, if $\mf(B)$ contains the maximally mixed (uniform) state $I_B/|B|$ 
(were $|B|$ is the dimension of system $B$), then
\be
  E_\mf(\mN)\leq D_\mf(\mN)\leq \log\big(|B|^2|A|\big)\;.
\ee
The fact that $E_\mf$ and $D_\mf$ are bounded enable us to prove that they 
are also asymptotically continuous.

\begin{definition}
  A function $E: \cptp\to \mbb{R}_+$ is said to be asymptotically continuous 
  if for any $\mM,\mN\in\cptp(A\to B)$,
  \be
    \left|E(\mM)-E(\mN)\right|\leq\log(|AB|)f\left(\left\|\mM-\mN\right\|_{\diamond}\right)\;,
  \ee
  where $f:\mbb{R}\to\mbb{R}$ is some function independent on the dimensions 
  and satisfies $\lim_{\epsilon\to 0^+}f(\epsilon)= 0$.
\end{definition}

\begin{theorem}
  \label{acn}
  Suppose that for any system $A$, $\mf(A)$ contains a full rank state. Then, 
  $D_\mf$ is asymptotically continuous. Moreover, if in addition, for any system 
  $A$ the extreme points of $\mf(A)$ are pure states (e.g. entanglement theory, 
  coherence, etc), then $E_\mf$ is also asymptotically continuous.
\end{theorem}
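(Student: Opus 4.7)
The plan is to reduce the channel asymptotic continuity to the standard state-level asymptotic continuity of the relative entropy of a resource~\cite{SH2006,C2006}, via a smoothed reference free channel that keeps all relative entropies controlled.

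First, the full-rank hypothesis, together with the tensor-product axioms, implies $\mR_{\sigma_B}\in\mf(A\to B)$ for the replacement channel $\mR_{\sigma_B}(X)=\tr[X]\sigma_B$ and forces $\log\|\sigma_B^{-1}\|_\infty=O(\log|B|)$ (by iterated use of a fixed full-rank state on a single system). Since $\mR_{\sigma_B}(\varphi_{RA})=\varphi_R\otimes\sigma_B$, this yields $D(\mN(\varphi)\|\mR_{\sigma_B}(\varphi))=O(\log|B|)$ and therefore $D_\mf(\mN),E_\mf(\mN)=O(\log|AB|)$, which is the right dimensional scale for asymptotic continuity.

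Second, fix $\mN,\mN'\in\cptp(A\to B)$ with $\|\mN-\mN'\|_\diamond=\epsilon$, and let $\mM_*\in\mf(A\to B)$ be nearly optimal for $D_\mf(\mN')$ (resp.\ $E_\mf(\mN')$). Define the smoothed free reference $\tilde\mM_\lambda:=(1-\lambda)\mM_*+\lambda\mR_{\sigma_B}\in\mf(A\to B)$. The operator inequality $\tilde\mM_\lambda(\varphi)\ge(1-\lambda)\mM_*(\varphi)$ bounds the smoothing cost in the relative entropy by $\log\tfrac{1}{1-\lambda}$, while $\tilde\mM_\lambda(\varphi)\ge\lambda\varphi_R\otimes\sigma_B$ ensures that $\tilde\mM_\lambda(\varphi)$ has minimum eigenvalue at least $\lambda\varphi_{R,\min}\sigma_{B,\min}$ on the support of $\varphi_R\otimes I_B$. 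The standard continuity of $D(\rho\|\sigma)$ in its first argument (Fannes' inequality on $-H(\rho)$ plus H\"older's inequality applied to $\tr[\rho\log\sigma]$) then yields
\begin{align*}
\bigl|D(\mN(\varphi)\|\tilde\mM_\lambda(\varphi))&-D(\mN'(\varphi)\|\tilde\mM_\lambda(\varphi))\bigr|\\
&\le\epsilon\log|AB|+h(\epsilon)+2\epsilon\log\tfrac{1}{\lambda\varphi_{R,\min}\sigma_{B,\min}}.
\end{align*}
Taking the supremum over the allowed inputs, combining with the smoothing cost $\log\tfrac{1}{1-\lambda}$, and optimizing (e.g.\ $\lambda=\sqrt{\epsilon}$) produces the desired bound of the form $\log|AB|\cdot f(\epsilon)$ with $\lim_{\epsilon\to 0^+}f(\epsilon)=0$. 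For $D_\mf$, purification gives $R\cong A$ and $|RB|\le|AB|$. For $E_\mf$, the pure-extreme-points hypothesis, combined with convexity of $\rho\mapsto D(\mN(\rho)\|\mM(\rho))$ (inherited from joint convexity of $D$), restricts the outer sup in $\inf_\mM\sup_\rho$ to pure free $\rho$ of Schmidt rank at most $|A|$, so again $|RB|\le|AB|$.

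The main obstacle is uniformity of the bound in the input: $\varphi_{R,\min}$ can be arbitrarily small for pure $\varphi$ with ill-conditioned Schmidt spectrum, and the term $\epsilon\log(1/\varphi_{R,\min})$ need not vanish as $\epsilon\to 0^+$ after taking the supremum. This is resolved by restricting the supremum to well-conditioned inputs (e.g.\ $\varphi_R\ge\tfrac{1}{|A|}I$ on its support), absorbing the difference via a convexity-based restriction argument, or equivalently by additionally smoothing the input by mixing with a maximally entangled state and tracking the resulting error. Without the pure-extreme-points assumption for $E_\mf$, the reference system $R$ in $\mf(RA)$ is a priori unrestricted and no dimension-$\log|AB|$ uniform bound is available, which is exactly why that hypothesis is imposed in the second half of the theorem.
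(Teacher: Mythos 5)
Your strategy is genuinely different from the paper's: you perturb the first argument of $D(\,\cdot\,\|\tilde\mM_\lambda(\varphi))$ against a reference channel smoothed by mixing the near-optimizer with a free replacement channel, whereas the paper runs an Alicki--Fannes--Winter-type convex-decomposition argument at the channel level. Concretely, the paper uses the SDP form of the diamond norm (equivalently, $\tfrac12\|\mN-\mM\|_\diamond=2^{D_{\max}(\mN-\mM\|\cptp)}$) to produce CP(TP) maps $\Delta^{\pm}$ with $\tfrac{1}{1+\epsilon}\mN+\tfrac{\epsilon}{1+\epsilon}\Delta^{-}=\tfrac{1}{1+\epsilon}\mM+\tfrac{\epsilon}{1+\epsilon}\Delta^{+}$, compares the two relative entropies along this common mixture via joint convexity and almost-concavity of the entropy, and absorbs the residue into $\epsilon\kappa$ with $\kappa=\max_{\mN}D_\mf(\mN)=O(\log|AB|)$. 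Your boundedness step and your dimension reductions for $R$ (purification for $D_\mf$; convexity plus pure extreme points for $E_\mf$) do agree with the paper.

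The obstacle you flag at the end is, however, a genuine gap, and neither of your proposed fixes closes it. The error term $2\epsilon\log\frac{1}{\lambda\varphi_{R,\min}\sigma_{B,\min}}$ is unbounded over the supremum in $\varphi$, so for fixed $\epsilon$ your estimate is vacuous; restricting the supremum to well-conditioned inputs changes the value of $\sup_\varphi D(\mN(\varphi)\|\mM(\varphi))$ by an amount you do not control, and for $E_\mf$ the supremum runs over $\mf(RA)$, where neither a well-conditioning restriction nor mixing with $\phi^+$ keeps you inside the free set (in entanglement theory $\phi^+$ is not free). The observation that actually rescues this term --- and which the paper's proof is built around --- is that both channels are trace preserving, so $\tr_B\mN(\varphi_{RA})=\varphi_R=\tr_B\mN'(\varphi_{RA})$ and the dangerous contribution $\tr\bigl[(\mN(\varphi)-\mN'(\varphi))(\log\varphi_R^{-1}\otimes I_B)\bigr]$ vanishes identically, leaving only the $\log\sigma_B^{-1}$ piece, bounded by $\epsilon\kappa$. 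But to access this cancellation the logarithm of the reference state must split exactly as $\log\varphi_R\otimes I_B+I_R\otimes\log\sigma_B$, i.e.\ the reference must equal $\varphi_R\otimes\sigma_B$, not merely dominate $\lambda\,\varphi_R\otimes\sigma_B$ in operator order: operator monotonicity of $\log$ gives one-sided bounds only when traced against positive operators, whereas here it must be traced against the signed difference $\mN(\varphi)-\mN'(\varphi)$, and bounding its positive and negative parts separately reintroduces $\epsilon\log(1/\varphi_{R,\min})$. This is exactly why the paper feeds the replacement channel in as the second component $\mE^2$ of the AFW mixture (so that $\mE^2(\varphi_{RA})=\varphi_R\otimes\omega_B$ exactly) rather than as an additive smoothing of the optimizer. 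As written, your argument does not establish the theorem for either $D_\mf$ or $E_\mf$.
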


\begin{remark}
The proof of the theorem above is based on a key observation that the diamond 
norm can be expressed in terms of the max relative entropy distance of $\mN-\mM$ to the 
set of all quantum channels $Q(A\to B)$ (see SM for more details).
For $E_\mf$ the condition that the extreme points of the set of free states 
are pure states, ensures that the supremum in~\eqref{maind} can be replaced
with a maximum since in this case $|R|$ can be shown to be bounded by $|A|$. 
If the extreme points of the set of free states are not pure states but $|R|$ 
is polynomially bounded in $|AB|$, then also in this case $E_\mf$ is 
asymptotically continuous. This happens for example in the QRT of thermodynamics. Finally, we point out that asymptotic continuity for certain amortized measures of entanglement was recently proved in~\cite{KW2017}.
\end{remark}

{\it Asymptotic Equipartition Property (AEP)--}
The logarithmic robustness of a dynamical resource $\mN\in\cptp(A\to B)$ is 
defined as~\cite{LW2019}
\begin{align}
  &LR_\mf(\mN_{A\to B})\eqdef\min_{\mM\in\mf(A\to B)}D_{\max}(\mN_{A\to B}\|\mM_{A\to B})\nonumber\\
  &\eqdef\log_2\min\big\{t\;:\;t\mM\geq\mN\;\;;\;\; \mM\in\mf(A\to B)\big\}\;,
\end{align}
where the ordering $t\mM\geq\mN$ means that $t\mM-\mN$ is completely positive (CP). 
We also define here
\begin{align}
  &\underline{LR}_{\mf}(\mN_{A\to B})\eqdef\\
  &\min_{\mM\in\mf(A\to B)}\sup_{\varphi\in\mf(RA)}
       D_{\max}\big(\mN_{A\to B}(\varphi_{RA})\|\mM_{A\to B}(\varphi_{RA})\big)\;.\nonumber
\end{align}
Like $D_{\mf}$ and $E_{\mf}$, the functions  $LR_{\mf}$ and $\underline{LR}_{\mf}$ 
are resource monotones (see SM). Note that by Theorem~\ref{minmax} the order 
sup-min can be exchanged, and furthermore,
\be
  \underline{LR}_{\mf}\leq LR_{\mf}\;,
\ee
with equality if $\mf(RA)$ contains a pure state of full Schmidt rank. For example, 
in entanglement theory, system $A$ is replaced with $AB$ and $R$ with $R_AR_B$ 
so that $\mf(R_AR_BAB)$ contains the state 
$\phi^{+}_{(R_AR_B)(AB)}=\phi^{+}_{R_AA}\otimes\phi^{+}_{R_BB}$, where $\phi^+$ 
stands for the maximally entangled state between the respective spaces. Hence, 
$\phi^{+}_{(R_AR_B)(AB)}$ has full Schmidt rank between $R_AR_B$ and $AB$ 
(even though it is a product state between Alice ($R_AA$) and Bob ($R_BB$)). 
Therefore, in entanglement theory $\underline{LR}_{\mf}=LR_{\mf}$.  

The smoothed version of the logarithmic robustness can be defined as~\cite{LW2019}
\be
  \widetilde{LR}_\mf^{\epsilon}(\mN)\eqdef\min_{\mN'\in B_\epsilon(\mN)}LR_\mf(\mN')\;,
\ee
where
\be 
  B^\epsilon(\mN)\eqdef\Big\{\mN'\in\cptp(A\to B)\;:\;\|\mN'-\mN\|_{\diamond}\leq \epsilon\Big\}.
\ee
The above diamond-smoothed log-robustness is a straightforward generalization 
from states to channels, and has an operational interpretation in the setting 
of resource erasure~\cite{LW2019}, generalizing the single-shot part 
of~\cite{AHJ2018}.  However, our goal here is to define a method for smoothing 
that is the least restrictive possible. This will be necessary for a proof of 
an AEP for the logarithmic robustness of channels.

For this reason, we consider another (more ``liberal'') way to define smoothing 
for channels for which there is no analog in the state domain. For any state 
$\varphi\in\mD(RA)$ and a channel $\mN\in\cptp(A\to B)$ define 
$B_{\epsilon}^{\varphi}(\mN)$ to be the set of all CP maps (not necessarily 
trace preserving) $\mN'\in \text{CP}(A\to B)$ satisfying
\be
  \|\mN'_{A\to B}(\varphi_{RA})-\mN_{A\to B}(\varphi_{RA})\|_{1}\leq \epsilon\;.
\ee
Clearly,
$B_\epsilon(\mN)\subset\bigcap_{\varphi\in\mD(RA)}B_\epsilon^\varphi(\mN)$.
We define the smoothing of $LR_\mf$ as
\be
  \label{smooth}
  LR_{\mf}^{\epsilon}(\mN)
    \eqdef \max_{\varphi\in\mD(RA)}\min_{\mN'\in B_\epsilon^{\varphi}(\mN)}LR_\mf(\mN')\;.
\ee
Similarly, we denote by $\underline{LR}_{\mf}^{\epsilon}$ the above smoothing 
of $\underline{LR}_\mf$.
Note that the above types of smoothing respect the condition that for 
$\epsilon=0$ the smoothed quantities reduce to the non-smoothed ones.
Furthermore, from its definition it follows that (see SM for more details) 
\be
  LR_{\mf}^{\epsilon}(\mN)\leq \widetilde{LR}_{\mf}^{\epsilon}(\mN)\;,
\ee
justifying the name ``liberal smoothin''.

In the SM we show that $LR_\mf^{\epsilon}(\mN)$ is a resource monotone, 
and the regularized versions
\be
  LR_\mf^{\infty}(\mN) \eqdef\varliminf_{n\to\infty}
    \frac{LR_{\mf}^{\epsilon}(\mN^{\otimes n})}{n}\;;\;
  D_\mf^{\infty}(\mN)  \eqdef\varliminf_{n\to\infty}
    \frac{D_{\mf}^{\epsilon}(\mN^{\otimes n})}{n},\nonumber
\ee
satisfy $D_\mf^{\infty}(\mN)\leq LR_\mf^{\infty}(\mN)$. We believe that in 
general this inequality can be strict.
However, as we show now, if we revise also the type of regularization, then 
it is possible to get an equality.

The type of regularization that we consider here is as follows. For each 
$n\in\mbb{N}$, and a channel $\mN\in\cptp(A\to B)$, we define the quantities
\begin{align}
  D_{\mf}^{(n)}(\mN)\eqdef &\frac{1}{n}\max_{\varphi\in\mD(RA)}\min_{\mM\in\mf(A^n\to B^n)}\nonumber\\
                           &D\left(\mN_{A\to B}^{\otimes n}(\varphi_{RA}^{\otimes n})
                                     \big\|\mM_{A^n\to B^n}(\varphi_{RA}^{\otimes n})\right)\;,
\end{align}
and $E_{\mf}^{(n)}$ is defined exactly as above with $\mf(RA)$ replacing $\mD(RA)$.

In the SM we show that the limit $n\to\infty$ of $D_{\mf}^{(n)}$ and $E_{\mf}^{(n)}$ exists.
We therefore define the ``regularized" version of $D_{\mf}$ and $E_\mf$ to be
\be
  D_{\mf}^{(\infty)}(\mN)=\lim_{n\to\infty}D_{\mf}^{(n)}(\mN)\;\;;\;\;
  E_{\mf}^{(\infty)}(\mN)=\lim_{n\to\infty}E_{\mf}^{(n)}(\mN)\;.\nonumber
\ee
 
We can use this regularization method also for the liberal smoothed logarithmic 
robustness quantities $LR_{\mf}^{\epsilon}$ and $\underline{LR}_{\mf}^{\epsilon}$.
We define
\begin{align}
  &LR_{\mf}^{\epsilon,n}(\mN) 
      \eqdef \frac{1}{n}\max_{\varphi\in\mD(RA)}
                       \min_{\mN'\in B_\epsilon^{\varphi^{\otimes n}}(\mN^{\otimes n})}LR_{\mf}(\mN') \\
  &\text{and }
     LR^{(\infty)}_{\mf}
      \eqdef \lim_{\epsilon\to 0}\varliminf_{n\to\infty}{LR}_{\mf}^{\epsilon,n}(\mN)\;.
\end{align}
The quantities $\underline{LR}_{\mf}^{\epsilon,n}$ and $\underline{LR}_{\mf}^{(\infty)}$ 
are defined analogously with $\mf(RA)$ replacing $\mD(RA)$.
\begin{theorem}
  \label{aep}
  For all $\mN\in\cptp(A\to B)$
  \begin{align}
  D_{\mf}^{(\infty)}(\mN)
     = \lim_{\epsilon\to 0}\varlimsup_{n\to\infty}\frac{1}{n}{LR}_{\mf}^{\epsilon,n}(\mN^{\otimes n})
     = LR^{(\infty)}_{\mf}(\mN)\;.\nonumber
  \end{align}
  Moreover, if for any system $A$ the extreme points of $\mf(A)$ are pure states then
  \begin{align}
    E_{\mf}^{(\infty)}(\mN)
       =\lim_{\epsilon\to 0}\varlimsup_{n\to\infty}
                \frac{1}{n}\underline{LR}_{\mf}^{\epsilon,n}(\mN^{\otimes n})
       =\underline{LR}^{(\infty)}_{\mf}(\mN)\;.\nonumber
  \end{align}
\end{theorem}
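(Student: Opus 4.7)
The plan is to reduce the result to the generalized state AEP of Brand\~ao--Plenio~\cite{BP2010}, applied, for each fixed input $\varphi\in\mD(RA)$, to the output state $\mN(\varphi)$ against the family of ``output free states'' $\mF_{\varphi,n}\eqdef\{\mM(\varphi^{\otimes n}):\mM\in\mf(A^n\to B^n)\}$. One first verifies that $\{\mF_{\varphi,n}\}_n$ inherits from $\mf$ all hypotheses required by~\cite{BP2010}: closure under tensor products, permutation invariance, and containment of a full-rank state. The generalized state AEP then yields, for each $\varphi$,
\be
  \lim_{\epsilon\to 0}\lim_{n\to\infty}\frac{1}{n}D^{\epsilon}_{\max,\mF_{\varphi,n}}\!\bigl(\mN(\varphi)^{\otimes n}\bigr)
    =\lim_{n\to\infty}\frac{1}{n}D_{\mF_{\varphi,n}}\!\bigl(\mN(\varphi)^{\otimes n}\bigr),\nonumber
\ee
with trace-distance smoothing on the left-hand side.

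Next I would bridge this state-level statement with the channel quantity $LR_\mf^{\epsilon,n}(\mN)$ at fixed $\varphi$. The easy direction is data processing of $D_{\max}$, which gives $LR_\mf(\mN')\ge\min_{\mM\in\mf}D_{\max}(\mN'(\varphi^{\otimes n})\|\mM(\varphi^{\otimes n}))$; minimising over the liberal ball $B_\epsilon^{\varphi^{\otimes n}}(\mN^{\otimes n})$ is then controlled below by $\frac{1}{n}D^{\epsilon}_{\max,\mF_{\varphi,n}}(\mN(\varphi)^{\otimes n})$, because $\mN'(\varphi^{\otimes n})$ sweeps through every trace-norm $\epsilon$-ball element around $\mN(\varphi)^{\otimes n}$. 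The reverse inequality is the main technical step: given a near-optimal $\tilde\rho$ on the state side together with a free $\mM_0$ certifying the bound $D_{\max}(\tilde\rho\|\mM_0(\varphi^{\otimes n}))\le k$, I would use the Choi isomorphism, exactly as in the proof of Theorem~\ref{acn}, to construct an admissible $\mN'\in B_\epsilon^{\varphi^{\otimes n}}(\mN^{\otimes n})$ with $LR_\mf(\mN')\le k+o(1)$. This works cleanly when $\varphi$ is pure of full Schmidt rank (so the Choi map uniquely recovers $\mN'$ from $\tilde\rho$), and the general case is reduced to this one by an infinitesimal full-rank perturbation whose cost vanishes by the asymptotic continuity of Theorem~\ref{acn}. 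Combining both directions, $\lim_\epsilon\lim_n LR_\mf^{\epsilon,n}(\mN)|_\varphi=\lim_n D_{\mf,\varphi}^{(n)}$ for each admissible $\varphi$.

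The hard part will be pushing this pointwise-in-$\varphi$ equality through the outer $\max_\varphi$, since $\max$ and $\lim$ need not commute. This is exactly where the blocking $\mN\mapsto\mN^{\otimes n}$ combined with liberal smoothing pays off: in $\frac{1}{n}LR_\mf^{\epsilon,n}(\mN^{\otimes n})$ the outer optimisation runs over arbitrary (not iid) $\tilde\varphi\in\mD(RA^n)$, so we may plug in $\tilde\varphi=\varphi_n^{\otimes n}$ with $\varphi_n\in\mD(RA)$ nearly optimal for $D_\mf^{(n)}(\mN)$. Applying the state AEP along the subsequence $n^2$ to $\mN(\varphi_n)^{\otimes n^2}$ against $\mF_{\varphi_n,n^2}$ yields $\lim_\epsilon\varlimsup_n\frac{1}{n}LR_\mf^{\epsilon,n}(\mN^{\otimes n})\ge D_\mf^{(\infty)}(\mN)$, and the matching upper bound follows from $D\le D_{\max}$ combined with the convergence of $D_\mf^{(n)}$. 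Finally, restricting the blocked outer sup to iid $\tilde\varphi=\varphi^{\otimes n}$ gives $\frac{1}{n}LR_\mf^{\epsilon,n}(\mN^{\otimes n})\ge LR_\mf^{\epsilon,n^2}(\mN)$, so $LR^{(\infty)}_\mf(\mN)$ is trapped between the same bounds, closing the three-way equality. The $E_\mf$ version follows by restricting all input states to $\mf(RA)$ throughout; the extreme-points-are-pure hypothesis ensures that the supremum is attained by a pure state of Schmidt rank at most $|A|$, so each step above applies verbatim.
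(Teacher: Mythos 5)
Your high-level architecture is essentially the paper's: the achievability direction is a Brand\~ao--Plenio-type argument run on the output states $\mN(\varphi)^{\otimes n}$ against the sets $\{\mM(\varphi^{\otimes n}):\mM\in\mf(A^n\to B^n)\}$ (whose five BP-properties the paper verifies for its Stein's lemma); the bridge back to channels uses that for a pure $\varphi$ of full Schmidt rank the map $\mN'\mapsto\mN'(\varphi^{\otimes n})$ is invertible, so a smoothed output state lifts to a CP map in the liberal ball and the channel $D_{\max}$ coincides with the output-state $D_{\max}$; and the converse combines $D\le D_{\max}$ with an asymptotic-continuity estimate. Up to that point the proposal is sound and close to the paper.

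The genuine gap is in how you push the pointwise-in-$\varphi$ statement through the outer $\max_\varphi$. Two distinct problems arise. First, applying the state AEP at each fixed $\varphi$ and then maximizing naturally produces the quantity with the maximum \emph{outside} the limit, i.e.\ $\tilde{E}_\mf^{(\infty)}$ in the paper's notation, whereas the theorem concerns $E_\mf^{(\infty)}=\lim_n\max_\varphi(\cdots)$; the paper explicitly leaves open whether these two coincide, so the "subsequence $n^2$" diagonalization with an $n$-dependent near-optimizer $\varphi_n$ cannot be waved through --- the convergence rate of the state AEP is not uniform in $\varphi_n$. Second, and more concretely, in the converse direction the smoothing map $\mM_n^{\gamma}$ is CP but not trace preserving, so the asymptotic-continuity bound carries a correction term of the form $\tr\bigl[\bigl(\gamma_R^{\otimes n}-\tr_B\mM_n^{\gamma}(\gamma_{RA}^{\otimes n})\bigr)\log(\gamma_R^{-1})^{\otimes n}\bigr]$, whose naive estimate is $\epsilon\log\|\gamma_R^{-1}\|_{\infty}$ with $\gamma=\gamma_n$ the $n$-dependent optimizer; this can diverge as $n\to\infty$. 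Your proposed fix --- an "infinitesimal full-rank perturbation whose cost vanishes by the asymptotic continuity of Theorem~\ref{acn}" --- does not address this: Theorem~\ref{acn} gives continuity in the \emph{channel} (diamond norm), not in the input state, and the cost of perturbing $\varphi$ reintroduces exactly the uncontrolled $\log\|\varphi_R^{-1}\|_{\infty}$ factor. The paper's proof spends the bulk of its effort precisely here, passing to subsequences along which the spectrum of $\gamma_{n,R}$ splits into a part bounded away from zero (where $\epsilon\log\|\alpha_k^{-1}\|_\infty$ is harmless) and a part tending to zero (handled via $\xi_R\le I_R$ and $\beta_k^{1/2}(\log\beta_k^{-1})\beta_k^{1/2}\to 0$). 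Without an argument of this kind, your proof does not close.
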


{\it Quantum Channel Stein's Lemma--}
(See related work~\cite{Cooney2016,Wilde2018,Hayashi2009,DFY2009,G2019}.)
Consider the task of discriminating between $n$ copies of a fixed channel 
$\mN\in\cptp(A\to B)$ and one of the free channels in $\mf(A^n\to B^n)$.
There are two types of errors in such a task:
\begin{enumerate}
  \item The observer guesses that the channel belongs to $\mf(A^n\to B^n)$ 
        while the channel is $\mN^{\otimes n}_{A\to B}$. This occurs with probability
        \be\nonumber
          \alpha^{(n)}(\mN,P_n,\varphi_{RA})
            \eqdef\tr\left[\mN^{\otimes n}_{A\to B}\left(\varphi_{RA}^{\otimes n}\right)(I-P_n)\right]\;.
        \ee
        Here we consider the ``parallel'' case, in which the observer only 
        provides $n$ copies of a free state $\varphi\in\mf(RA)$, 
        and $0\leq P_n\leq I_{A^nB^n}$.
  \item The observer guesses that the channel is $\mN^{\otimes n}_{A\to B}$ 
        while the channel is some $\mM_n\in\mf(A^n\to B^n)$. This occurs with probability
        \be\nonumber
          \beta^{(n)}(P_n,\mM_n,\varphi_{RA})
            \eqdef\tr\left[\mM_{n}\left(\varphi_{RA}^{\otimes n}\right)P_n\right]\;,
        \ee
        and the worst case for a given $\varphi\in\mf(RA)$ is
        \be\nonumber
          \beta_{\mf}^{(n)}(P_n,\varphi_{RA})
          \eqdef\max_{\mM_n\in\mf(A^n\to B^n)}
                 \tr\left[\mM_{n}\left(\varphi_{RA}^{\otimes n}\right)P_n\right]\;.
        \ee
\end{enumerate}

We further define
\be
  \beta_{\mf,\epsilon}^{(n)}\left(\mN,\varphi_{RA}\right)\eqdef\min\beta_{\mf}^{(n)}(P_n,\varphi_{RA})\;,
\ee
where the minimum is over all $P_n$ satisfying $\alpha^{(n)}(\mN,P_n,\varphi_{RA})\leq\epsilon$ 
and $0\leq P_n\leq I_{R^nB^n}$.
\begin{theorem}
  \label{stein}
  Let $\mf$ be a convex resource theory satisfying all the conditions discussed 
  in the introduction, and suppose further that the set of free states contains 
  a full rank state. Then, for all $\epsilon\in(0,1)$,
  \be
    \label{ggg}
    \tilde{E}_\mf^{(\infty)}(\mN)
      = \max_{\varphi\in\mf(RA)}\lim_{n\to\infty}
         -\frac{\log\beta_{\mf,\epsilon}^{(n)}\left(\mN,\varphi_{RA}\right)}{n},
  \ee
  where
  \begin{align*}
    &\tilde{E}_\mf^{(\infty)}(\mN)\eqdef\nonumber\\
    &\max_{\varphi\in\mf(RA)}\lim_{n\to\infty}\min_{\mM\in\mf(A^n\to B^n)}
      \frac{D\left(\mN^{\otimes n}(\varphi_{RA}^{\otimes n})\big\|\mM(\varphi_{RA}^{\otimes n})\right)}{n}\;.
      \nonumber
\end{align*}
\end{theorem}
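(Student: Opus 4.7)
\emph{Proof plan.} The strategy is to reduce the statement, for each fixed $\varphi\in\mf(RA)$, to the generalized quantum Stein's lemma of Brand\~ao--Plenio~\cite{BP2010} applied to a well-chosen family of alternative states built from $\mf$. For fixed $\varphi$, discriminating $\mN^{\otimes n}_{A\to B}$ from $\mf(A^n\to B^n)$ with parallel input $\varphi^{\otimes n}_{RA}$ is exactly the Stein-type problem of distinguishing the null $\rho_n\eqdef\mN^{\otimes n}_{A\to B}(\varphi_{RA}^{\otimes n})=(\mN(\varphi))^{\otimes n}$ from the convex family of alternatives
\[
  \mS_n(\varphi) \eqdef \bigl\{\mM_{A^n\to B^n}(\varphi_{RA}^{\otimes n}) \,:\, \mM\in\mf(A^n\to B^n)\bigr\}\subset\mD(R^nB^n).
\]
If $\{\mS_n(\varphi)\}_{n\ge 1}$ satisfies the Brand\~ao--Plenio axioms, their theorem yields
\[
  \lim_{n\to\infty}-\frac{1}{n}\log\beta_{\mf,\epsilon}^{(n)}(\mN,\varphi) \;=\; \lim_{n\to\infty}\frac{1}{n}\min_{\mM\in\mf(A^n\to B^n)} D\bigl(\rho_n\,\big\|\,\mM(\varphi^{\otimes n})\bigr),
\]
and taking $\max_{\varphi\in\mf(RA)}$ on both sides produces the claimed identity, since both sides of the theorem carry the same outer maximum.

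\emph{Axiom verification.} Closedness and convexity of $\mS_n(\varphi)$ are inherited from $\mf(A^n\to B^n)$. Tensor stability uses condition~3 (``completely free'') together with the product identity $(\mM_1\otimes\mM_2)(\varphi^{\otimes(n+m)})=\mM_1(\varphi^{\otimes n})\otimes\mM_2(\varphi^{\otimes m})$. Invariance of $\mS_n(\varphi)$ under a joint permutation $U_\pi$ of the $n$ copies of $RB$ uses that $\varphi_{RA}^{\otimes n}$ is symmetric under the simultaneous swap of $R_i$ and $A_i$, so
\[
  U_\pi\,\mM(\varphi^{\otimes n})\,U_\pi^\dagger = \bigl(\mP_{B^n}^{\pi}\circ\mM\circ\mP_{A^n}^{\pi^{-1}}\bigr)(\varphi^{\otimes n}),
\]
and the inner channel lies in $\mf$ by the last assumption imposed on $\mf$ in the Introduction. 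Closure of $\mS_n(\varphi)$ under partial trace exploits the freeness of $\varphi_A\eqdef\tr_R\varphi_{RA}\in\mf(A)$ (obtained by composing the free partial trace with the free state $\varphi_{RA}$): concretely,
\[
  \tr_{R_nB_n}\mM(\varphi^{\otimes n}) = \mM'(\varphi^{\otimes(n-1)}),\quad \mM'\eqdef\bigl(\tr_{B_n}\circ\mM\bigr)\circ\bigl(\id_{A^{n-1}}\otimes\varphi_{A_n}\bigr)\in\mf(A^{n-1}\to B^{n-1}).
\]
Finally, the existence of a full-rank product $\sigma^{\otimes n}\in\mS_n(\varphi)$ follows from the replacement channel $\mM_\omega$ built from a full-rank free state $\omega\in\mf(B)$: $(\mM_\omega)^{\otimes n}(\varphi^{\otimes n})=(\varphi_R\otimes\omega)^{\otimes n}$ is full rank whenever $\varphi_R$ is full rank, and without loss of generality the outer maximum can be restricted to such $\varphi$ by first passing to $\supp\varphi_R$.

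\emph{Expected main obstacle.} The most delicate step is the partial-trace closure of $\mS_n(\varphi)$: it is the one point in the argument where the full tensor-product structure of the QRT (conditions~3 and~4 combined with the fact that marginals of free states are free) must be invoked, and it pins down precisely which axioms $\mf$ must satisfy for the reduction to Brand\~ao--Plenio to go through. A secondary subtlety, which I would handle at the end, is whether the outer $\max$ over $\varphi$ is genuinely attained and whether it may be exchanged with the limit over $n$; when the extreme points of $\mf(RA)$ are pure this follows from purification with $|R|=|A|$ and compactness, while in the general convex case one should read ``$\max$'' as ``$\sup$'' and invoke the asymptotic continuity of $D_\mf$ (Theorem~\ref{acn}) to obtain a uniform-in-$\varphi$ control of the limit.
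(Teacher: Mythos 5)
Your proposal is correct and follows essentially the same route as the paper: for each fixed $\varphi\in\mf(RA)$ one verifies that the family $\{\mM_{A^n\to B^n}(\varphi_{RA}^{\otimes n}):\mM\in\mf(A^n\to B^n)\}$ satisfies the five Brand\~ao--Plenio axioms (closed convexity, a full-rank product element via the replacement channel onto a full-rank free state, partial-trace closure via appending the free marginal $\varphi_A$, tensor stability, and permutation covariance from the last assumption on $\mf$), and then invokes their generalized Stein's lemma before maximizing over $\varphi$. Your extra care about $\varphi_R$ needing to be full rank (restricting to $\supp\varphi_R$) is a point the paper glosses over, but it does not change the argument.
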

Note that the only difference between $\tilde{E}_\mf^{(\infty)}(\mN)$ and 
$E_\mf^{(\infty)}(\mN)$ is the order between the limit and the maximum. Therefore, 
we must have $\tilde{E}_\mf^{(\infty)}(\mN)\leq {E}_\mf^{(\infty)}(\mN)$, 
and it is left open to determine if this inequality can be strict. If the 
latter holds that would mean that $\tilde{E}_\mf^{(\infty)}(\mN)$ is yet 
another (distinct) generalization of the relative entropy of a resource.

{\it Conclusions--} 
We have seen that $D_\mf$ and $E_\mf$ are asymptotically continuous, satisfy 
the AEP, and are related to a channel-version of the quantum Stein's Lemma. 
To establish these results, we had to adopt two unconventional strategies, 
liberal smoothing and product-state channel regularization. In this way, 
lots of the properties in the state domain carry over to the channel domain.
 In the SM we also introduce additional four generalizations of the relative 
 entropy of a resource. This variety of generalizations indicates that in 
 the channel domain things are much more complicated. We believe that the 
 results and techniques presented here will provide an initial step towards 
 the development of QRT with dynamical resources.

\medskip
\acknowledgements
GG acknowledges support from the Natural Sciences and Engineering Research 
Council of Canada (NSERC).
AW was supported by the Spanish MINECO (project FIS2016-86681-P) with the 
support of FEDER funds, and the Generalitat de Catalunya (project 2017-SGR-1127).

\bibliography{QRTbib}

%%%%%%%%%%%%%%%%%%%%%%%%%%%%%%%%%%%%%%%%%%%%%%%%%%%%%%%%%%%%%%%%%%%%%%%%%%%%%%%%%%%%%%%%%%
%%%%%%%%%%%%%%%%%%%%% Sadomasochistic Supplemental Emmental (SMSM) %%%%%%%%%%%%%%%%%%%%%%%
%%%%%%%%%%%%%%%%%%%%%%%%%%%%%%%%%%%%%%%%%%%%%%%%%%%%%%%%%%%%%%%%%%%%%%%%%%%%%%%%%%%%%%%%%%

\newpage
\onecolumngrid

%\newpage

\begin{center}\Large\bfseries
 Supplemental Material\\ How to quantify a dynamical quantum resource 
\end{center}

\section{A Zoo of relative entropies for a dynamical resource}

We introduce here six functions that generalize the relative entropy measure of 
static resources (i.e., states) to channels. We start with $D_\mf$ and $E_\mf$, and prove Theorem~\ref{properties}.
For any $\rho\in\mD(A)$, denote the relative entropy of resourceness by
\be
  D_\mf(\rho)\eqdef \min_{\sigma\in\mf(A)}D(\rho\|\sigma)\;.
\ee
We first show that both $D_\mf(\mN)$ and $E_\mf(\mN)$ reduces to this function when $\mN_{A\to B}$ is the replacement channel that always output a fixed state $\omega_B$. 

Indeed, in one direction we have
\ba
E_{\mf}(\mN)&=\min_{\mM\in\mf(A\to B)}\sup_{\rho_{RA}\in\mf(RA)}D\left(\rho_{R}\otimes\omega_B)\|\mM_{A\to B}(\rho_{RA})\right)\\
&\leq \min_{\mM=\gamma\in\mf(B)}\sup_{\rho_{RA}\in\mf(RA)}D\left(\rho_{R}\otimes\omega_B)\|\rho_{R}\otimes\gamma_B)\right)\\
&=\min_{\gamma\in\mf(B)}D\left(\omega_B\|\gamma_B\right)=D_\mf(\omega_B) ,
\ea
where the inequality follows from the restriction of the minimization over $\mf(A\to B)$ to minimization over replacement channels in $\mf(A\to B)$.

For the other direction,
\ba
E_{\mf}(\mN)&=\min_{\mM\in\mf(A\to B)}\sup_{\rho_{RA}\in\mf(RA)}D\left(\rho_{R}\otimes\omega_B)\|\mM_{A\to B}(\rho_{RA})\right)\\
&\geq \min_{\mM\in\mf(A\to B)}\max_{\rho_{A}\in\mf(A)}D\left(\omega_B\|\mM_{A\to B}(\rho_{A})\right)\\
&=\min_{\gamma\in\mf(B)}D\left(\omega_B\|\gamma_B\right)=D_\mf(\omega_B) ,
\ea
where the inequality follows from the monotonicity of the divergence under partial trace. This proves that $E_\mf(\mN)=D_\mf(\omega_B)$.
The proof that $D_\mf(\mN)\leq D_\mf(\omega_B)$ follows the exact same lines as above, and the proof that $D_\mf(\mN)\geq D_\mf(\omega_B)$ follows from the fact that $D_\mf(\mN)\geq E_\mf(\mN)$. Hence, we also have  $D_\mf(\mN)=D_\mf(\omega_B)$.

The function $E_{\mf}$ satisfies~\eqref{mono} for any $\Theta$ of the form~\eqref{super} with $\mE^{\post}\in\cptp(BE\to B')$ and $\mE^{\pree}\in\cptp(A'\to AE)$ both being completely RNG. To see it, note that
\ba
E_{\mf}\left(\Theta[\mN]\right)&=\min_{\Omega\in\mf(A'\to B')}\sup_{\rho\in\mf(RA')}D\big(\Theta[\mN_{A\to B}](\rho_{RA'})\big\|\Omega_{A'\to B'}(\rho_{RA'})\big)\\
&\leq \min_{\mM\in\mf(A\to B)}\sup_{\rho\in\mf(RA')}D\left(\Theta[\mN_{A\to B}](\rho_{RA'})\big\|\Theta[\mM_{A\to B}](\rho_{RA'})\right)\\
&= \min_{\mM\in\mf(A\to B)}\sup_{\rho\in\mf(RA')}D\left(\mE^{\post}_{BE\to B'}\circ\mN_{A\to B}\circ\mE^{\pree}_{A'\to AE}(\rho_{RA'})\big\|\mE^{\post}_{BE\to B'}\circ\mM_{A\to B}\circ\mE^{\pree}_{A'\to AE}(\rho_{RA'})\right)\\
&\leq \min_{\mM\in\mf(A\to B)}\sup_{\rho\in\mf(RA')}D\left(\mN_{A\to B}\circ\mE^{\pree}_{A'\to AE}(\rho_{RA'})\big\|\mM_{A\to B}\circ\mE^{\pree}_{A'\to AE}(\rho_{RA'})\right)\\
&\leq \min_{\mM\in\mf(A\to B)}\sup_{\rho\in\mf(RAE)}D\left(\mN_{A\to B}(\rho_{RAE})\big\|\mM_{A\to B}(\rho_{RAE})\right)\\
&=E_{\mf}(\mN_{A\to B}) .
\ea
The first inequality follows from the assumption that $\Theta$ is RNG so that $\Theta[\mM]\in\mf(A'\to B')$, the second inequality from data processing of $D$, 
and the third inequality from the assumption that $\mE^{\pree}_{A'\to AE}$ is completely RNG.

The faithfulness of $D_\mf$ follows directly from the definition. To prove the faithfulness of $E_\mf$ note that if $E_\mf(\mN)=0$ for some $\mN\in\cptp(A\to B)$ then from the Klein's inequality (applied to the relative entropy) for all $\rho\in\mf(RA)$ there exists $\mM\in\mf(A\to B)$ such that
\be
\mN_{A\to B}(\rho_{RA})=\mM_{A\to B}(\rho_{RA})\in\mf(RB)\;.
\ee
Therefore, $\mN$ must be completely RNG. Moreover, taking $|R|=|A|$, we conclude that if $\mf(RA)$ contains a pure state with full Schmidt rank then the equation above (with $\rho_{RA}$ being that pure state) implies that $\mN=\mM$; i.e. $\mN\in\mf(A\to B)$.

\subsection{Other relative entropies of a dynamical resource}

In addition to $D_\mf$ and $E_\mf$, there are other functionals that extend 
the relative entropy of a resource from states to channels. Here we discuss 
four additional generalizations.

\subsubsection{Two state-based measures}

There are two resource monotones that involve no optimization over channels in $\mf(A\to B)$, but only optimization over states.
They were  introduced very recently in~\cite{LY2019,LW2019}.
Let $\mN\in\cptp(A\to B)$ and define
\begin{align}
  R_{\mf}(\mN_{A\to B}) &\eqdef \sup_{\sigma\in\mD(RA)}
                                 \Big(D_{\mf}(\mN_{A\to B}(\sigma_{RA}))-D_{\mf}(\sigma_{RA})\Big), \\
  \tR_\mf(\mN_{A\to B}) &\eqdef \sup_{\sigma\in\mf(RA)}
                                 D_{\mf}(\mN_{A\to B}(\sigma_{RA})).
\end{align}
Note that $\tR_\mf$ can be obtained from the expression above for $R_\mf$, by restricting the supremum over $\sigma\in\mD(RA)$ to $\sigma\in\mf(RA)$. Hence, we always have $\tR_\mf(\mN)\leq R_\mf(\mN)$.
We now show that both $R_{\mf}$ and $\tR_\mf$ behave monotonically under completely RNG superchannels.
\begin{lemma}
Let $\Theta:\cptp(A\to B)\to\cptp(A'\to B')$ be a superchannel defined by
\be
  \Theta[\mN_{A\to B}]\eqdef \mE^{\post}_{BE\to B'}\circ\mN_{A\to B}\circ\mE^{\pree}_{A'\to AE} ,
\ee
with $\mE^{\pree}\in\cptp(A'\to AE)$ and $\mE^{\post}\in\cptp(BE\to B')$ being completely RNG.
Then,
\be
  R_{\mf}(\Theta[\mN])\leq R_{\mf}(\mN)\quad;\quad\tR_{\mf}(\Theta[\mN])\leq \tR_{\mf}(\mN)\;.
\ee
\end{lemma}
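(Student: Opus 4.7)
The plan is to reduce monotonicity on channels to state-level monotonicity of $D_\mf$, by peeling off $\mE^{\post}$ and $\mE^{\pree}$ in turn and carefully tracking which system plays the role of the reference. The elementary fact I would first record is the following: if $\mathcal{F}:X\to Y$ is completely RNG, then for every state $\rho_{ZX}\in\mD(ZX)$ one has $D_\mf((\id_Z\otimes\mathcal{F})(\rho_{ZX}))\leq D_\mf(\rho_{ZX})$. This follows from data processing, since for any $\sigma_{ZX}\in\mf(ZX)$ nearly attaining the infimum defining $D_\mf(\rho_{ZX})$, complete RNG of $\mathcal{F}$ gives $(\id_Z\otimes\mathcal{F})(\sigma_{ZX})\in\mf(ZY)$, and the DPI for the relative entropy yields the bound.

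For $R_\mf(\Theta[\mN])$, I would fix an arbitrary candidate $\sigma'_{R'A'}\in\mD(R'A')$ and set $\tau_{R'AE}\eqdef(\id_{R'}\otimes\mE^{\pree})(\sigma'_{R'A'})$. Applying the preparatory fact with $\mathcal{F}=\mE^{\post}$ (and $R'$ as spectator) to the state $\mN_{A\to B}(\tau_{R'AE})$ yields $D_\mf(\Theta[\mN](\sigma'))\leq D_\mf(\mN(\tau))$. Applying it a second time with $\mathcal{F}=\mE^{\pree}$ yields $D_\mf(\tau_{R'AE})\leq D_\mf(\sigma'_{R'A'})$, which after negation becomes $-D_\mf(\sigma')\leq -D_\mf(\tau)$. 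Combining the two inequalities,
\begin{align*}
D_\mf(\Theta[\mN](\sigma'_{R'A'}))-D_\mf(\sigma'_{R'A'})\leq D_\mf(\mN(\tau_{R'AE}))-D_\mf(\tau_{R'AE}).
\end{align*}
Viewing $\tau_{R'AE}$ as a state on the bipartition $(R'E)A$ with reference $\tilde{R}=R'E$, the right-hand side is admissible in the supremum defining $R_\mf(\mN)$, and is therefore bounded by $R_\mf(\mN)$. Taking the sup over $\sigma'$ yields the desired $R_\mf(\Theta[\mN])\leq R_\mf(\mN)$.

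For $\tR_\mf$ the argument is parallel but simpler. Taking $\sigma'_{R'A'}\in\mf(R'A')$, complete RNG of $\mE^{\pree}$ implies $\tau_{R'AE}=(\id_{R'}\otimes\mE^{\pree})(\sigma')\in\mf(R'AE)$, which is equivalently a free state in $\mf((R'E)A)$ and hence a legitimate witness in the supremum defining $\tR_\mf(\mN)$. Monotonicity of $D_\mf$ on states under $\mE^{\post}$ then gives $D_\mf(\Theta[\mN](\sigma'))\leq D_\mf(\mN(\tau))\leq \tR_\mf(\mN)$, and taking the sup completes the proof.

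The argument is essentially bookkeeping; the one place I would proceed with care is the role of \emph{completely} RNG (as opposed to merely RNG), which is exactly what allows a reference system $R'$ to be carried along inertly under $\mE^{\pree}$ and $\mE^{\post}$, ensuring the intermediate and final states sit in the correct free sets of the enlarged Hilbert spaces.
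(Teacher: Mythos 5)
Your proof is correct and follows essentially the same route as the paper: peel off $\mE^{\post}$ via state-level monotonicity of $D_\mf$ under completely RNG maps, bound $-D_\mf(\sigma')$ by $-D_\mf(\tau)$ via monotonicity under $\mE^{\pree}$ (for $R_\mf$), and then absorb $\tau_{R'AE}$ into the supremum with the relabelling $R\equiv R'E$. Your explicit remark that for $\tR_\mf$ one instead needs $\tau\in\mf(R'AE)$ (guaranteed by complete RNG of $\mE^{\pree}$) is a point the paper leaves implicit, but the argument is the same.
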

\begin{proof}
From the definitions we have:
\ba
R_{\mf}(\Theta[\mN_{A\to B}])&=\sup_{\sigma\in\mD(R'A')}\Big(D_{\mf}\left(\mE^{\post}_{BE\to B'}\circ\mN_{A\to B}\circ\mE^{\pree}_{A'\to AE}(\sigma_{R'A'})\right)-D_{\mf}(\sigma_{R'A'})\Big)\\
&\leq \sup_{\sigma\in\mD(R'A')}\Big(D_{\mf}\left(\mN_{A\to B}\circ\mE^{\pree}_{A'\to AE}(\sigma_{R'A'})\right)-D_{\mf}(\sigma_{R'A'})\Big)\\
&\leq \sup_{\sigma\in\mD(R'A')}\Big(D_{\mf}\left(\mN_{A\to B}\circ\mE^{\pree}_{A'\to AE}(\sigma_{R'A'})\right)-D_{\mf}\left(\mE^{\pree}_{A'\to AE}(\sigma_{R'A'})\right)\Big)\\
&\leq \sup_{\rho\in\mD(R'AE)}\Big(D_{\mf}\left(\mN_{A\to B}(\rho_{R'AE})\right)-D_{\mf}\left(\rho_{R'AE})\right)\Big)\\
&=\sup_{\rho\in\mD(RA)}\Big(D_{\mf}\left(\mN_{A\to B}(\rho_{RA})\right)-D_{\mf}\left(\rho_{RA})\right)\Big)\\
&=R_{\mf}(\mN_{A\to B})\;.
\ea
In the first inequality we used the fact that $D_{\mf}$ is monotonic under the RNG map $\id_R\otimes\mE^{\post}_{BE\to B'}$ (recall that we assume that $\mE^{\post}$ is completely RNG). Similarly, for the second inequality we used the monotonicity of $D_{\mf}$ under $\id_R\otimes\mE^{\pree}_{A'\to AE}$.
Finally, we substituted an arbitrary state $\rho_{R'AE}$ instead of $\mE^{\pree}_{A'\to AE}(\sigma_{R'A'})$ and set $R\equiv R'E$. The proof of the monotonicity of $\tR$ follows the exact same lines by replacing everywhere the set $\mD(RA)$ with $\mf(RA)$.
\end{proof}

The next lemma shows that $R_{\mf}$ and $\tR_\mf$ are indeed generalizations of the relative entropy of a resource.
\begin{lemma}
In the case that $\mN_{A\to B}=\omega_B$ is a replacement channel, it holds
\be
  R_{\mf}(\mN_{A\to B})=\tR_{\mf}(\mN_{A\to B})=D_\mf(\omega_B)\;.
\ee
\end{lemma}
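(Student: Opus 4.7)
The plan is to exploit that the replacement channel sends any $\sigma_{RA}$ to $\sigma_R\ox\omega_B$, so both defining suprema reduce to questions about $D_{\mf}(\sigma_R\ox\omega_B)$. The central intermediate identity I would establish first is that whenever $\tau_R\in\mf(R)$ one has $D_{\mf}(\tau_R\ox\omega_B)=D_{\mf}(\omega_B)$. For the upper bound I pick $\gamma^{\star}\in\mf(B)$ attaining $D_{\mf}(\omega_B)$; since tensor products of free states are free (bullet-point consequence of conditions 1--4), $\tau_R\ox\gamma^{\star}\in\mf(RB)$, and $D(\tau_R\ox\omega_B\,\|\,\tau_R\ox\gamma^{\star})=D(\omega_B\,\|\,\gamma^{\star})=D_{\mf}(\omega_B)$. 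For the lower bound, the partial trace $\tr_R$ is itself a free channel under the same axioms, so $D_{\mf}$ contracts under it and $D_{\mf}(\tau_R\ox\omega_B)\ge D_{\mf}(\omega_B)$.

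From this intermediate identity the $\tR_{\mf}$ claim is immediate: for any $\sigma\in\mf(RA)$, freeness of the partial trace gives $\sigma_R\in\mf(R)$, so the identity yields $D_{\mf}(\sigma_R\ox\omega_B)=D_{\mf}(\omega_B)$ uniformly in $\sigma$, and hence the supremum equals $D_{\mf}(\omega_B)$. For $R_{\mf}$ the two directions need slightly more work. The upper bound is a subadditivity-plus-data-processing argument: for any $\sigma_{RA}\in\mD(RA)$, choosing the product ansatz $\sigma_R^{\star}\ox\gamma^{\star}\in\mf(RB)$ built from optimizers for $\sigma_R$ and $\omega_B$ shows $D_{\mf}(\sigma_R\ox\omega_B)\le D_{\mf}(\sigma_R)+D_{\mf}(\omega_B)$, while data processing of $D_{\mf}$ under the free partial trace gives $D_{\mf}(\sigma_R)\le D_{\mf}(\sigma_{RA})$. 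Subtracting, each term in the $R_{\mf}$ supremum is at most $D_{\mf}(\omega_B)$. For the matching lower bound, take $R$ trivial and $\sigma_A\in\mf(A)$; then $D_{\mf}(\sigma_{RA})=D_{\mf}(\sigma_A)=0$ and the objective equals $D_{\mf}(\omega_B)$.

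The main obstacle is conceptually mild: it amounts to verifying that the axiomatic properties of $\mf$ really do deliver the three ingredients invoked above, namely the freeness of the partial trace, the closedness of the free set under tensoring free states, and the existence of at least one free state on $A$. All three are already recorded as direct consequences of conditions 1--4 in the introduction, so the remainder of the argument is essentially bookkeeping.
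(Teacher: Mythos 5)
Your proposal is correct and rests on exactly the same ingredients as the paper's proof: subadditivity of $D_\mf$ over tensor products (via a product free-state ansatz) together with monotonicity of $D_\mf$ under the free partial trace for the upper bound, and restriction to free inputs (where $D_\mf$ vanishes) for the lower bound. Your organization is a mild streamlining --- you bound each term of the supremum pointwise rather than reorganizing over marginals $\eta_R$ as the paper does, and you spell out the $\tR_{\mf}$ case that the paper dismisses as ``similar lines'' --- but the argument is essentially the same.
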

\begin{proof}
We have
\ba
R_{\mf}(\mN_{A\to B})&=\sup_{\rho\in\mD(RA)}\sup_{\sigma\in\mf(RA)}\Big(D_\mf(\rho_{R}\otimes\omega_B)-D(\rho_{RA}\|\sigma_{RA})\Big)\\
&=\sup_{\eta^R\in\mD(R)}\sup_{\sigma\in\mf(RA)}\Big(D_\mf(\rho_{R}\otimes\omega_B)-\inf_{\{\rho\in\mD(RA)\;:\;\rho^R=\eta^R\}}D(\rho_{RA}\|\sigma_{RA})\Big)\\
&=\sup_{\eta^R\in\mD(R)}\Big(D_\mf(\rho_{R}\otimes\omega_B)-\inf_{\sigma\in\mf(RA)}\inf_{\{\rho\in\mD(RA)\;:\;\rho^R=\eta^R\}}D(\rho_{RA}\|\sigma_{RA})\Big) .
\ea
Now, observe that from the data processing inequality
\be
  \label{eqq}
  \inf_{\sigma\in\mf(RA)}\inf_{\{\rho\in\mD(RA)\;:\;\rho^R=\eta^R\}}D(\rho_{RA}\|\sigma_{RA})
        \geq \inf_{\sigma\in\mf(R)}D(\eta_{R}\|\sigma_{R})=D_\mf(\eta_R) ,
\ee
where the inequality above is in fact an equality as can be seen by taking $\rho_{RA}=\eta_R\otimes\sigma_A$ and $\sigma_{RA}=\sigma_R\otimes\sigma_A$. Similarly, by using the subadditivity of $D_\mf$, we get that
\be
  D_\mf(\eta_{R}\otimes\omega_B)\leq D_\mf(\eta_R)+D_\mf(\omega_B)\;,
\ee
so that together with~\eqref{eqq} (with the inequality replaced with equality) we conclude
\be
  R_{\mf}(\mN_{A\to B})\leq D_\mf(\omega_B).
\ee
To get the other direction, note that restricting $\eta_R$ to $\mf(R)$ gives
\ba
R_{\mf}(\mN_{A\to B})&\geq\sup_{\eta_R\in\mf(R)}\Big(D_\mf(\eta_{R}\otimes\omega_B)-\inf_{\sigma\in\mf(RA)}\inf_{\{\rho\in\mD(RA)\;:\;\rho_R=\eta_R\}}D(\rho_{RA}\|\sigma_{RA})\Big)\\
&=\sup_{\eta_R\in\mf(R)}D_\mf(\eta_{R}\otimes\omega_B)\\
&\geq \sup_{\eta_R\in\mf(R)}D_\mf(\omega_B)\\
&=D_\mf(\omega_B).
\ea
This completes the proof that $R_\mf(\mN_{A\to B})=D_\mf(\omega_B)$. The proof that $\tR_\mf(\mN_{A\to B})=D_\mf(\omega_B)$ follows along similar lines.
\end{proof}

\subsubsection{Two measures that are based on the amortized divergence}

There is another way to extend a divergence $D$ to channels. It was introduced in~\cite{Berta-19} under the name \emph{amortized divergence}. It is defined as 
\be
  D^{\mA}(\mN\|\mM) \eqdef \sup_{\rho,\sigma\in\mD(RA)}
          D\left(\mN_{A\to B}(\rho_{RA})\|\mM_{A\to B}(\sigma_{RA})\right)-D(\rho_{RA}\|\sigma_{RA}) .
\ee
Like $D(\mN\|\mM)$, also $D^{\mA}(\mN\|\mM)$ satisfies the generalized data processing inequality~\cite{Berta-19}. That is, for any superchannel $\Theta: \cptp(A\to B)\to\cptp(A'\to B')$,
\be
  D^{\mA}(\Theta[\mN]\|\Theta[\mM])\leq D^{\mA}(\mN\|\mM).
\ee
Define two functionals
\begin{align}
  D^{\mA}_{\mf}(\mN) &\eqdef \min_{\mM\in\mf(A\to B)}D^{\mA}(\mN\|\mM), \\
  E^{\mA}_{\mf}(\mN) &\eqdef \min_{\mM\in\mf(A\to B)}\sup_{\rho,\sigma\in\mf(RA)}
             D\left(\mN_{A\to B}(\rho_{RA})\|\mM_{A\to B}(\sigma_{RA})\right)-D(\rho_{RA}\|\sigma_{RA}). 
\end{align}
Note that for any $\mN\in\cptp(A\to B)$, we have by definition 
\be
  D_\mf(\mN)\leq D_{\mf}^{\mA}(\mN)\quad;\quad E_\mf(\mN)\leq E^{\mA}_{\mf}(\mN)\;.
\ee
Therefore, the faithfulness of these functions follows from that of $D_\mf$ and $E_{\mf}$. The next lemma shows that they behave monotonically under completely RNG superchannels.
\begin{lemma}
Let $\Theta:\cptp(A\to B)\to\cptp(A'\to B')$ be a superchannel defined by
\be
  \Theta[\mN_{A\to B}]\eqdef \mE^{\post}_{BE\to B'}\circ\mN_{A\to B}\circ\mE^{\pree}_{A'\to AE},
\ee
with $\mE^{\pree}\in\cptp(A'\to AE)$ and $\mE^{\post}\in\cptp(BE\to B')$ being completely RNG.
Then,
\be
  D_\mf^{\mA}(\Theta[\mN])
    \leq D_\mf^{\mA}(\mN)\quad\text{and}\quad E^{\mA}_{\mf}(\Theta[\mN])\leq E^{\mA}_{\mf}(\mN)\;.
\ee
\end{lemma}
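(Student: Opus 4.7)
The plan is to treat $D_\mf^{\mA}$ and $E_\mf^{\mA}$ separately, exploiting two ingredients already in hand: (i) because $\Theta$ is completely RNG, $\Theta[\mM]\in\mf(A'\to B')$ whenever $\mM\in\mf(A\to B)$, which lets us restrict the outer minimum over free channels; and (ii) the data processing inequality of the standard divergence $D$ under the channels $\id_{R'}\otimes\mE^{\post}$ and $\id_{R'}\otimes\mE^{\pree}$, together with the fact that $\mE^{\pree}$ maps free states to free states.

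For $D_\mf^{\mA}$ the argument is short. Restricting $\Omega\in\mf(A'\to B')$ to the subfamily $\{\Theta[\mM]:\mM\in\mf(A\to B)\}$ gives
\begin{equation}
D_\mf^{\mA}(\Theta[\mN]) \leq \min_{\mM\in\mf(A\to B)} D^{\mA}\bigl(\Theta[\mN]\,\|\,\Theta[\mM]\bigr).
\end{equation}
I then invoke the generalized data processing inequality for the amortized divergence under superchannels (\cite{Berta-19}), namely $D^{\mA}(\Theta[\mN]\|\Theta[\mM])\le D^{\mA}(\mN\|\mM)$, and minimize the right-hand side over $\mM\in\mf(A\to B)$ to obtain $D_\mf^{\mA}(\mN)$.

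For $E_\mf^{\mA}$ one cannot simply appeal to the DP of $D^{\mA}$, since the supremum in $E_\mf^{\mA}$ is constrained to free states and a superchannel DP step would enlarge this set to all states. Instead, after restricting $\Omega\mapsto\Theta[\mM]$ as above, I expand $\Theta[\mN](\rho_{R'A'})=\mE^{\post}\circ\mN\circ\mE^{\pree}(\rho_{R'A'})$ and use DP of $D$ under $\id_{R'}\otimes\mE^{\post}$ on the first slot and under $\id_{R'}\otimes\mE^{\pree}$ on the second, bounding the integrand by
\begin{equation}
D\bigl(\mN\circ\mE^{\pree}(\rho_{R'A'})\,\|\,\mM\circ\mE^{\pree}(\sigma_{R'A'})\bigr)-D\bigl(\mE^{\pree}(\rho_{R'A'})\,\|\,\mE^{\pree}(\sigma_{R'A'})\bigr).
\end{equation}
The key step is then that, because $\mE^{\pree}$ is completely RNG, $\mE^{\pree}(\rho)$ and $\mE^{\pree}(\sigma)$ lie in $\mf(R'AE)$ whenever $\rho,\sigma\in\mf(R'A')$. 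Hence replacing the supremum over $\rho,\sigma\in\mf(R'A')$ by a supremum over arbitrary $\rho',\sigma'\in\mf(R'AE)$ can only increase the expression, and relabeling $R:=R'E$ delivers $E_\mf^{\mA}(\mN)$ after minimizing over $\mM$.

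The part that requires the most care is the $E_\mf^{\mA}$ case, specifically tracking that the negative sign in front of $D(\rho\|\sigma)$ points in the right direction: DP gives $D(\mE^{\pree}(\rho)\|\mE^{\pree}(\sigma))\le D(\rho\|\sigma)$, so $-D(\rho\|\sigma)\le -D(\mE^{\pree}(\rho)\|\mE^{\pree}(\sigma))$, which is exactly what is needed to turn DP on $\mE^{\pree}$ into a monotonicity statement for the amortized quantity. All other steps — the restriction to $\Theta[\mM]$, the DP step on $\mE^{\post}$, and the enlargement of the free-state supremum — are routine.
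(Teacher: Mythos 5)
Your proposal is correct and follows essentially the same route as the paper's proof: the $D_\mf^{\mA}$ bound via restriction to $\Theta[\mM]$ plus the generalized data processing inequality of the amortized divergence, and the $E_\mf^{\mA}$ bound by unfolding $\Theta$, applying data processing of $D$ to the first term under $\mE^{\post}$ and to the subtracted term under $\mE^{\pree}$ (with the sign flip you note), and then enlarging the free-state supremum to $\mf(R'AE)$ using that $\mE^{\pree}$ is completely RNG. No gaps.
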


\begin{proof}
The monotonicity of $D_{\mf}^{\mA}$ follows from the data processing inequality of the amortized divergence. Indeed,
\ba
  D_\mf^{\mA}(\Theta[\mN])
    &=    \min_{\Omega\in\mf(A'\to B')}D^{\mA}(\Theta[\mN_{A\to B}]\|\Omega_{A'\to B'})\\
    &\leq \min_{\mM\in\mf(A\to B)}D^{\mA}(\Theta[\mN_{A\to B}]\|\Theta[\mM_{A\to B}])\\
    &\leq \min_{\mM\in\mf(A\to B)}D^{\mA}(\mN_{A\to B}\|\mM_{A\to B})=D_{\mf}^{\mA}(\mN) .
\ea
The monotonicity of $E_{\mf}^{\mA}$ is proved as follows:
\ba\nonumber
E_{\mf}^{\mA}&\left(\Theta[\mN]\right)
 =    \min_{\Omega\in\mf(A'\to B')}\sup_{\rho,\sigma\in\mf(RA')}
          D\big(\Theta[\mN_{A\to B}](\rho_{RA'})\big\| \Omega_{A'\to B'}(\sigma_{RA'})\big) 
           - D(\rho_{RA'}\|\sigma_{RA'}) \\
&\leq \min_{\mM\in\mf(A\to B)} \sup_{\rho,\sigma\in\mf(RA')}\left(\Theta[\mN_{A\to B}](\rho_{RA'})\big\|\Theta[\mM_{A\to B}](\sigma_{RA'})\right)-D(\rho_{RA'}\|\sigma_{RA'})\\
&\leq \min_{\mM\in\mf(A\to B)} \sup_{\rho,\sigma\in\mf(RA')}D\left(\mN_{A\to B}\circ\mE^{\pree}_{A'\to AE}(\rho_{RA'})\big\|\mM_{A\to B}\circ\mE^{\pree}_{A'\to AE}(\sigma_{RA'})\right)-D(\rho_{RA'}\|\sigma_{RA'})\\
&\leq \min_{\mM\in\mf(A\to B)} \sup_{\rho,\sigma\in\mf(RA')}D\left(\mN_{A\to B}\circ\mE^{\pree}_{A'\to AE}(\rho_{RA'})\big\|\mM_{A\to B}\circ\mE^{\pree}_{A'\to AE}(\sigma_{RA'})\right)-D\big(\mE^{\pree}_{A'\to AE}(\rho_{RA'})\big\|\mE^{\pree}_{A'\to AE}(\sigma_{RA'})\big)\\
&\leq \min_{\mM\in\mf(A\to B)}\sup_{\rho,\sigma\in\mf(RAE)}D\left(\mN_{A\to B}(\rho_{RAE})\big\|\mM_{A\to B}(\sigma_{RAE})\right)-D(\rho_{RAE}\|\sigma_{RAE})\\
&=    E_{\mf}^{\mA}(\mN_{A\to B}).
\ea
The first inequality follows from the assumption that $\Theta$ is RNG, the second and third inequalities follow from data processing inequality of $D$, and
the fourth inequality follows from the assumption that $\mE^{\pree}$ is completely RNG.
\end{proof}

Finally, we show that for a replacement channel $\mN_{A\to B}$ that outputs a 
fixed state $\omega_B$,
\be
  D_\mf^\mA(\mN)=E_\mf^{\mA}(\mN)=D_\mf(\omega)\;.
\ee
Indeed,
\ba
D_\mf^\mA(\mN)&=\min_{\mM\in\mf(A\to B)}\sup_{\rho,\sigma\in\mf(RA)}D\left(\rho_{R}\otimes\omega_B)\|\mM_{A\to B}(\sigma_{RA})\right)-D(\rho_{RA}\|\sigma_{RA})\\
&\leq \min_{\mM=\gamma\in\mf(B)}\sup_{\rho,\sigma\in\mf(RA)}D\left(\rho_{R}\otimes\omega_B\|\sigma_{R}\otimes\gamma_B\right)-D(\rho_{RA}\|\sigma_{RA})\\
&\leq \min_{\mM=\gamma\in\mf(B)}\sup_{\rho,\sigma\in\mf(RA)}D\left(\rho_{R}\otimes\omega_B\|\sigma_{R}\otimes\gamma_B\right)-D(\rho_{R}\|\sigma_{R})\\
&=\min_{\gamma\in\mf(B)}D\left(\omega_B\|\gamma_B\right)\\
&=D_\mf(\omega_B),
\ea
where the first inequality follows from the restriction of the minimization over $\mf(A\to B)$ to minimization over replacement channels in $\mf(A\to B)$. The second inequality follows from data processing of the relative entropy $D$, and the following equality follows from the additivity of the relative entropy.
To prove the other direction, note that $D_\mf^\mA(\mN)\geq D_\mf(\mN)\geq D_\mf(\omega_B)$. 
Hence, $D_\mf^\mA(\mN)=D_\mf(\omega_B)$.

For the proof the $E_\mf^\mA(\mN)=D_\mf(\omega_B)$, note that $E_\mf^\mA(\mN)\leq D_\mf^\mA(\mN)=D_\mf(\omega_B)$, and for the other direction, $E_\mf^\mA(\mN)\geq E_\mf(\mN)=D_\mf(\omega_B)$. This proves that also $E_\mf^\mA(\mN)=D_\mf(\omega_B)$.

\subsubsection{The form of the monotones in the resource theory of thermodynamics}
Since in the QRT of athermality $\mf(A)$ consists of only one free state, namely 
the Gibbs state at fixed temperature, some of the relative entropies discussed above 
take simple forms. Here we discuss a few of them.
Let the set of free states consists of a single Gibbs state $\mf(A)=\{\gamma_A\}$ 
and $\mf(B)=\{\gamma_B\}$. Then,
\ba
  E_\mf(\mN)
    &= \sup_{|R|\in\mbb{N}}\inf_{\mM\in\mf(A\to B)}
        D\left(\mN_{A\to B}(\gamma_{R}\otimes\gamma_A)\|\mM_{A\to B}(\gamma_{R}\otimes\gamma_A)\right)\\
    &= \inf_{\mM\in\mf(A\to B)}D\left(\mN_{A\to B}(\gamma_A)\|\mM_{A\to B}(\gamma_A)\right)\\
    &= D\left(\mN_{A\to B}(\gamma_A)\|\gamma_B)\right),
\ea
which is the Gibbs free energy of the state $\mN_{A\to B}(\gamma_A)$. Note that this is 
also the value of $\tR_\mf$ so that in the QRT of athermality we have the collapse
\be
  E_\mf(\mN)=\tR_\mf(\mN)=F\left(\mN_{A\to B}(\gamma_A)\right),
\ee
where $F$ stands for the free energy.

Finally, we show that $R_\mf$ reduces to the thermodynamic capacity in the QRT of athermality.
\begin{lemma}
  In the thermodynamic case, in which the set of free states consists of a single 
  Gibbs state $\mf(A)=\{\gamma_A\}$ and $\mf(B)=\{\gamma_B\}$, we have:
  \be
    R_{\mf}(\mN_{A\to B})
      = \sup_{\sigma\in\mD(A)}\Big(D(\mN_{A\to B}(\sigma_{A})\big\|\gamma_B)
                                   -D(\sigma_{A}\big\|\gamma_{A})\Big)
      \equiv T(\mN_{A\to B}),
  \ee
  where $T(\mN_{A\to B})$ is the thermodynamic capacity of the channel as defined 
  in~\cite{NG2015} (see also~\cite{Faist-19}, where ot bis shown that the same 
  quantity is the work cost of implementing $(\mN_{A\to B}$ using Gibbs-preserving
  operations).
\end{lemma}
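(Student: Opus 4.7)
The plan is to show two inequalities. For the lower bound $R_{\mf}(\mN_{A\to B})\geq T(\mN_{A\to B})$, I would simply restrict the supremum defining $R_\mf$ to states $\sigma_{RA}$ with trivial reference $R$, so that $\sigma_{RA}=\sigma_A$. Since $\mf(A)=\{\gamma_A\}$ and $\mf(B)=\{\gamma_B\}$, in this case $D_\mf(\sigma_A)=D(\sigma_A\|\gamma_A)$ and $D_\mf(\mN(\sigma_A))=D(\mN(\sigma_A)\|\gamma_B)$, giving the desired bound at once.

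For the upper bound $R_{\mf}(\mN_{A\to B})\leq T(\mN_{A\to B})$, the key observation is that in the thermodynamic setting the only free state on a composite system $RA$ is $\gamma_R\otimes\gamma_A$, hence $D_\mf(\sigma_{RA})=D(\sigma_{RA}\|\gamma_R\otimes\gamma_A)$ and $D_\mf(\mN(\sigma_{RA}))=D(\mN(\sigma_{RA})\|\gamma_R\otimes\gamma_B)$. I would then expand both relative entropies using $D(\rho\|\gamma_R\otimes\gamma_X) = -S(\rho) - \tr[\rho_R\log\gamma_R] - \tr[\rho_X\log\gamma_X]$, exploiting that the marginal on $R$ is unchanged by $\mN_{A\to B}$, so that $\tr[\sigma_R\log\gamma_R]$ appears identically in both terms and cancels in the difference.

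After this cancellation, a short computation shows
\begin{align}
  &\big(D_\mf(\mN(\sigma_{RA})) - D_\mf(\sigma_{RA})\big)
   - \big(D(\mN(\sigma_A)\|\gamma_B) - D(\sigma_A\|\gamma_A)\big) \nonumber\\
  &\qquad = S(\sigma_{RA}) - S(\mN(\sigma_{RA})) - S(\sigma_A) + S(\mN(\sigma_A))
    = I(R{:}B)_{\mN(\sigma)} - I(R{:}A)_\sigma,
\end{align}
where in the last step I use $I(R{:}X)_\rho = S(\rho_R)+S(\rho_X)-S(\rho_{RX})$ together with $\mN(\sigma_{RA})_R=\sigma_R$. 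The data processing inequality for mutual information then gives $I(R{:}B)_{\mN(\sigma)}\leq I(R{:}A)_\sigma$, i.e.\ the above difference is nonpositive. Taking suprema over $\sigma_{RA}\in\mD(RA)$ on the left and over $\sigma_A\in\mD(A)$ on the right yields the upper bound.

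The only genuine subtlety I anticipate is the cancellation of the $R$-dependent term $\tr[\sigma_R\log\gamma_R]$, which requires that $\gamma_R$ be a well-defined Gibbs state on the reference; this is standard for the QRT of athermality since every system (including the auxiliary $R$) comes equipped with a Hamiltonian and a fixed-temperature Gibbs state, all of which are free. Everything else is bookkeeping with the ``$\log$-partition-function'' decomposition of the relative entropy and one application of data processing for mutual information.
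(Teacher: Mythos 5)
Your proposal is correct and follows essentially the same route as the paper: expand both relative entropies against $\gamma_R\otimes\gamma_B$ and $\gamma_R\otimes\gamma_A$, cancel the $R$-dependent term, and control the residual by data processing. Your correction term $I(R{:}B)_{\mN(\sigma)}-I(R{:}A)_\sigma$ is identical to the paper's $H(R|A)_\sigma-H(R|B)_{\mN(\sigma)}$ (since the $R$ marginal is preserved), and your achievability via trivial $R$ matches the paper's observation that equality holds for product states.
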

\begin{proof}
In this case,
\be
R_{\mf}(\mN_{A\to B})\eqdef\sup_{\sigma\in\mD(RA)}\Big(D(\mN_{A\to B}(\sigma_{RA})\big\|\gamma_R\otimes\gamma_B)-D(\sigma_{RA}\big\|\gamma_R\otimes\gamma_{A})\Big)
\ee
Now, note that
\begin{align}
D\Big(\mN_{A\to B}(\sigma_{RA})&\big\|\gamma_R\otimes\gamma_B\Big)-D\Big(\sigma_{RA}\big\|\gamma_R\otimes\gamma_{A}\Big)\\
  &= - H\left(\mN_{A\to B}(\sigma_{RA})\right)
      - \tr\left[\mN_{A\to B}(\sigma_{RA})\log(\gamma_R\otimes\gamma_B)\right]
      + H(\sigma_{RA})+\tr\left[\sigma_{RA}\log(\gamma_R\otimes\gamma_B)\right]\\
  &= D\left(\mN_{A\to B}(\sigma_{A})\big\|\gamma_B\right)
     - D(\sigma_{A}\big\|\gamma_{A})+H(R|A)_{\sigma}-H(R|B)_{\mN_{A\to B}(\sigma_{RA})} .
\end{align}
Furthermore, from the data processing inequality we have
\be
  H(R|A)_{\sigma_{RA}}\leq H(R|B)_{\mN_{A\to B}(\sigma_{RA})},
\ee
with equality if $\sigma_{RA}=\sigma_R\otimes\sigma_A$. This completes the proof.
\end{proof}

\section{Minimax Theorem for the relative entropy}

Consider a distance parameter $d:\mD(A)\times\mD(A)\to\mbb{R}_{+}$ on states that is non-negative and contractive (monotone) 
under CPTP maps. Let $\mS(RA)$ be a convex set of density matrices. We will take here $\mS(RA)=\mD(RA)$ or $\mS(RA)=\mf(RA)$. For a channel $\mN\in\cptp(A\to B)$, and a QRT $\mathfrak{F}$, define
\begin{align}
  \underline{d}_{\mathfrak{F},\mS}(\mN) &:= \sup_{\rho\in\mS(RA)} \inf_{\mM\in\mathfrak{F}(A\rightarrow B)} 
                                              d\bigl( \mN_{A\to B}(\rho_{RA}), \mM_{A\to B}(\rho_{RA}) \bigr), \\
  \overline{d}_{\mathfrak{F},\mS}(\mN)  &:= \inf_{\mM\in\mathfrak{F}(A\rightarrow B)} \sup_{\rho\in\mS(RA)}
                                              d\bigl( \mN_{A\to B}(\rho_{RA}), \mM_{A\to B}(\rho_{RA}) \bigr).
\end{align}

By general principles (max-min inequality), $\underline{d}_{\mathfrak{F},\mS}(\mN)  \leq  \overline{d}_{\mathfrak{F},\mS}(\mN)$,
and we will show equality under mild assumptions on $d$ and the free channels.
Concretely, assume that $d$ is jointly concave under orthogonally flagged mixtures:
This means that for any two families $\{\rho_x\}$ and $\{\sigma_x\}$ of states, and
any probability distribution $\{p_x\}$,
\begin{equation}
  \label{eq:concave2}
  d\left( \sum_x p_x \rho_x\ox\proj{x}, \sum_x p_x \sigma_x\ox\proj{x} \right) 
                                              \geq \sum_x p_x d(\rho_x,\sigma_x),
\end{equation}
where $\{\ket{x}\}$ is an orthonormal basis of an auxiliary system.
This for example holds with equality for the trace distance, relative entropy, and all the R\'enyi divergences.

For the case that, $\mS=\mf$, we will assume (in addition to convexity) that there exists a finite dimensional system $R$ such that $\mf(R)$ contains at least two orthonormal pure states. Since $\mf$ also admits the tensor product structure, this means that there exists a system $R'$ containing any finite number of orthonormal pure states. Hence, combining it with the convexity property, if $\{\rho^i\}\subset\mf(A)$ and $\{p_i\}$ is a probability distribution, then there exists a system $R$ and orthonormal set of pure states $\{|i\lr i|\}\subset\mf(R)$ such that $\sum_ip_i|i\lr i|_{R}\otimes\rho^i_A\in\mf(RA)$.

\subsection{Proof of Theorem~\ref{minmax}}
\begin{theorem*}
  \label{lemma:minimax}
  For a distance measure satisfying Eq. (\ref{eq:concave2}), and assuming that
  $\mathfrak{F}(A\rightarrow B)$ is convex (and satisfies the property above), and that $d$ is convex in the second
  argument, it holds
  $\underline{d}_{\mathfrak{F},\mS}(\mN) = \overline{d}_{\mathfrak{F},\mS}(\mN)$.
\end{theorem*}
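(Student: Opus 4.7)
The inequality $\underline{d}_{\mathfrak{F},\mS}(\mN) \leq \overline{d}_{\mathfrak{F},\mS}(\mN)$ is the usual max-min inequality, so all the work is in proving the reverse. My plan is to proceed by contradiction with the help of a finite-intersection argument on the compact set $\mathfrak{F}(A\to B)$. Write $v' = \underline{d}_{\mathfrak{F},\mS}(\mN)$ and $v = \overline{d}_{\mathfrak{F},\mS}(\mN)$, and suppose $v' < v$. I aim to produce, for each $\epsilon>0$, a free channel $\mM^{\star}\in\mathfrak{F}(A\to B)$ with $\sup_{\rho}d\bigl(\mN(\rho),\mM^{\star}(\rho)\bigr) \leq v'+\epsilon$, contradicting the definition of $v$.

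For each $\rho\in\mS(RA)$ (with any finite-dimensional $R$) and each $\epsilon>0$, let
\begin{equation*}
  C_{\rho}^{\epsilon} \eqdef \bigl\{\mM\in\mathfrak{F}(A\to B) : d(\mN(\rho),\mM(\rho))\leq v'+\epsilon\bigr\}.
\end{equation*}
Each $C_{\rho}^{\epsilon}$ is convex (by convexity of $d$ in its second argument, together with the linearity of $\mM\mapsto\mM(\rho)$) and closed (by lower semicontinuity inherited from convexity of $d$). Since $\mathfrak{F}(A\to B)$ is a closed subset of the compact set of channels, it is compact, and the desired $\mM^{\star}$ exists provided the family $\{C_{\rho}^{\epsilon}\}_{\rho}$ has the finite-intersection property.

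The heart of the argument is to verify finite intersections by using the flag-concavity assumption. Given $\rho_{1},\dots,\rho_{n}\in\mS(RA)$ (without loss of generality on a common reference $R$), and any probability vector $q=(q_{1},\dots,q_{n})$, form the classical-quantum mixture $\tilde{\rho}_{q}=\sum_{i}q_{i}\,|i\rangle\langle i|_{X}\otimes\rho_{i}$ on $XRA$. This state lies in $\mS(XRA)$: trivially if $\mS=\mD$, and for $\mS=\mf$ by the tensor structure of $\mathfrak{F}$ together with the existence of orthonormal free pure states on the flag system $X$. By~(\ref{eq:concave2}),
\begin{equation*}
  d\bigl(\mN(\tilde{\rho}_{q}),\mM(\tilde{\rho}_{q})\bigr)
  \geq \sum_{i}q_{i}\,d\bigl(\mN(\rho_{i}),\mM(\rho_{i})\bigr).
\end{equation*}
Taking the infimum over $\mM\in\mathfrak{F}(A\to B)$ and noting that the left side is at most $v'$ (since $\tilde{\rho}_{q}\in\mS$), I get
\begin{equation*}
  \inf_{\mM\in\mathfrak{F}(A\to B)}\sum_{i}q_{i}\,d\bigl(\mN(\rho_{i}),\mM(\rho_{i})\bigr)\leq v'
\end{equation*}
for every probability vector $q$.

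The function $(\mM,q)\mapsto \sum_{i}q_{i}\,d(\mN(\rho_{i}),\mM(\rho_{i}))$ is linear in $q$ on the simplex $\Delta_{n}$ and convex and lower semicontinuous in $\mM$ on the compact convex $\mathfrak{F}(A\to B)$, so the classical finite-dimensional minimax theorem gives $\max_{q}\inf_{\mM}=\inf_{\mM}\max_{q}$. Combined with the previous bound,
\begin{equation*}
  \inf_{\mM\in\mathfrak{F}(A\to B)}\max_{i}d\bigl(\mN(\rho_{i}),\mM(\rho_{i})\bigr)\leq v'.
\end{equation*}
Hence for every $\epsilon>0$ some $\mM^{\star}$ lies simultaneously in each $C_{\rho_{i}}^{\epsilon}$, establishing the finite-intersection property. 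Compactness of $\mathfrak{F}(A\to B)$ then produces a common point of $\bigcap_{\rho}C_{\rho}^{\epsilon}$, yielding $v\leq v'+\epsilon$; letting $\epsilon\to 0$ gives equality. The main technical obstacle is verifying that the flagged state $\tilde{\rho}_{q}$ genuinely belongs to $\mS$ in the restrictive case $\mS=\mf$, which relies on the standing assumption that the QRT is closed under tensor products and admits orthonormal classical flag states on an auxiliary system; once this is in place, the finite-dimensional minimax on $\Delta_{n}$ neatly bypasses the absence of concavity of $d$ in its first argument.
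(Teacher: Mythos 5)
Your argument is correct and reaches the same conclusion as the paper, but by a genuinely different route. The paper decomposes both the free channels and the states of $\mS(RA)$ into convex combinations of extreme points, applies von Neumann's minimax theorem to the resulting bilinear form over the two probability simplices, uses flag-concavity to re-absorb the optimal mixture of states into a single flagged state $\overline\rho=\sum_ip_i\proj{i}\otimes\rho_i$, and then closes a sandwich by taking the supremum over reference systems $R$ so that both ends of the chain of inequalities collapse onto $\underline{d}_{\mathfrak{F},\mS}(\mN)$. You instead reduce the full minimax to a finite one: flag-concavity, together with the fact that the flagged mixture $\tilde{\rho}_q$ is itself an admissible state on an enlarged reference system, gives $\inf_{\mM}\sum_iq_i\,d(\mN(\rho_i),\mM(\rho_i))\le\underline{d}_{\mathfrak{F},\mS}(\mN)$ for every weight vector $q$; a convex--concave minimax theorem (Sion/Ky~Fan) on $\Delta_n\times\mathfrak{F}(A\to B)$ upgrades this to $\inf_\mM\max_i d(\mN(\rho_i),\mM(\rho_i))\le\underline{d}_{\mathfrak{F},\mS}(\mN)$; and the finite-intersection property on the compact set $\mathfrak{F}(A\to B)$ then produces one free channel that is simultaneously $\epsilon$-optimal for \emph{all} states. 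Your approach buys two things: it never needs to discuss extreme points of $\mS(RA)$ (the paper's replacement of $\sup_\rho$ by a supremum over extreme states is one of its more delicate steps, since $d$ is not assumed convex in its first argument), and it makes explicit where closedness and compactness of $\mathfrak{F}(A\to B)$ enter. The paper's approach buys the intermediate quantity $\widetilde{d}_{\mathfrak{F},\mS}$ and the weaker identity $\underline{d}_{\mathfrak{F},\mS}=\widetilde{d}_{\mathfrak{F},\mS}$, which survives without convexity of $\mathfrak{F}$ and of $d$ in the second argument; your argument uses both hypotheses essentially and yields no such by-product.

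Two small caveats. First, the closedness of $C^\epsilon_\rho$ and the applicability of the minimax step both need lower semicontinuity of $\mM\mapsto d(\mN(\rho),\mM(\rho))$, and this does \emph{not} follow from convexity alone (a convex function may fail to be lsc on the relative boundary of its domain); it does hold for all the divergences the paper actually uses, and the paper's own proof is equally silent on such regularity, so treat it as an added hypothesis rather than a flaw. Second, you should say a word about placing $\rho_1,\dots,\rho_n$ on a common reference system when they live on different ones: pad each with free states on the missing factors and use contractivity of $d$ under partial trace to see that membership in the padded constraint set implies membership in the original one. Both points are routine to repair.
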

\begin{proof}
We have automatically ``$\leq$'', so we will focus on proving ``$\geq$''.
Fix $R$ for the moment to be a finite-dimensional system. Since $\mf(A\to B)$ is a convex closed set,
any channel $\mM\in\mf(A\to B)$ can be expressed as a convex combination $\mM=\sum_jq_j\mM^j$, where each $\mM^j$ is an extreme channel of $\mf(A\to B)$. Similarly, since $\mS(A)$ is convex, every density matrix $\rho_{RA}$ can be expressed as a convex combination $\rho=\sum_ip_i\rho^i$, where each $\rho_i$ is an extreme state of $\mS(A)$. This means that the optimization over all channels and states in $\mf(A\to B)$ and $\mS(A)$ can be replaced with optimizations over the probability distributions $\{q_j\}$ and $\{p_i\}$. With this in mind we have
\[\begin{split}
  \sup_{\rho\in\mS(RA)} 
  \inf_{\mM\in\mathfrak{F}(A\rightarrow B)} 
         d\bigl( \mN_{A\to B}(\rho_{RA}), \mM_{A\to B}(\rho_{RA}) \bigr) 
         &=     \sup_{\rho\in\mS(RA)} \inf_{\{q_j\}}
                                 \sum_j q_j d\bigl( \mN_{A\to B}(\rho_{RA}), \mM_{A\to B}^j(\rho_{RA}) \bigr) \\
         &\leq \inf_{\{q_j\}} \sup_{\rho\in\mS(RA)} 
                                 \sum_j q_j d\bigl( \mN_{A\to B}(\rho_{RA}), \mM_{A\to B}^j(\rho_{RA}) \bigr) \\
         &=    \inf_{\{q_j\}} 
               \sup_{\{p_i\}} 
                                 \sum_{ij} p_iq_j d\bigl( \mN_{A\to B}(\rho_{RA}^i), \mM_{A\to B}^j(\rho_{RA}^i) \bigr) \\
         &=    \sup_{\{p_i\}} 
               \inf_{\{q_j\}} 
                                 \sum_{ij} p_iq_j d\bigl(  \mN_{A\to B}(\rho_{RA}^i), \mM_{A\to B}^j(\rho_{RA}^i)  \bigr) \\
         &\leq \sup_{\{p_i\}} 
               \inf_{\{q_j\}} 
                 \sum_j q_j d\bigl(  \mN_{A\to B}(\overline\rho_{R'RA}), \mM_{A\to B}^j(\overline\rho_{R'RA})  \bigr) \\
         &\leq \sup_{\rho\in\mS({RR'A})} 
               \inf_{\{q_j\}} 
                                   \sum_j q_j d\bigl(  \mN_{A\to B}(\rho_{R'RA}), \mM_{A\to B}^j(\rho_{R'RA})  \bigr) \\
         &=    \sup_{\rho\in\mS({RR'A})} \inf_{\mM\in\mathfrak{F}(A\rightarrow B)} 
                                      d\bigl( \mN_{A\to B}(\rho_{R'RA}), \mM_{A\to B}(\rho_{R'RA}) \bigr).
\end{split}\]
The first line is because the optimal ensemble $\{q_j,\mM_j\}$ of free channels
will be a point mass on a single optimal channel; the second is due to the general
minimax inequality; the third is by the same principle as the first; the fourth
line is due to von Neumann's minimax theorem, noting that the domains of
optimization are both convex, and the objective function is linear in either
variable; in the fifth,  we use the joint concavity
with $\overline\rho = \sum_i p_i \proj{i}^{R'}\ox\rho_i^{RA}$; in the sixth line,
we enlarge the maximization to arbitrary states on $\mS(RR'A)$; and in the seventh
we use once more the convex combination principle from lines 1 and 3.

Now, taking the supremum over auxiliary systems $R$, both the l.h.s. and
the r.h.s. yield $\underline{d}_{\mathfrak{F},\mS}(\mN)$, and all inequalities
above turn into equalities. In particular, $\underline{d}_{\mathfrak{F},\mS}(\mN)$ equals
the term in the second line, which evaluates to 
\[\begin{split}
  \underline{d}_{\mathfrak{F},\mS}(\mN)&= \inf_{\{q_j\}} \sup_{\rho\in\mS(RA)} 
                                 \sum_j q_j d\bigl( \mN_{A\to B}(\rho_{RA}), \mM_{A\to B}^j(\rho_{RA}) \bigr) \\
         &= \inf_{\mM\in\mathfrak{F}(A\rightarrow B)} \sup_{\rho\in\mS(RA)}
                                        d\bigl(  \mN_{A\to B}(\rho_{RA}), \mM_{A\to B}(\rho_{RA}) \bigr) 
          = \overline{d}_{\mathfrak{F},\mS}(\mN),
\end{split}\]
because the convexity of $\mathfrak{F}$ and $d$.
\end{proof}

\medskip
Without the convexity of $\mathfrak{F}$ and of $d$ in the second argument,
there is still something we can do: simply define
\[
  \widetilde{d}_{\mathfrak{F},\mS}(\mN) :=
        \inf_{\{q_j\}} \sup_{\rho\in\mS(RA)} 
                                 \sum_j q_j d\bigl( \mN_{A\to B}(\rho_{RA}), \mM_{A\to B}^j(\rho_{RA}) \bigr),
\]
then the above proof shows
\begin{lemma}
  \label{lemma:minimax-2}
  For a distance measure satisfying Eq. (\ref{eq:concave2}), it holds
  $\underline{d}_{\mathfrak{F},\mS}(\mN) = \widetilde{d}_{\mathfrak{F},\mS}(\mN)$.
  \qed
\end{lemma}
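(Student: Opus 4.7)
The plan is to prove $\widetilde{d}_{\mathfrak{F},\mS}(\mN)=\underline{d}_{\mathfrak{F},\mS}(\mN)$ by sandwich. The easy direction $\widetilde{d}\geq\underline{d}$ is immediate from the max-min inequality together with the observation that $\inf_{\{q_j\}}\sum_j q_j d(\mN(\rho),\mM^j(\rho))=\inf_{\mM\in\mf}d(\mN(\rho),\mM(\rho))$: the infimum of a convex combination is attained at a point mass, and this identity requires no convexity of $\mf$ or of $d$ whatsoever.

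For the converse $\widetilde{d}\leq\underline{d}$ I would replay the middle section of the proof of the preceding theorem almost verbatim, being careful to avoid the one step that used the omitted hypotheses. Starting from the definition
\[
\widetilde{d} = \inf_{\{q_j\}}\sup_{\rho\in\mS(RA)} \sum_j q_j\, d(\mN(\rho),\mM^j(\rho)),
\]
I decompose $\rho=\sum_i p_i\rho^i_{RA}$ into extreme points of the convex set $\mS(RA)$, so that the inner supremum becomes an optimization over probability distributions $\{p_i\}$. The resulting objective is bilinear on a product of compact simplices, so von Neumann's minimax theorem swaps $\inf_q\sup_p$ into $\sup_p\inf_q$, yielding
\[
\widetilde{d}=\sup_{\{p_i\}}\inf_{\{q_j\}}\sum_{i,j}p_i q_j\, d(\mN(\rho^i),\mM^j(\rho^i)).
\]

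Next I would invoke the joint concavity assumption~(\ref{eq:concave2}) applied to the flagged state $\overline{\rho}_{R'RA}:=\sum_i p_i\proj{i}_{R'}\otimes\rho^i_{RA}$: since $\mN_{A\to B}$ and $\mM^j_{A\to B}$ act trivially on the flag register $R'$, concavity gives $\sum_i p_i\, d(\mN(\rho^i),\mM^j(\rho^i))\leq d(\mN(\overline{\rho}),\mM^j(\overline{\rho}))$ for each $j$. Substituting and enlarging the supremum over flagged states to arbitrary $\rho\in\mS(RR'A)$---using the implicit supremum over auxiliary dimensions in the definition of $\underline{d}$---gives
\[
\widetilde{d}\leq \sup_\rho\inf_{\{q_j\}}\sum_j q_j\, d(\mN(\rho),\mM^j(\rho)) = \sup_\rho\inf_{\mM\in\mf}d(\mN(\rho),\mM(\rho))=\underline{d},
\]
where the first equality on the right invokes the point-mass tautology once again.

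The only real subtlety, and the place where I would need to be careful, is verifying that the two convexity hypotheses of the preceding theorem---convexity of $\mf(A\to B)$ and of $d$ in its second argument---are genuinely unnecessary for this weaker statement. They were used there in the very first step to identify $\inf_\mM d$ with $\inf_q\sum q_j d_j$; in the present lemma this identification is not needed, since we keep the latter expression as the very definition of $\widetilde{d}$ and only ever exploit the elementary point-mass equality in one direction, so no additional convexity is consumed.
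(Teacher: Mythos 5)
Your proposal follows the paper's own proof essentially line by line: the paper establishes this lemma simply by pointing back at the chain of inequalities in its proof of Theorem~\ref{minmax}, and your two-sided sandwich reproduces exactly that chain (max--min inequality for the easy direction; extreme-point decomposition of $\rho$, von Neumann's minimax theorem on the product of simplices, joint concavity applied to the flagged state $\overline\rho$, and enlargement of the reference system for the converse). In terms of route there is nothing new here, and your bookkeeping of the flag register and of the implicit supremum over auxiliary dimensions is accurate.

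The difficulty is the step you yourself single out as ``the only real subtlety'' and then dismiss. The identity
\[
  \inf_{\{q_j\}}\sum_j q_j\, d\bigl(\mN(\rho),\mM^j(\rho)\bigr)
  \;=\;
  \inf_{\mM\in\mathfrak{F}(A\to B)} d\bigl(\mN(\rho),\mM(\rho)\bigr)
\]
is not a tautology, and it is not used in only one direction. The left-hand side equals $\inf_j d(\mN(\rho),\mM^j(\rho))$ by the point-mass argument, i.e.\ the infimum over the \emph{extreme} channels only; the inequality ``$\geq$'' against the infimum over all of $\mathfrak{F}(A\to B)$ is then trivial, and that is all the easy direction consumes. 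But the closing equality of your hard direction uses the reverse inequality $\inf_j d(\mN(\rho),\mM^j(\rho))\leq\inf_{\mM\in\mathfrak{F}}d(\mN(\rho),\mM(\rho))$, which asserts that the infimum of $\mM\mapsto d(\mN(\rho),\mM(\rho))$ over the convex set $\mathfrak{F}(A\to B)$ is attained at an extreme point. That would follow from (quasi-)concavity of $d$ in its second argument, but it fails for the motivating divergences, which are convex there: take $\mN\in\mathfrak{F}(A\to B)$ free but not extreme; then $\inf_{\mM\in\mathfrak{F}}d(\mN(\rho),\mM(\rho))=0$ while, for a faithful $d$ and suitable $\rho$, every extreme $\mM^j$ gives $d(\mN(\rho),\mM^j(\rho))>0$, so that $\widetilde{d}_{\mathfrak{F},\mS}(\mN)>0=\underline{d}_{\mathfrak{F},\mS}(\mN)$ under the ``weights on a fixed set of extreme channels'' reading of $\widetilde{d}$. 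To be fair, the last line of the paper's chain makes the same silent identification, so you have reproduced the paper's argument faithfully; but your explicit assertion that ``no additional convexity is consumed'' is exactly where the argument is thinnest. The step becomes a genuine tautology only if $\widetilde{d}$ is read as an infimum over arbitrary finite ensembles $\{q_j,\mM^j\}$ drawn from $\mathfrak{F}(A\to B)$, rather than over distributions on a fixed extreme-point decomposition, and you should state explicitly which object you are proving the claim for. (A similar caveat applies to replacing $\sup_{\rho\in\mS}$ by a supremum over extreme points of $\mS$, which requires convexity of $\rho\mapsto d(\mN(\rho),\mM^j(\rho))$; this at least holds for the jointly convex divergences the paper has in mind, though it is not among the stated hypotheses.)
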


\section{Asymptotic continuity}

In this section we prove that the functions $D_\mf$ and $E_\mf$ are asymptotically 
continuous. For this purpose, we first need to check if they are bounded from above. 
Since $E_\mf\leq D_\mf$ it is sufficient to bound $D_\mf$.
Now, recall that we only consider here QRTs that admits the tensor product structure, so that the replacement channels $\mM_{\sigma}\in\mf(A\to B)$ for any $\sigma\in\mf(B)$. Hence,
\ba
D_\mf(\mN)&\leq \min_{\sigma\in\mf(B)}\max_{\varphi\in\mD(RA)}D\left(\mN_{A\to B}(\varphi_{RA})\|\varphi_{R}\otimes\sigma_B)\right)\\
&=\min_{\sigma\in\mf(B)}\max_{\varphi\in\mD(RA)}\Big\{D\left(\mN_{A\to B}(\varphi_{RA})\|\varphi_{R}\otimes\mN_{A\to B}(\varphi_A)\right)
+D\left(\mN_{A\to B}(\varphi_{A})\|\sigma_B\right)\Big\}\\
&=\min_{\sigma\in\mf(B)}\max_{\varphi\in\mD(RA)}\Big\{H(R:B)_{\mN_{A\to B}(\varphi_{RA})}+D\left(\mN_{A\to B}(\varphi_{A})\|\sigma_B\right)\Big\}\\
&\leq\log(|AB|)+ \min_{\sigma\in\mf(B)}\max_{\varphi_{A}}D\left(\mN_{A\to B}(\varphi_{A})\|\sigma_B\right)\\
&\leq \log(|AB|)+ \min_{\sigma\in\mf(B)}\max_{\varphi_{A}}-\tr\left[\mN_{A\to B}(\varphi_{A})\log\sigma_B\right]\\
&\leq \log(|AB|)+ \min_{\sigma\in\mf(B)}\log\|\sigma_B^{-1}\|_{\infty},
\ea
where we assumed w.l.o.g. $R\cong A$, and the second line follows from the 
following triangle equality property of the relative entropy 
\be
  D(\rho_{AB}\|\rho_A\otimes\tau_B)=D(\rho_{AB}\|\rho_A\otimes\rho_B)+D(\rho_{B}\|\tau_B)\;.
\ee
We will therefore assume that $\mf(B)$ contain a full rank state to get that $D_\mf(\mN)$ is bounded. For example, if $\mf(B)$ contains the maximally mixed (uniform) state $I_B/|B|$ then
\be
  D_\mf(\mN_{A\to B})\leq \log\bigl(|B|^2|A|\bigr).
\ee

\subsection{Proof of Theorem~\ref{acn}}

\subsubsection{Weaker Version}

This version only applies to $D_\mf$.

\begin{theorem*}
Let $\mf$ be a convex QRT such that
\be
  \kappa\eqdef\max_{\mN\in\cptp(A\to B)}D_\mf(\mN)\leq c \log|AB|
\ee
for some constant $c\in\mbb{R}_{+}$ independent of dimensions.
Then, $D_\mf$ is asymptotically continuous. In particular, for two channels 
$\mN,\mM\in\cptp(A\to B)$ and with 
$\epsilon\eqdef\frac{1}{2}\|\mN_{A\to B}-\mM_{A\to B}\|_{\diamond}$, we have
\be
  \bigl|D_{\mf}(\mN_{A\to B})-D_{\mf}(\mM_{A\to B})\bigr|
      \leq (1+\epsilon)h\left(\frac{\epsilon}{1+\epsilon}\right)+\epsilon\kappa, 
\ee
where $h(x)\eqdef -x\log x-(1-x)\log(1-x)$.
\end{theorem*}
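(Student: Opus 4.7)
My strategy is to lift the standard Winter-style asymptotic-continuity argument from states to channels, using the remark's observation that the diamond norm admits a max-relative-entropy characterization with respect to the set of CPTP maps. The first step is to invoke that characterization to produce a common ``mixing channel'' $\mK$ and residual channels $\mL_\pm\in\cptp(A\to B)$ such that, with $\epsilon=\tfrac12\|\mN-\mM\|_{\diamond}$,
\begin{equation}
\mK=\tfrac{1}{1+\epsilon}\mN+\tfrac{\epsilon}{1+\epsilon}\mL_-=\tfrac{1}{1+\epsilon}\mM+\tfrac{\epsilon}{1+\epsilon}\mL_+.\nonumber
\end{equation}
This is the channel analog of the state identity $\rho+\epsilon\omega_-=\sigma+\epsilon\omega_+=(1+\epsilon)\tau$ that drives Winter's proof, and it is the main new ingredient needed in the channel setting.

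With the decomposition in hand, I would derive two complementary bounds on $D_\mf(\mN)$ and $D_\mf(\mM)$ in terms of $D_\mf(\mK)$. The lower bound is convexity: joint convexity of $D(\cdot\|\cdot)$ makes $D_\mf$ convex in its channel argument, so $D_\mf(\mK)\leq\tfrac{1}{1+\epsilon}D_\mf(\mN)+\tfrac{\epsilon}{1+\epsilon}D_\mf(\mL_-)$, and using $D_\mf(\mL_-)\leq\kappa$ gives $D_\mf(\mN)\geq(1+\epsilon)D_\mf(\mK)-\epsilon\kappa$, with the analogous bound for $\mM$. For the matching upper bound, let $\mM^*\in\mf(A\to B)$ be an optimizer for $D_\mf(\mK)$ and apply Donald's identity to the two-point ensemble $\{(\tfrac{1}{1+\epsilon},\mN(\varphi)),(\tfrac{\epsilon}{1+\epsilon},\mL_-(\varphi))\}$ at each input $\varphi$, whose mean is $\mK(\varphi)$:
\begin{equation}
\tfrac{1}{1+\epsilon}D\bigl(\mN(\varphi)\big\|\mM^*(\varphi)\bigr)+\tfrac{\epsilon}{1+\epsilon}D\bigl(\mL_-(\varphi)\big\|\mM^*(\varphi)\bigr)=D\bigl(\mK(\varphi)\big\|\mM^*(\varphi)\bigr)+\chi(\varphi),\nonumber
\end{equation}
where $\chi(\varphi)\leq h(\tfrac{\epsilon}{1+\epsilon})$ by the Holevo bound for the two-point ensemble. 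Dropping the nonnegative $D(\mL_-(\varphi)\|\mM^*(\varphi))$ term and maximizing over $\varphi$ then yields $D_\mf(\mN)\leq\max_\varphi D(\mN(\varphi)\|\mM^*(\varphi))\leq(1+\epsilon)D_\mf(\mK)+(1+\epsilon)h(\tfrac{\epsilon}{1+\epsilon})$, with the analogous bound for $\mM$.

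Subtracting the lower bound for $D_\mf(\mM)$ from the upper bound for $D_\mf(\mN)$ makes the $(1+\epsilon)D_\mf(\mK)$ terms cancel and, together with the symmetric exchange of $\mN$ and $\mM$, delivers the claimed $|D_\mf(\mN)-D_\mf(\mM)|\leq\epsilon\kappa+(1+\epsilon)h(\tfrac{\epsilon}{1+\epsilon})$. The main obstacle is the first step: unlike Hermitian trace-class operators, Hermitian-preserving superoperators do not admit a canonical Jordan-type splitting into CP maps whose diamond norms are controlled by $\|\mN-\mM\|_{\diamond}$. The remark's characterization of the diamond norm as a max-relative-entropy distance to the convex set of CPTP maps is precisely the tool that produces $\mK$ dominating both $\mN$ and $\mM$ within a factor $1+\epsilon$, from which the residuals $\mL_\pm$ emerge as bona fide CPTP maps by a direct trace-preservation check.
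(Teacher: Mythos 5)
Your proposal is correct and follows essentially the same route as the paper's proof: the identical channel decomposition $\Omega=\tfrac{1}{1+\epsilon}\mN+\tfrac{\epsilon}{1+\epsilon}\Delta^{-}=\tfrac{1}{1+\epsilon}\mM+\tfrac{\epsilon}{1+\epsilon}\Delta^{+}$ extracted from the diamond-norm SDP, joint convexity on one side, and on the other side your Donald-identity-plus-Holevo step is exactly the paper's explicit entropy-concavity computation in disguise. The only difference is organizational (you route both bounds through $D_\mf(\Omega)$ rather than working pointwise with optimally chosen states and free channels), which does not change the substance of the argument.
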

\begin{proof}
We will be using the notation $J_{AB}^{\mN}=\sum_{x,y}|x\lr y|_{A}\otimes\mN_{A\to B}(|x\lr y|_A)$ for the Choi matrix of a quantum channel $\mN\in\cptp(A\to B)$. The diamond norm has been shown 
to be an SDP~\cite{Watrous-2009}, and in particular can be written as
\be
\|\mN-\mM\|_{\diamond}=\max_\varphi\left\|\mN_{\tA\to B}(\varphi_{A\tA})-\mM_{\tA\to B}(\varphi_{A\tA})\right\|_1=2\min_{\omega_{AB}\geq 0\;;\;\omega_{AB}\geq J^{\mN-\mM}_{AB}}\|\omega_A\|_{\infty}.
\ee
Note that there is always an optimal $\omega_{AB}$ such that $\omega_A=\epsilon I_A$. 
Therefore, the diamond norm can also be expressed as
\begin{align}
\frac{1}{2}\|\mN-\mM\|_{\diamond}
&=\min_{\omega_{AB}\geq 0\;;\;\omega_{AB}\geq J^{\mN-\mM}_{AB}}\|\omega_A\|_{\infty}\\
&=\min\Big\{\lambda : \lambda J_{AB}^{\mE}\geq J_{AB}^{\mN-\mM}\;\;;\;\;\mE\in\cptp(A\to B)\Big\}\\
&=\min\Big\{\lambda : \lambda \mE\geq \mN-\mM\;\;;\;\;\mE\in\cptp(A\to B)\Big\}\\
&=\min_{\mE\in\cptp(A\to B)}2^{D_{\max}(\mN-\mM\|\mE)}\equiv 2^{LR_{\cptp}(\mN-\mM)}.
\end{align}
That is, the diamond norm can be viewed as the $2^{D_{\max}}$ distance of 
$\mN-\mM$ to the set of all quantum channels $\cptp(A\to B)$. We point out that the entropy associated with $D_{\max}$ is the min-entropy~\cite{Konig-2009a,Datta-2013c}, and a direct relation between the min-entropy and the diamond norm of channels have been shown in~\cite{}).

Define the CPTP maps $\Delta_{\pm}$ in terms of the optimal matrix $\omega_{AB}$ as (recall that $\omega_A=\epsilon I_A$)
\be
J^{\Delta^{+}}_{AB}\eqdef\epsilon^{-1}\omega_{AB}\quad\text{and}\quad J^{\Delta^{-}}_{AB}\eqdef J^{\Delta^{+}}_{AB}-\epsilon^{-1}J^{\mN-\mM}_{AB}\;.
\ee
Note that
\be
  J^{\mN-\mM}_{AB}=\epsilon J^{\Delta^{+}-\Delta^{-}}_{AB}.
\ee
Dividing both sides by $1+\epsilon$  gives
\be
  \Omega_{A\to B} \eqdef \frac{1}{1+\epsilon}\mN+\frac{\epsilon}{1+\epsilon}\Delta^{-}
                  =      \frac{1}{1+\epsilon}\mM+\frac{\epsilon}{1+\epsilon}\Delta^{+}\;.
\ee
Define also
$$\mE_{A\to B}\eqdef\frac{1}{1+\epsilon}\mE^{1}_{A\to B}+\frac{\epsilon}{1+\epsilon}\mE^2_{A\to B}\;,$$ 
where $\mE^1_{A\to B},\mE^2_{A\to B}\in\mf(A\to B)$ are free quantum channels.
With this at hand, for any channels as above we have 
\be
  \label{1111}
  D\left(\Omega_{A\to B}(\varphi_{RA})\big\|\mE_{A\to B}(\varphi_{RA})\right)
    \leq \frac{1}{1+\epsilon} D\left(\mN_{A\to B}(\varphi_{RA})\big\|\mE_{A\to B}^1(\varphi_{RA})\right)
            + \frac{\epsilon}{1+\epsilon} 
              D\left(\Delta^{-}_{A\to B}(\varphi_{RA})\big\|\mE_{A\to B}^2(\varphi_{RA})\right).
\ee
On the other hand,
\begin{align}
  \label{2222}
  D&\left(\Omega_{A\to B}(\varphi_{RA})\big\|\mE_{A\to B}(\varphi_{RA})\right)=-H\left(\Omega_{A\to B}(\varphi_{RA})\right)-\tr\left(\Omega_{A\to B}(\varphi_{RA})\big\|\mE_{A\to B}(\varphi_{RA})\right)\nonumber\\
  &\geq -h\left(\frac{\epsilon}{1+\epsilon}\right)-\frac{1}{1+\epsilon}H\left(\mM_{A\to B}(\varphi_{RA})\right)
-\frac{\epsilon}{1+\epsilon} H\left(\Delta^{+}_{A\to B}(\varphi_{RA})\right)\nonumber\\
&\quad-\frac{1}{1+\epsilon}\tr\left(\mM_{A\to B}(\varphi_{RA})\log\mE_{A\to B}(\varphi_{RA})\right)
-\frac{\epsilon}{1+\epsilon}\tr\left(\Delta^{+}_{A\to B}(\varphi_{RA})\log\mE_{A\to B}(\varphi_{RA})\right)\nonumber\\
&=-h\left(\frac{\epsilon}{1+\epsilon}\right)+\frac{1}{1+\epsilon}D\left(\mM_{A\to B}(\varphi_{RA})\big\|\mE_{A\to B}(\varphi_{RA})\right)+\frac{\epsilon}{1+\epsilon} D\left(\Delta^{+}_{A\to B}(\varphi_{RA})\big\|\mE_{A\to B}(\varphi_{RA})\right).
\end{align}
Combining both~\eqref{1111} and~\eqref{2222} gives
\ba
\frac{1}{1+\epsilon}&\Big(D\left(\mM_{A\to B}(\varphi_{RA})\big\|\mE_{A\to B}(\varphi_{RA})\right)-D\left(\mN_{A\to B}(\varphi_{RA})\big\|\mE_{A\to B}^{1}(\varphi_{RA})\right)\Big)\\
&\leq h\left(\frac{\epsilon}{1+\epsilon}\right)+\frac{\epsilon}{1+\epsilon}\Big(D\left(\Delta^{-}_{A\to B}(\varphi_{RA})\big\|\mE_{A\to B}^2(\varphi_{RA})\right)-D\left(\Delta^{+}_{A\to B}(\varphi_{RA})\big\|\mE_{A\to B}(\varphi_{RA})\right)\Big).
\ea
In particular, 
\be
D\left(\mM_{A\to B}(\varphi_{RA})\big\|\mE_{A\to B}(\varphi_{RA})\right)-D\left(\mN_{A\to B}\big\|\mE_{A\to B}^{1}\right)
\leq (1+\epsilon)h\left(\frac{\epsilon}{1+\epsilon}\right)+\epsilon D\left(\Delta^{-}_{A\to B}\big\|\mE_{A\to B}^2\right).
\ee
Finally, choosing $\varphi_{RA}$, $\mE^1$, and $\mE^2$, such that 
\ba
&D\left(\mN_{A\to B}\big\|\mE_{A\to B}^{1}\right)=D_{\mf}\left(\mN_{A\to B}\right)\\
&D\left(\Delta^{-}_{A\to B}\big\|\mE_{A\to B}^{2}\right)=D_{\mf}\left(\Delta^{-}_{A\to B}\right)\\
&D\left(\mM_{A\to B}(\varphi_{RA})\big\|\mE_{A\to B}(\varphi_{RA})\right)=D\left(\mM_{A\to B}\big\|\mE_{A\to B}\right)\geq D_\mf(\mM_{A\to B}),
\ea
we conclude that
\ba
D_{\mf}\left(\mM_{A\to B}\right)-D_\mf\left(\mN_{A\to B}\right)
\leq (1+\epsilon)h\left(\frac{\epsilon}{1+\epsilon}\right)+\epsilon D_\mf\left(\Delta^{-}_{A\to B}\right)
\leq (1+\epsilon)h\left(\frac{\epsilon}{1+\epsilon}\right)+\epsilon \kappa.
\ea
This completes the proof.
\end{proof}

The proof above can be adjusted in order to prove the asymptotic continuity of $E_\mf$. However, it will be very useful to prove a slightly stronger version of the asymptotic continuity that incorporate both $D_\mf$ and $E_\mf$ as special cases. We will use this version in the subsequent sections.

\subsubsection{Stronger version}
Let $\mS(RA)$ be a set of density matrices in $\mD(RA)$. For any $\mN\in\cptp(A\to B)$ denote
\be
E_{\mf,\mS}(\mN)\eqdef \min_{\mM\in\mf(A\to B)}\max_{\rho\in\mS(RA)}D\left(\mN_{A\to B}(\rho_{RA})\|\mM_{A\to B}(\rho_{RA})\right)\;.
\ee
We will assume here that the extreme points of $\mS(RA)$ are pure states, so that w.l.o.g. $|R|=|A|$ and there is no need to take supremum over $|R|$.

\begin{lemma}\label{ac}
Let $\mf$ be a convex resource theory admitting the tensor product structure. Suppose also that for any system $B$, $\mf(B)$ contains a full rank state. For a fixed dimension $|R|$, let $\mS(RA)$ be a set of density matrices in $\mD(RA)$, whose extreme points are pure states. Further, let $\mN\in\cptp(A\to B)$, and let $\{\mM^\varphi\}_{\varphi\in\mS(RA)}$ be a set of CP maps (not necessarily channels) with the  property
\be\label{diam}
\left\|\mN_{A\to B}(\varphi_{RA})-\mM_{A\to B}^{\varphi}(\varphi_{RA})\right\|_1\leq\epsilon\quad\forall\;\varphi\in\mS(RA).
\ee
Then,
\begin{align}
  \label{asy}
  E_{\mf,\mS}(\mN_{A\to B}) 
    &-\max_{\varphi\in\mS(RA)}\min_{\mE\in\mf(A\to B)}
      D\left(\mM_{A\to B}^{\varphi}(\varphi_{RA})\|\mE_{A\to B}(\varphi_{RA})\right)\nonumber\\
    &\leq f(\epsilon)\log |AB|+\tr\left[\left(\gamma_{R}
             -\tr_B\big[\mM_{A\to B}^{\gamma}(\gamma_{RA})\big]\right)\log\gamma_R^{-1}\right],
\end{align}
where $f(\epsilon)$ is independent on the dimensions and satisfies 
$\lim_{\epsilon\to 0}f(\epsilon)=0$, and $\gamma_{RA}\in\mS(RA)$ is a pure 
state defined below in~\eqref{chi}.
\end{lemma}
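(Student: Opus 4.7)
The plan is to invoke Theorem~\ref{minmax} to rewrite
\[
  E_{\mf,\mS}(\mN) = \max_{\varphi\in\mS(RA)} \min_{\mE\in\mf(A\to B)} D\bigl(\mN_{A\to B}(\varphi)\,\big\|\,\mE_{A\to B}(\varphi)\bigr),
\]
and then observe that the objective is convex in $\varphi$ (joint convexity of $D$ with $\mN$ and $\mE$ linear), so the maximum over the convex set $\mS(RA)$ is attained at a pure extreme point $\gamma_{RA}$. Setting $\varphi=\gamma$ on the right-hand side of the claim too, the whole inequality reduces to a state-level comparison: bound
\[
  \min_{\mE\in\mf(A\to B)} D(\rho\,\|\,\mE(\gamma)) - \min_{\mE\in\mf(A\to B)} D(\tilde\tau\,\|\,\mE(\gamma)) \leq f(\epsilon)\log|AB| + \tr\!\bigl[(\gamma_R - \tr_B\tilde\tau)\log\gamma_R^{-1}\bigr],
\]
where $\rho:=\mN(\gamma)$ and $\tilde\tau:=\mM^\gamma(\gamma)$ satisfy $\|\rho-\tilde\tau\|_1\leq\epsilon$, but $\tilde\tau$ need not be a state since $\mM^\gamma$ is only CP.

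\textbf{State-level interpolation.} I would mimic the construction of the weaker version, transposed from channels to states. Perform a Jordan--Hahn decomposition $\rho-\tilde\tau=X_+-X_-$, so $X_\pm\geq 0$ and $\tr X_++\tr X_-\leq\epsilon$. Define the positive operator $\omega_{RB}:=\rho+X_-=\tilde\tau+X_+$ of trace $r=1+\tr X_-\in[1,1+\epsilon]$, so that $\hat\omega:=\omega/r$ is a convex combination of $\rho$ and $\hat X_-:=X_-/\tr X_-$ in one way, and of $\hat\tau:=\tilde\tau/\tr\tilde\tau$ and $\hat X_+:=X_+/\tr X_+$ in another. For any state $\sigma_{RB}$, concavity of the von Neumann entropy under two-state mixtures together with linearity of $\tr[\hat\omega\log\sigma_{RB}]$ yields a lower bound on $D(\hat\omega\|\sigma_{RB})$ in terms of the $\hat\tau,\hat X_+$ decomposition (with a binary-entropy correction $h(\tr X_+/r)$), while joint convexity of $D$ gives an upper bound in terms of $D(\rho\|\sigma_{RB}^{(1)})$ and $D(\hat X_-\|\sigma_{RB}^{(2)})$ provided $\sigma_{RB}=\tfrac{1}{r}\sigma_{RB}^{(1)}+\tfrac{\tr X_-}{r}\sigma_{RB}^{(2)}$ with each $\sigma_{RB}^{(i)}=\mE^{(i)}(\gamma)$, $\mE^{(i)}\in\mf$. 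Combining the two, dropping the non-negative term $\tr X_+ D(\hat X_+\|\sigma_{RB})$, and converting back via the identity $D(\tilde\tau\|\sigma)=\tr(\tilde\tau)\log\tr(\tilde\tau)+\tr(\tilde\tau)\,D(\hat\tau\|\sigma)$, one obtains a state-level continuity bound of the form $D(\tilde\tau\|\sigma_{RB})-D(\rho\|\sigma_{RB}^{(1)})\leq (1+\epsilon)h\!\left(\tfrac{\epsilon}{1+\epsilon}\right)+\epsilon D(\hat X_-\|\sigma_{RB}^{(2)})+\tr\tilde\tau\log\tr\tilde\tau$.

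\textbf{Matching the correction term.} I would then choose $\sigma_{RB}^{(1)}$ to attain the minimum for $\rho$ (so that $D(\rho\|\sigma_{RB}^{(1)})=\min_\mE D(\rho\|\mE(\gamma))$) and $\sigma_{RB}^{(2)}=\gamma_R\otimes\sigma_B$, the action on $\gamma$ of the replacement channel onto a fixed full-rank $\sigma_B\in\mf(B)$. The bound on $D(\hat X_-\|\gamma_R\otimes\sigma_B)$ then contributes $\epsilon\cdot O(\log|RB|+\log\|\gamma_R^{-1}\|_\infty+\log\|\sigma_B^{-1}\|_\infty)$. More importantly, I would work with $\sigma_{RB}=\gamma_R\otimes\sigma_B$ as the witness free channel evaluation (so the left minimum is replaced by the corresponding upper bound), and exploit the additivity of the relative entropy on the product state to expand
\[
  D(\rho\|\gamma_R\otimes\sigma_B)-D(\tilde\tau\|\gamma_R\otimes\sigma_B) = H_\ast(\tilde\tau)-H(\rho) + \tr[(\gamma_R-\tr_B\tilde\tau)\log\gamma_R^{-1}] + \tr[(\tilde\tau_B-\rho_B)\log\sigma_B^{-1}],
\]
using $\tr_B\rho=\gamma_R$ since $\mN$ is TP (here $H_\ast$ is the natural extension of von Neumann entropy to positive operators). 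The first summand is handled by a Fannes--Audenaert-type estimate giving $O(\epsilon\log|RB|)$, the third is at most $\epsilon\log\|\sigma_B^{-1}\|_\infty$, and the middle term is exactly the stated correction. Collecting everything yields $f(\epsilon)=O(\epsilon+h(\epsilon/(1+\epsilon)))$.

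\textbf{Main obstacle.} The delicate point is that $\tr[(\gamma_R-\tr_B\tilde\tau)\log\gamma_R^{-1}]$ can be arbitrarily large if $\gamma_R$ has small eigenvalues, and so cannot be absorbed into a term of the form $f(\epsilon)\log|AB|$. The proof must therefore extract this correction \emph{exactly} rather than bound it crudely by $\epsilon\log\|\gamma_R^{-1}\|_\infty$, which is what forces the comparison state to take the product form $\gamma_R\otimes\sigma_B$ so that additivity of $D$ separates the $R$-marginal shift cleanly from the $B$-part. Simultaneously running the state-level interpolation while keeping the non-TP normalization shift $\tr\tilde\tau\log\tr\tilde\tau$ (which is $O(\epsilon)$) cleanly under control is the main bookkeeping challenge.
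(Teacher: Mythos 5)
Your proposal assembles the right ingredients --- the Jordan decomposition of $\mN(\gamma_{RA})-\mM^\gamma(\gamma_{RA})$, an interpolating operator, joint convexity of $D$ against concavity of the entropy, a replacement channel onto a full-rank free state, and exact extraction of the $R$-marginal term --- and these are indeed the ingredients of the paper's proof. But the assembly has two genuine problems. First, your interpolation inequality points the wrong way: you upper-bound $D(\hat\omega\|\sigma)$ via joint convexity using the $(\rho,\hat X_-)$ decomposition and lower-bound it via concavity using the $(\hat\tau,\hat X_+)$ decomposition, which after dropping $D(\hat X_+\|\sigma)\geq 0$ yields $D(\tilde\tau\|\sigma)\leq D(\rho\|\sigma^{(1)})+\cdots$, i.e.\ an upper bound on $\min_{\mE}D\big(\mM^\gamma(\gamma_{RA})\|\mE(\gamma_{RA})\big)$ in terms of $E_{\mf,\mS}(\mN)$. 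The lemma needs the reverse comparison, and your own stated target ($\min_\mE D(\rho\|\mE(\gamma))-\min_\mE D(\tilde\tau\|\mE(\gamma))\leq\cdots$) is inconsistent with the bound you derive. Swapping the roles (joint convexity on the $\mM^\gamma$ side against a free mixture $\mE=\tfrac{1}{1+\epsilon}\mE^1+\tfrac{\epsilon}{1+\epsilon}\mE^2$ with $\mE^1$ arbitrary, concavity on the $\mN$ side) is the correct setup, but then the term $\tr[X_+]\,D(\hat X_+\|\sigma^{(2)})$ lands on the upper-bound side with a positive coefficient and can no longer be dropped; bounding it crudely gives $\epsilon\log\|\gamma_R^{-1}\|_\infty$, which is precisely what you correctly identify as inadmissible.

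Second, your proposed cure for that obstacle --- evaluating both divergences at the fixed product state $\gamma_R\otimes\sigma_B$ and using additivity --- sacrifices the minimum over free channels on the $\mM^\gamma$ side: it leaves you with $D\big(\mM^\gamma(\gamma_{RA})\|\gamma_R\otimes\sigma_B\big)$, which can exceed $\max_\varphi\min_{\mE}D\big(\mM^\varphi(\varphi_{RA})\|\mE(\varphi_{RA})\big)$ by an amount of order $\log|AB|$ independent of $\epsilon$ (e.g.\ when $\mM^\gamma$ is itself free). The paper's bridge, absent from your proposal, is operator monotonicity of the logarithm: taking $\mE^2$ to be the replacement channel onto the full-rank $\omega_B\in\mf(B)$, one has $\log\mE(\gamma_{RA})\leq\log\big(\tfrac{t+\epsilon}{1+\epsilon}\mE^2(\gamma_{RA})\big)$ with $t=2^{D_{\max}(\mE^1\|\mE^2)}\leq|A|\,\|\omega_B^{-1}\|_\infty$, so the two dangerous log terms combine into $\tr\big[(\tau^{+}-\tau^{-})\log(\gamma_R\otimes\omega_B)\big]$ at a cost of only $\epsilon(\log|A|+\kappa)$ while $\mE^1$ remains arbitrary (hence minimizable). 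Only then does the difference $\tau^{+}-\tau^{-}=\mM^\gamma(\gamma_{RA})-\mN(\gamma_{RA})$ hit $\log\gamma_R$ and produce the exact correction $\tr\big[(\gamma_{R}-\tr_B[\mM^{\gamma}(\gamma_{RA})])\log\gamma_R^{-1}\big]$, with the $B$-part controlled by $\epsilon\kappa$. (A minor further point: $\varphi\mapsto\min_\mE D(\mN(\varphi)\|\mE(\varphi))$ is a minimum of convex functions and hence not convex, so taking $\gamma_{RA}$ pure requires the minimax theorem rather than convexity of the objective.)
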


\begin{remark}
For the case that for all $\varphi\in\mS(RA)$, $\mM^\varphi=\mM\in\cptp(A\to B)$ 
is CPTP and $\mS(RA)=\mD(RA)$ with $|R|=|A|$, Eq.~\eqref{diam} reduces to 
$\|\mM-\mN\|_\diamond\leq\epsilon$, and since $\mM$ is trace preserving, 
Eq.~\eqref{asy} reduces to
\be
  D_{\mf}(\mN_{A\to B})-D_{\mf}(\mM_{A\to B})\leq f(\epsilon)\log |AB|\;.
\ee
That is, we reproduce that $D_{\mf}(\mN_{A\to B})$ is asymptotically continuous. 
\end{remark}
\begin{remark}
For the case that for all $\varphi\in\mS(RA)$, $\mM^\varphi=\mM\in\cptp(A\to B)$ is CPTP and $\mS(RA)=\mf(RA)$, the lemma above gives
 \be
E_{\mf}(\mN_{A\to B})-E_{\mf}(\mM_{A\to B})\leq f(\epsilon)\log |AB|\;.
\ee
That is, $E_{\mf}$ is also asymptotically continuous.
\end{remark}
\begin{remark}
Since the trace norm is contractive under partial trace, from~\eqref{diam} it follows that 
\be
\left\|\gamma_{R}-\tr_B\big[\mM_{A\to B}(\gamma_{RA})\big]\right\|_1\leq\epsilon\;.
\ee
Therefore, we have the bound
\be
\tr\left[\left(\gamma_{R}-\tr_B\big[\mM_{A\to B}(\gamma_{RA})\big]\right)\log\gamma_R^{-1}\right]\leq \epsilon\log\|\gamma_R^{-1}\|_{\infty}\;.
\ee
\end{remark}

\begin{proof}
Denote by $J_{RB}^{\mN}\eqdef\mN_{A\to B}(\phi^{+}_{RA})$ the Choi matrix of 
a quantum channel $\mN\in\cptp(A\to B)$, and by
\be\label{kappa}
  \kappa\eqdef\min_{\omega\in\mf(B)}\log\|\omega_B^{-1}\|_{\infty}\;.
\ee 
Furthermore, for any $\varphi\in\mS(RA)$ denote by 
\be
\tau^{\pm,\varphi}_{RB}=\big(\mM^\varphi_{A\to B}\left(\varphi_{RA}\right)-\mN_{A\to B}\left(\varphi_{RA}\right)\big)_{\pm}
\ee
and observe that 
$\tr\left[\tau^{+,\varphi}_{RB}+\tau^{-,\varphi}_{RB}\right]\leq\epsilon$.
By definition, $\mM^\varphi_{A\to B}\left(\varphi_{RA}\right)-\mN_{A\to B}\left(\varphi_{RA}\right)=\tau^{+,\varphi}_{RB}-\tau^{-,\varphi}_{RB}$ so that
\be
\omega_{RB}\eqdef\frac{1}{1+\epsilon}\mM^\varphi_{A\to B}\left(\varphi_{RA}\right)+\frac{\epsilon}{1+\epsilon}\left(\frac{1}{\epsilon}\tau^{-,\varphi}_{RB}\right)=\frac{1}{1+\epsilon}\mN_{A\to B}(\varphi_{RA})+\frac{\epsilon}{1+\epsilon}\left(\frac{1}{\epsilon}\tau^{+,\varphi}_{RB}\right).
\ee
Also, define
$$\mE_{A\to B}\eqdef\frac{1}{1+\epsilon}\mE^{1}_{A\to B}+\frac{\epsilon}{1+\epsilon}\mE^2_{A\to B}\;,$$ where $\mE^1_{A\to B},\mE^2_{A\to B}\in\mf(A\to B)$ are free quantum channels.
With these definitions, for any channels as above we have from the joint convexity of the relative entropy
\ba
  \label{1}
  D&\left(\omega_{RB}\big\|\mE_{A\to B}(\varphi_{RA})\right)
    \leq \frac{1}{1+\epsilon}D\left(\mM^\varphi_{A\to B}\left(\varphi_{RA}\right)\big\|\mE_{A\to B}^1(\varphi_{RA})\right)+\frac{\epsilon}{1+\epsilon} D\left(\frac{1}{\epsilon}\tau^{-,\varphi}_{RB}\big\|\mE_{A\to B}^2(\varphi_{RA})\right)\\
   &\phantom{====}
    =\frac{1}{1+\epsilon}D\left(\mM^\varphi_{A\to B}\left(\varphi_{RA}\right)\big\|\mE_{A\to B}^1(\varphi_{RA})\right)-\frac{1}{1+\epsilon} \tr\left[\tau^{-,\varphi}_{RB}\log\mE_{A\to B}^2(\varphi_{RA})\right]-\frac{\epsilon}{1+\epsilon} H\left(\frac{1}{\epsilon}\tau^{-,\varphi}_{RB}\right).
\ea
On the other hand,
\begin{align}\label{2}
D&\left(\omega_{RB}\big\|\mE_{A\to B}(\varphi_{RA})\right)=-H\left(\omega_{RB}\right)-\tr\left(\omega_{RB}\big\|\mE_{A\to B}(\varphi_{RA})\right)\nonumber\\
&\geq -h\left(\frac{\epsilon}{1+\epsilon}\right)-\frac{1}{1+\epsilon}H\left(\mN_{A\to B}(\varphi_{RA})\right)-\frac{\epsilon}{1+\epsilon} H\left(\frac{1}{\epsilon}\tau^{+,\varphi}_{RB}\right)\nonumber\\
&\quad-\frac{1}{1+\epsilon}\tr\left(\mN_{A\to B}(\varphi_{RA})\log\mE_{A\to B}(\varphi_{RA})\right)
-\frac{1}{1+\epsilon}\tr\left(\tau^{+,\varphi}_{RB}\log\mE_{A\to B}(\varphi_{RA})\right)\nonumber\\
&=\frac{1}{1+\epsilon}D\left(\mN_{A\to B}(\varphi_{RA})\big\|\mE_{A\to B}(\varphi_{RA})\right)-\frac{1}{1+\epsilon}\tr\left(\tau^{+,\varphi}_{RB}\log\mE_{A\to B}(\varphi_{RA})\right)-\frac{\epsilon}{1+\epsilon} H\left(\frac{1}{\epsilon}\tau^{+,\varphi}_{RB}\right)-h\left(\frac{\epsilon}{1+\epsilon}\right).
\end{align}
Combining both~\eqref{1} and~\eqref{2} gives
\begin{align}
D&\left(\mN_{A\to B}(\varphi_{RA})\big\|\mE_{A\to B}(\varphi_{RA})\right)\\
&\leq D\left(\mM^\varphi_{A\to B}\left(\varphi_{RA}\right)\big\|\mE_{A\to B}^{1}(\varphi_{RA})\right)+\tr\left[\tau^{+,\varphi}_{RB}\log\mE_{A\to B}(\varphi_{RA})\right]-\tr\left(\tau^{-,\varphi}_{RB}\log\mE_{A\to B}^2(\varphi_{RA})\right)\\
&+(1+\epsilon)h\left(\frac{\epsilon}{1+\epsilon}\right)+\epsilon\left(H\left(\frac{1}{\epsilon}\tau^{+,\varphi}_{RB}\right)-H\left(\frac{1}{\epsilon}\tau^{-,\varphi}_{RB}\right)\right).
\end{align}
We now make a few observations. First, note that the last term in the 
equation above is bounded by
\be
\epsilon\left(H\left(\frac{1}{\epsilon}\tau^{+,\varphi}_{RB}\right)-H\left(\frac{1}{\epsilon}\tau^{-,\varphi}_{RB}\right)\right)\leq\epsilon\log|RB|\;.
\ee
Second, denote by $t\equiv 2^{D_{\max}(\mE^1\|\mE^2)}$ the smallest number 
satisfying $t\mE^2\geq\mE^1$, and observe that
\ba
\tr\left[\tau^{+,\varphi}_{RB}\log\mE_{A\to B}(\varphi_{RA})\right]&=\tr\left[\tau^{+,\varphi}_{RB}\log\left(\frac{1}{1+\epsilon}\mE^{1}_{A\to B}(\varphi_{RA})+\frac{\epsilon}{1+\epsilon}\mE^2_{A\to B}(\varphi_{RA})\right)\right]\\
&\leq \tr\left[\tau^{+,\varphi}_{RB}\log\left(\frac{t+\epsilon}{1+\epsilon}\mE^2_{A\to B}(\varphi_{RA})\right)\right]\\
&=\tr[\tau^{+,\varphi}_{RB}]\log\left(\frac{t+\epsilon}{1+\epsilon}\right)+\tr\left[\tau^{+,\varphi}_{RB}\log\mE_{A\to B}^2(\varphi_{RA})\right]\\
&\leq\epsilon\log t+\epsilon\log(1+\epsilon)+\tr\left[\tau^{+,\varphi}_{RB}\log\mE_{A\to B}^2(\varphi_{RA})\right],
\ea
where the first inequality follows from the operator monotonicity of the log function.
Therefore,
\ba\label{g1}
D\left(\mN_{A\to B}(\varphi_{RA})\big\|\mE_{A\to B}(\varphi_{RA})\right)&\leq D\left(\mM^{\varphi}_{A\to B}\left(\varphi_{RA}\right)\big\|\mE_{A\to B}^{1}(\varphi_{RA})\right)
+(1+\epsilon)h\left(\frac{\epsilon}{1+\epsilon}\right)+\epsilon\log\left(1+\epsilon\right)\\
&+\epsilon\log|AB|+\epsilon D_{\max}(\mE^1\|\mE^2)+\tr\left[(\tau^{+,\varphi}_{RB}-\tau^{-,\varphi}_{RB})\log\mE_{A\to B}^2(\varphi_{RA})\right].
\ea
Now, take $\mE^2_{A\to B}(X)\eqdef\tr[X]\omega_B$ to be a constant channel with the full rank state $\omega_B\in\mf(B)$ optimizing~\eqref{kappa}. Then,
\ba
D_{\max}(\mE^1\|\mE^2)&=\log\min\Big\{t\geq 0\;:\;tI^A\otimes\omega_B\geq J^{\mE^1}_{AB}\Big\}\\
&=\log \|\omega_B^{-1/2}J^{\mE^1}_{AB}\omega_B^{-1/2}\|_{\infty}\leq \log\|\omega_{B}^{-1}\|_{\infty}\|J^{\mE^1}_{AB}\|_{\infty}\leq\log|A|+\kappa.
\ea 
Hence, after minimizing both sides of~\eqref{g1} over $\mE^1\in\mf(A\to B)$ we get
\ba
\min_{\mE\in\mf(A\to B)}D\left(\mN_{A\to B}(\varphi_{RA})\big\|\mE_{A\to B}(\varphi_{RA})\right)&\leq \min_{\mE\in\mf(A\to B)}D\left(\mM^{\varphi}_{A\to B}\left(\varphi_{RA}\right)\big\|\mE_{A\to B}(\varphi_{RA})\right)
+(1+\epsilon)h\left(\frac{\epsilon}{1+\epsilon}\right)\\
&+\epsilon\log\left(1+\epsilon\right)+\epsilon\log|A|^2|B|+\epsilon \kappa+\tr\left[(\tau^{+,\varphi}_{RB}-\tau^{-,\varphi}_{RB})\log(\varphi_R\otimes\omega_B)\right].
\ea
Furthermore, let $\gamma_{RA}\in\mS(RA)$ be such that
\ba\label{chi}.
E_{\mf,\mS}\left(\mN_{A\to B}\right)=\min_{\mE\in\mf(A\to B)}D\left(\mN_{A\to B}(\gamma_{RA})\big\|\mE_{A\to B}(\gamma_{RA})\right)\;.
\ea
W.l.o.g. we can assume that $\gamma_{RA}$ is pure since the extreme points of 
$\mS(RA)$ are pure states. With this choice we have
\ba
E_{\mf,\mS}\left(\mN_{A\to B}\right)&\leq \max_{\varphi\in\mS(RA)}\min_{\mE\in\mf(A\to B)}D\left(\mM^{\varphi}_{A\to B}\left(\varphi_{RA}\right)\big\|\mE_{A\to B}(\varphi_{RA})\right)
+(1+\epsilon)h\left(\frac{\epsilon}{1+\epsilon}\right)\\
&\phantom{===}
+\epsilon\log\left(1+\epsilon\right)+\epsilon\log|A|^2|B|+\epsilon \kappa+\tr\left[(\tau^{+,\gamma}_{RB}-\tau^{-,\gamma}_{RB})\log(\gamma_R\otimes\omega_B)\right]
\label{lastterm}
\ea
What is therefore left is to bound the last term in the RHS of~\eqref{lastterm}. 
For pure $\gamma_{RA}$ (with $|R|=|A|$) set 
$|\gamma_{RA}\ra=\gamma_R^{1/2}\otimes I_{\tR}|\phi^{+}_{RA}\ra$, so that
\begin{align}
&\tr\left[(\tau^{+,\gamma}_{RB}-\tau^{-,\gamma}_{RB})\log(\gamma_{R}\otimes\omega_B)\right]\nonumber\\
&=\tr\left[\left(\mM^{\gamma}_{A\to B}(\gamma_{RA})-\mN_{A\to B}\left(\gamma_{RA}\right)\right)\log(\gamma_{R}\otimes\omega_B)\right]\nonumber\\
&=\tr\left[\big(\mM^{\gamma}_{A\to B}\left(\gamma_{A}\right)-\mN_{A\to B}\left(\gamma_{A}\right)\big)\log(\omega_B)\right]+\tr\left[\eta_R\log(\gamma_{R})\right],
\end{align}
where
\ba\label{38}
\eta_R&\eqdef\tr_B\left[\mM^{\gamma}_{A\to B}(\gamma_{RA})-\mN_{A\to B}\left(\gamma_{RA}\right)\right]\\
&=\tr_B\left[\mM^{\gamma}_{A\to B}(\gamma_{RA})\right]-\gamma_R\\
&=\gamma_R^{1/2}\left(J^{\mM^{\gamma}}_{R}-I_{R}\right)\gamma_R^{1/2}\;,
\ea
where $J^{\mM^{\gamma}}_{R}$ is the marginal of the Choi matrix of $\mM^{\gamma}$.
Further, using the fact that for any Hermitian operator $X$ we have $X\leq |X|\eqdef X_++X_-$ and $\tr|X|=\|X\|_1$,
\begin{align}
\tr\left[\big(\mM^{\gamma}_{A\to B}\left(\gamma_A\right)-\mN_{A\to B}\left(\gamma_{A}\right)\big)\log(\omega_B)\right]
&=\tr\left[\big(\mN_{A\to B}\left(\gamma_{A}\right)-\mM^{\gamma}_{A\to B}\left(\gamma_A\right)\big)\log(\omega_B^{-1})\right]\\
&\leq\tr\left[\big|\mN_{A\to B}\left(\gamma_{A}\right)-\mM^{\gamma}_{A\to B}\left(\gamma_A\right)\big|\log(\omega_B^{-1})\right]\\
&\leq\left\|\mN_{A\to B}\left(\gamma_{A}\right)-\mM^{\gamma}_{A\to B}\left(\gamma_A\right)\right\|_1\log\|\omega_{B}^{-1}\|_{\infty}\\
&\leq\epsilon \kappa.
\end{align}
Combining everything we get
\begin{align}
E_{\mf,\mS}\left(\mN_{A\to B}\right)
\leq \max_{\varphi\in\mS(RA)}\min_{\mE\in\mf(A\to B)}D\left(\mM_{A\to B}^{\varphi}\left(\varphi_{RA}\right)\big\|\mE_{A\to B}(\varphi_{RA})\right)+\tr\left[\gamma_R^{1/2}\left(I_{R}-J^{\mM^\gamma}_{R}\right)\gamma_R^{1/2}\log\gamma_R^{-1}\right]+g(\epsilon),
\end{align}
where
\be
g(\epsilon)\eqdef \epsilon\left(\log|A|^2|B|+2\kappa\right)+(1+\epsilon)h\left(\frac{\epsilon}{1+\epsilon}\right)+\epsilon\log\left(1+\epsilon\right).
\ee
This completes the proof.
\end{proof}

\section{The Asymptotic Equipartition Property (AEP)}

As defined in the main text the \emph{logarithmic robustness of a dynamical resource} 
$\mN\in\cptp(A\to B)$ is defined as
\ba
LR_\mf(\mN_{A\to B})&\eqdef\min_{\mE\in\mf(A\to B)}D_{\max}(\mN\|\mE)\\
&=\log_2\min\Big\{t\;:\;t\mE_{A\to B}\geq\mN_{A\to B}\;\;;\;\; \mE\in\mf(A\to B)\Big\},
\ea
where the notation $t\mE_{A\to B}\geq\mN_{A\to B}$ means that $t\mE_{A\to B}-\mN_{A\to B}$ is a CP map. We also define
\be
\underline{LR}_{\mf}(\mN_{A\to B})
  \eqdef \min_{\mE\in\mf(A\to B)}\sup_{\varphi\in\mf(RA)}
            D_{\max}\big(\mN_{A\to B}(\varphi_{RA})\|\mE_{A\to B}(\varphi_{RA})\big).
\ee
We will assume here that the extreme point of $\mf(RA)$ are pure states so that 
the optimization above over $\mf(RA)$ can be taken to be over pure states with $|R|=|A|$.

\subsubsection{Standard Smoothing}

The smoothed version of the logarithmic robustness can be defined as
\be
  \widetilde{LR}_\mf^{\epsilon}(\mN)\eqdef\min_{\mN'\in B_\epsilon(\mN)}LR_\mf(\mN'),
\ee
with the diamond-norm ball
\be 
  B_\epsilon(\mN)\eqdef\Big\{\mN'\in\cptp(A\to B)\;:\;\|\mN'-\mN\|_{\diamond}\leq \epsilon\Big\}.
\ee
The above smoothing of $LR_\mf$ is a straightforward generalization from states 
to channels. While we will adopt a different type of smoothing later on, we 
start by showing that the regularization of $\widetilde{LR}_\mf^{\epsilon}$ 
provides an upper bound on the regularization of $D_\mf$.

\begin{lemma}
Let $\mf$ be a convex QRT, and define
\ba
  \label{regularization}
  D_\mf^{\infty}(\mN)\eqdef\liminf_{n\to\infty}\frac{1}{n}D_\mf(\mN^{\otimes n})
    \quad;\quad 
  \widetilde{LR}_\mf^{\infty}(\mN)\eqdef\lim_{\epsilon\to 0}\liminf_{n\to\infty}\frac{1}{n}\widetilde{LR}_\mf^{\epsilon}(\mN^{\otimes n}).
\ea
Then,
\be
  \label{aepp}
  D_\mf^{\infty}(\mN)\leq \widetilde{LR}_\mf^{\infty}(\mN)\;.
\ee
\end{lemma}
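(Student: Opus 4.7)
The plan is to combine two ingredients: a pointwise inequality $D_\mf\le LR_\mf$, and the asymptotic continuity of $D_\mf$ proved in Theorem~\ref{acn}. The former lets us compare the two quantities on the \emph{same} channel, while the latter absorbs the $\epsilon$-smoothing at a cost that is sub-linear in the number of copies $n$, so that it disappears after dividing by $n$ and sending $\epsilon\to 0$.

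First, I would observe that for any input state $\varphi_{RA}$ and any channels $\mN,\mM\in\cptp(A\to B)$, the state-level inequality $D(\rho\|\sigma)\le D_{\max}(\rho\|\sigma)$ applied to $\rho=\mN_{A\to B}(\varphi_{RA})$ and $\sigma=\mM_{A\to B}(\varphi_{RA})$, together with the fact that the channel $D_{\max}$-divergence is a supremum over input states, yields $D(\mN\|\mM)\le D_{\max}(\mN\|\mM)$. Minimizing over $\mM\in\mf(A\to B)$ then gives $D_\mf(\mN)\le LR_\mf(\mN)$ for every channel $\mN$.

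Second, for each $n$ I pick a near-optimal smoothing $\mN'_n\in B_\epsilon(\mN^{\otimes n})$, i.e.\ one satisfying $\|\mN'_n-\mN^{\otimes n}\|_\diamond\le\epsilon$ and $LR_\mf(\mN'_n)\le\widetilde{LR}_\mf^{\epsilon}(\mN^{\otimes n})+\delta$ for arbitrarily small $\delta>0$. Applying asymptotic continuity (Theorem~\ref{acn}) to the pair $\mN^{\otimes n},\mN'_n\in\cptp(A^n\to B^n)$ gives
\begin{equation*}
  \left|D_\mf(\mN^{\otimes n})-D_\mf(\mN'_n)\right|\le \log(|AB|^n)\,f(\epsilon)=n\log|AB|\,f(\epsilon),
\end{equation*}
where $f$ is dimension-independent and $f(\epsilon)\to 0$ as $\epsilon\to 0^+$. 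Chaining these two inequalities yields
\begin{equation*}
  D_\mf(\mN^{\otimes n})\le LR_\mf(\mN'_n)+n\log|AB|\,f(\epsilon)\le \widetilde{LR}_\mf^{\epsilon}(\mN^{\otimes n})+\delta+n\log|AB|\,f(\epsilon).
\end{equation*}

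Dividing by $n$, letting $\delta\to 0$, taking $\liminf_{n\to\infty}$, and finally letting $\epsilon\to 0^+$ eliminates the residual error $\log|AB|\,f(\epsilon)$ and produces the desired bound $D_\mf^{\infty}(\mN)\le \widetilde{LR}_\mf^{\infty}(\mN)$. The main subtlety is ensuring that the modulus of continuity $f$ in Theorem~\ref{acn} is genuinely dimension-independent, so that after the $1/n$ normalization the error is $O(\log|AB|\,f(\epsilon))$ rather than $O(n)$; this is built into the definition of asymptotic continuity stated just before Theorem~\ref{acn}. Implicit here is the standing hypothesis of that theorem, namely that $\mf(A)$ contains a full-rank state for every system $A$.
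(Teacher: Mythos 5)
Your proposal is correct and follows essentially the same route as the paper's own proof: pick a (near-)optimal smoothed channel $\mN'_n\in B_\epsilon(\mN^{\otimes n})$, use $D\le D_{\max}$ to get $D_\mf(\mN'_n)\le LR_\mf(\mN'_n)\le\widetilde{LR}_\mf^{\epsilon}(\mN^{\otimes n})$, and then invoke the asymptotic continuity of $D_\mf$ so that the error term $n\log|AB|\,f(\epsilon)$ vanishes after dividing by $n$ and sending $\epsilon\to 0$. Your explicit remarks on the dimension-independence of $f$ and on the standing full-rank hypothesis are points the paper leaves implicit, but the argument is the same.
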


\begin{proof}
Let $\mN^{\epsilon}_n\in\cptp(A^n\to B^n)$ and $\mE_n\in\mf(A^n\to B^n)$ be optimal channels such that $\|\mN^{\epsilon}_n-\mN^{\otimes n}\|_{\diamond}\leq\epsilon$ and  $LR_\mf^{\epsilon}(\mN^{\otimes n})=D_{\max}(\mN^{\epsilon}_n\|\mE_n)$. Using the fact that $D_{\max}$ is always greater that the relative entropy $D$, we conclude that
\be
\frac{1}{n}D_\mf(\mN^{\epsilon}_n)\leq\frac{1}{n}D(\mN^{\epsilon}_n\|\mE_n)\leq\frac{1}{n}D_{\max}(\mN^{\epsilon}_n\|\mE_n)=\frac{1}{n}\widetilde{LR}_\mf^{\epsilon}(\mN^{\otimes n}).
\ee
Now, since $D_\mf$ is asymptotically continuous there exists a function $f:\mbb{R}\to \mbb{R}$ with the property $\lim_{\epsilon\to 0}f(\epsilon)=0$ such that 
\be
\frac{1}{n}D_\mf(\mN^{\otimes n})\leq \frac{1}{n}D_\mf(\mN^{\epsilon}_n)+f(\epsilon).
\ee
Therefore, taking the limit $n\to \infty$ followed by $\epsilon\to 0$ on both sides gives
\be
\liminf_{n\to\infty}\frac{1}{n}D_\mf(\mN^{\otimes n})\leq \lim_{\epsilon\to 0}\liminf_{n\to\infty}\frac{1}{n}\widetilde{LR}_\mf^{\epsilon}(\mN^{\otimes n})\;.
\ee
This completes the proof.
\end{proof}

\subsubsection{Liberal Smoothing}

Let
\be 
B_\epsilon^{\varphi}(\mN)\eqdef\Big\{\mN'\in\text{CP}(A\to B)\;:\;\|\mN'_{A\to B}(\varphi_{RA})-\mN_{A\to B}(\varphi_{RA})\|_{1}\leq \epsilon\Big\},
\ee
and consider the following types of smoothing:
\begin{align}\label{smooth2}
  LR_{\mf}^{\epsilon}(\mN) 
    &\eqdef \max_{\varphi\in\mD(RA)}\min_{\mN'\in B_\epsilon^{\varphi}(\mN)}\min_{\mE\in\mf(A\to B)}
                 D_{\max}\big(\mN'_{A\to B}\|\mE_{A\to B}\big), \\
  \underline{LR}_{\mf}^{\epsilon}(\mN) 
    &\eqdef \sup_{\varphi\in\mf(RA)}\min_{\mN'\in B_\epsilon^{\varphi}(\mN)}\min_{\mE\in\mf(A\to B)}
                 D_{\max}\big(\mN'_{A\to B}(\varphi_{RA})\|\mE_{A\to B}(\varphi_{RA})\big).
\end{align}
Note that both of the above smoothings respect the condition that for $\epsilon=0$,
\ba
  LR_{\mf}^{\epsilon=0}(\mN)             &= LR_\mf(\mN), \\
  \underline{LR}_{\mf}^{\epsilon=0}(\mN) &= \underline{LR}_{\mf}(\mN_{A\to B}).
\ea

For each $\varphi\in\mD({RA})$, it holds $B_\epsilon(\mN)\subset B_\epsilon^\varphi(\mN)$, 
hence we have
\be
\widetilde{LR}_\mf^{\epsilon}(\mN)\geq \min_{\mN'\in B_\epsilon^\varphi(\mN)}LR_\mf(\mN')
\ee
Furthermore, since the above equation holds for all $\varphi\in\mD(RA)$ we must have
\be\label{ineq}
\widetilde{LR}_\mf^{\epsilon}(\mN)\geq LR_\mf^{\epsilon}(\mN)\;.
\ee
The above equation holds also even if we define $B_\epsilon^{\varphi}$ with respect to CPCP maps.
That is, define
\be 
{B'}_\epsilon^{\varphi}(\mN)\eqdef\Big\{\mN'\in\cptp(A\to B)\;:\;\|\mN'_{A\to B}(\varphi_{RA})-\mN_{A\to B}(\varphi_{RA})\|_{1}\leq \epsilon\Big\}
\ee
and
\be
{LR'}_{\mf}^{\epsilon}(\mN)\eqdef\max_{\varphi\in\mD(RA)}\min_{\mN'\in {B'}_\epsilon^{\varphi}(\mN)}\min_{\mE\in\mf(A\to B)}D_{\max}\big(\mN'_{A\to B}\|\mE_{A\to B}\big)\;.
\ee
Then, we also have
\be
\widetilde{LR}_\mf^{\epsilon}(\mN)\geq {LR'}_\mf^{\epsilon}(\mN)\;.
\ee
We now show that if the inequality above is strict, then also the inequality in~\eqref{aepp} 
is strict, and consequently the AEP cannot hold with standard smoothing.

\begin{lemma}
Let $\mf$ be a convex QRT, and define
$
{LR'}_\mf^{\infty}(\mN)\eqdef\lim_{\epsilon\to 0}\liminf_{n\to\infty}\frac{1}{n}{LR'}_\mf^{\epsilon}(\mN^{\otimes n})
$. Then,
\be
D_\mf^{\infty}(\mN)\leq {LR'}_\mf^{\infty}(\mN)\;.
\ee
\end{lemma}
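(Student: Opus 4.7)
The argument closely parallels the proof of the previous lemma, but replaces Theorem~\ref{acn} by the state-wise asymptotic continuity of Lemma~\ref{ac}. The strengthening is needed because ${B'}_\epsilon^{\varphi}(\mN^{\otimes n})$ only enforces closeness to $\mN^{\otimes n}$ on the single state $\varphi$, not in diamond norm, so one cannot feed a single smoothed channel into ordinary asymptotic continuity.

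Fix $n$ and $\epsilon>0$. For each $\varphi\in\mD(R^nA^n)$ select (near-)optimal $\mN'_{n,\varphi}\in {B'}_\epsilon^\varphi(\mN^{\otimes n})$ and $\mE_{n,\varphi}\in\mf(A^n\to B^n)$ achieving the two inner minima in the definition of ${LR'}_\mf^\epsilon(\mN^{\otimes n})$. By construction each $\mN'_{n,\varphi}$ is CPTP and satisfies $\|\mN'_{n,\varphi}(\varphi)-\mN^{\otimes n}(\varphi)\|_1\leq\epsilon$. Applying Lemma~\ref{ac} with $\mS=\mD$, channel $\mN^{\otimes n}$, and the family $\{\mN'_{n,\varphi}\}$, the decisive observation is that trace-preservation of each $\mN'_{n,\varphi}$ forces $\tr_B\bigl[\mN'_{n,\gamma}(\gamma_{R^nA^n})\bigr]=\gamma_{R^n}$, so the marginal-correction term $\tr\bigl[(\gamma_R-\tr_B\mM^{\gamma}(\gamma_{RA}))\log\gamma_R^{-1}\bigr]$ on the right-hand side of~(\ref{asy}) vanishes identically. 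This yields
\be
 D_\mf(\mN^{\otimes n})\leq \max_{\varphi\in\mD(R^nA^n)}\min_{\mE\in\mf(A^n\to B^n)} D\bigl(\mN'_{n,\varphi}(\varphi)\big\|\mE(\varphi)\bigr)+f(\epsilon)\,n\log|AB|.\nonumber
\ee

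For every fixed $\varphi$, using $D\leq D_{\max}$ on states together with the elementary bound $D_{\max}(\mN'(\varphi)\|\mE(\varphi))\leq D_{\max}(\mN'\|\mE)$ (which is immediate from the channel-level definition, since $t\mE\geq\mN'$ as CP maps implies $t\mE(\varphi)\geq\mN'(\varphi)$ as operators), one obtains $\min_{\mE}D(\mN'_{n,\varphi}(\varphi)\|\mE(\varphi))\leq D_{\max}(\mN'_{n,\varphi}\|\mE_{n,\varphi})$; maximizing over $\varphi$ shows that the double optimization on the right is at most ${LR'}_\mf^\epsilon(\mN^{\otimes n})$. Dividing by $n$, taking $\liminf_{n\to\infty}$, and letting $\epsilon\to 0$ (so that $f(\epsilon)\to 0$) then gives the desired $D_\mf^\infty(\mN)\leq {LR'}_\mf^\infty(\mN)$.

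The principal subtlety is supplying Lemma~\ref{ac} with a genuine family $\varphi\mapsto\mN'_{n,\varphi}$: this is handled either by a measurable selection theorem on the compact-valued multifunction $\varphi\mapsto {B'}_\epsilon^\varphi(\mN^{\otimes n})$, or more pragmatically by working with $\delta$-optimizers and sending $\delta\to 0$ at the very end, since the final bound is uniform in the choice of family. The conceptually important point is that restricting the smoothing to trace-preserving maps annihilates the extra term in Lemma~\ref{ac}, which is precisely what converts a pointwise-in-$\varphi$ trace-norm approximation into a channel-level bound on $D_\mf(\mN^{\otimes n})$.
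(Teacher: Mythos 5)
Your proposal is correct and follows essentially the same route as the paper's proof: select, for each $\varphi$, optimal CPTP $\mN'_{n,\varphi}\in{B'}_\epsilon^\varphi(\mN^{\otimes n})$ and free $\mE_{n,\varphi}$, invoke the stronger asymptotic-continuity Lemma~\ref{ac}, bound the resulting relative-entropy term by $D_{\max}$ at the channel level, and pass to the limits $n\to\infty$, $\epsilon\to 0$. Your explicit observation that trace preservation of the smoothed maps annihilates the marginal-correction term in~\eqref{asy} is a correct (and welcome) justification of a step the paper leaves implicit.
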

\begin{proof}
For any $\epsilon>0$, $n\in\mbb{N}$, and $\varphi\in\mD(R^nA^n)$, let $\mM^{\varphi}_n\in\cptp(A^n\to B^n)$ and $\mE_n^{\varphi}\in\mf(A^n\to B^n)$ be optimal channels such that $\|\mM^{\varphi}_n(\varphi_{R^nA^n})-\mN^{\otimes n}(\varphi_{R^nA^n})\|_{1}\leq\epsilon$ and  
$$
\min_{\mN'\in {B'}_\epsilon^{\varphi}(\mN^{\otimes n})}\min_{\mE\in\mf(A^n\to B^n)}D_{\max}\big(\mN'\|\mE\big)=D_{\max}(\mM^{\varphi}_n\|\mE_n^{\varphi})\;.
$$ 
Using the fact that $D_{\max}$ is always greater that the relative entropy $D$, we conclude that
$$
\frac{1}{n}\min_{\mE\in\mf(A^n\to B^n)}D\left(\mM^{\varphi}_n(\varphi_{R^nA^n})\big|\mE(\varphi_{R^nA^n})\right)\leq\frac{1}{n}D(\mM^{\varphi}_n\|\mE_n^{\varphi})\leq\frac{1}{n}D_{\max}(\mM^{\varphi}_n\|\mE_n^{\varphi})=\frac{1}{n}\min_{\mN'\in {B'}_\epsilon^{\varphi}(\mN^{\otimes n})}\min_{\mE\in\mf(A^n\to B^n)}D_{\max}\big(\mN'\|\mE\big)
$$
Combining this with Lemma~\ref{ac} we have 
\be
\frac{1}{n}D_\mf(\mN^{\otimes n})\leq\frac{1}{n}\max_{\varphi\in\mD(RA)}\min_{\mE\in\mf(A^n\to B^n)}D\left(\mM^{\varphi}_n(\varphi_{R^nA^n})\big|\mE(\varphi_{R^nA^n})\right)+f(\epsilon)\leq \frac{1}{n}{LR'}_\mf^{\epsilon}(\mN^{\otimes n})+f(\epsilon)\;.
\ee
Therefore, taking the limit $n\to \infty$ followed by $\epsilon\to 0$ on both sides gives
\be
\liminf_{n\to\infty}\frac{1}{n}D_\mf(\mN^{\otimes n})\leq \lim_{\epsilon\to 0}\liminf_{n\to\infty}\frac{1}{n}{LR'}_\mf^{\epsilon}(\mN^{\otimes n})\;.
\ee
This completes the proof.
\end{proof}

The above lemma demonstrates  that if the standard smoothing leads to different quantities than the liberal smoothing then 
AEP cannot hold when the quantities are defined with respect to the standard smoothing. This is the reason why we adopt here this new type of smoothing.

The liberal smoothing is strongly connected to the underlying QRT. In particular, the functions $LR_{\mf}^{\epsilon}(\mN)$ and $\underline{LR}_{\mf}^{\epsilon}(\mN)$ remain resource monotones (see the lemma below). Note also that by definition
\be
\underline{LR}_{\mf}^{\epsilon}(\mN_{A\to B})\leq LR_\mf^{\epsilon}(\mN_{A\to B})\;.
\ee
\begin{lemma}
Let $\Theta:\cptp(A\to B)\to\cptp(A'\to B')$ be a superchannel defined by
\be
\Theta[\mN_{A\to B}]\eqdef \mE^{\post}_{BE\to B'}\circ\mN_{A\to B}\circ\mE^{\pree}_{A'\to AE},
\ee
with $\mE^{\pree}\in\cptp(A'\to AE)$ and $\mE^{\post}\in\cptp(BE\to B')$ being completely RNG.
Then, 
\be
\underline{LR}_{\mf}^{\epsilon}\left(\Theta[\mN_{A\to B}]\right)\leq \underline{LR}_{\mf}^{\epsilon}\left(\mN_{A\to B}\right)
\quad;\quad
{LR}_{\mf}^{\epsilon}\left(\Theta[\mN_{A\to B}]\right)\leq {LR}_{\mf}^{\epsilon}\left(\mN_{A\to B}\right).
\ee
\end{lemma}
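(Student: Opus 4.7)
The plan is to reduce the smoothed log-robustness of $\Theta[\mN]$ to that of $\mN$ by pulling back any test state $\varphi$ on $RA'$ through the pre-processing map $\mE^{\pree}$. Fix an arbitrary $\varphi\in\mD(RA')$ in the case of $LR_\mf^\epsilon$, or $\varphi\in\mf(RA')$ in the case of $\underline{LR}_\mf^\epsilon$, and define $\sigma_{REA}\eqdef\mE^{\pree}_{A'\to AE}(\varphi_{RA'})$. Because $\mE^{\pree}$ is completely RNG, in the $\underline{LR}_\mf^\epsilon$ case $\sigma_{REA}\in\mf(REA)$; in the $LR_\mf^\epsilon$ case no restriction is needed since we only need $\sigma_{REA}\in\mD(REA)$.

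Next I would show that the liberal smoothing ball behaves well under $\Theta$. For any CP map $\mN'\in B_\epsilon^{\sigma}(\mN)$ (that is, $\|\mN'_{A\to B}(\sigma_{REA})-\mN_{A\to B}(\sigma_{REA})\|_1\leq\epsilon$), define $\Theta[\mN']\eqdef\mE^{\post}\circ\mN'\circ\mE^{\pree}$. Contractivity of the trace norm under the CPTP map $\mE^{\post}$ gives
\begin{equation}
\bigl\|\Theta[\mN'](\varphi_{RA'})-\Theta[\mN](\varphi_{RA'})\bigr\|_1
\leq \bigl\|\mN'(\sigma_{REA})-\mN(\sigma_{REA})\bigr\|_1 \leq \epsilon,
\end{equation}
so $\Theta[\mN']\in B_\epsilon^{\varphi}(\Theta[\mN])$. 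Thus the inner minimum on the $\Theta[\mN]$ side ranges over a superset of the image of the inner minimum on the $\mN$ side under $\Theta$.

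Then I would handle the $D_{\max}$ part. If $\mE\in\mf(A\to B)$ and $\mN'\leq 2^t\mE$ (CP order), composing with the CP maps $\mE^{\post}$ and $\mE^{\pree}$ preserves this ordering, so $\Theta[\mN']\leq 2^t\,\Theta[\mE]$ and hence $D_{\max}(\Theta[\mN']\|\Theta[\mE])\leq D_{\max}(\mN'\|\mE)$; this proves the inequality for $LR_\mf^\epsilon$. For $\underline{LR}_\mf^\epsilon$, the same CP inequality evaluated on $\varphi_{RA'}$ gives $\Theta[\mN'](\varphi)\leq 2^t\,\Theta[\mE](\varphi)$, so $D_{\max}\bigl(\Theta[\mN'](\varphi)\|\Theta[\mE](\varphi)\bigr)\leq D_{\max}\bigl(\mN'(\sigma)\|\mE(\sigma)\bigr)$. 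In either case, the key consumer of the free-operation hypothesis is that $\Theta[\mE]\in\mf(A'\to B')$: this is the condition already used in the proof of monotonicity of $D_\mf$ under completely RNG superchannels, and it lets me upper bound the inner min over $\mf(A'\to B')$ by the value at $\Theta[\mE]$.

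Putting the pieces together, for every $\mN'\in B_\epsilon^{\sigma}(\mN)$ and every $\mE\in\mf(A\to B)$,
\begin{equation}
\min_{\mE'\in\mf(A'\to B')} D_{\max}\bigl(\Theta[\mN']\|\mE'\bigr)\leq D_{\max}(\mN'\|\mE),
\end{equation}
with the analogous state-level inequality in the $\underline{LR}_\mf^\epsilon$ case. Taking the infimum over $(\mN',\mE)$ on the right and then the supremum over $\varphi$ (which, by the way the reference system is absorbed into $RE$, is bounded by a supremum over $\mD(\tilde R A)$ or $\mf(\tilde R A)$) yields ${LR}_\mf^\epsilon(\Theta[\mN])\leq {LR}_\mf^\epsilon(\mN)$ and $\underline{LR}_\mf^\epsilon(\Theta[\mN])\leq \underline{LR}_\mf^\epsilon(\mN)$. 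The only delicate point, and the step I expect to require most care, is verifying $\Theta[\mE]\in\mf(A'\to B')$ from the completely-RNG hypothesis together with axioms 2 and 3 of the QRT; everything else is a routine combination of trace-norm contractivity and CP-ordering preservation.
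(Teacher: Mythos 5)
Your proposal is correct and follows essentially the same route as the paper's proof: pull the test state back through $\mE^{\pree}$ (using complete RNG-ness to stay in $\mf(R'AE)$ for $\underline{LR}_\mf^\epsilon$), use contractivity of the trace norm under $\mE^{\post}$ to map the liberal smoothing ball correctly, apply data processing of $D_{\max}$ (equivalently, preservation of the CP order under composition), and restrict the free-channel minimization to channels of the form $\Theta[\mE]$. The paper presents this as a chain of restrictions starting from $\underline{LR}_\mf^\epsilon(\Theta[\mN])$ while you construct feasible points from the $\mN$ side, but the two are the same argument, including the reliance on $\Theta[\mE]\in\mf(A'\to B')$ that you correctly flag as the point needing care.
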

\begin{proof}
For any channel $\mN\in\cptp(A\to B)$, we have
\ba
\underline{LR}_{\mf}^{\epsilon}\left(\Theta[\mN_{A\to B}]\right)&=\sup_{\varphi\in\mf(R'A')}\min_{\mN'\in B_\epsilon^{\varphi}(\Theta[\mN])}\min_{\Phi\in\mf(A'\to B')}D_{\max}\big(\mN'_{A'\to B'}(\varphi_{R'A'})\|\Phi_{A'\to B'}(\varphi_{R'A'})\big)\\
&\leq \sup_{\varphi\in\mf(R'A')}\min_{\substack{\mM\in\text{CP}(A\to B)\\ \|\Theta[\mM-\mN](\varphi_{R'A'})\|_1\leq\epsilon}}\min_{\Phi\in\mf(A'\to B')}D_{\max}\big(\Theta[\mM](\varphi_{R'A'})\|\Phi_{A'\to B'}(\varphi_{R'A'})\big)\\
&\leq \sup_{\varphi\in\mf(R'A')}\min_{\substack{\mM\in\text{CP}(A\to B)\\ \|\Theta[\mM-\mN](\varphi_{R'A'})\|_1\leq\epsilon}}\min_{\Omega\in\mf(A\to B)}D_{\max}\big(\Theta[\mM](\varphi_{R'A'})\|\Theta[\Omega](\varphi_{R'A'})\big)\\
&\leq \sup_{\varphi\in\mf(R'A')}\min_{\substack{\mM\in\text{CP}(A\to B)\\ \|\Theta[\mM-\mN](\varphi_{R'A'})\|_1\leq\epsilon}}\min_{\Omega\in\mf(A\to B)}D_{\max}\big(\mM_{A\to B}\circ\mE^{\pree}_{A'\to AE}(\varphi_{R'A'})\|\Omega_{A\to B}\circ\mE^{\pree}_{A'\to AE}(\varphi_{R'A'})\big)\\
&\leq \sup_{\sigma\in\mf(R'AE)}\min_{\substack{\mM\in\text{CP}(A\to B)\\ \|\mE^{\post}_{BE\to B'}\circ(\mM-\mN)(\sigma_{R'AE})\|_1\leq\epsilon}}\min_{\Omega\in\mf(A\to B)}D_{\max}\big(\mM_{A\to B}(\sigma_{R'AE})\|\Omega_{A\to B}(\sigma_{R'AE})\big)\\
&\leq \sup_{\sigma\in\mf(R'AE)}\min_{\substack{\mM\in\text{CP}(A\to B)\\ \|(\mM-\mN)(\sigma_{R'AE})\|_1\leq\epsilon}}\min_{\Omega\in\mf(A\to B)}D_{\max}\big(\mM_{A\to B}(\sigma_{R'AE})\|\Omega_{A\to B}(\sigma_{R'AE})\big)\\
&= \sup_{\sigma\in\mf(R'AE)}\min_{\mM\in B^{\sigma}_{\epsilon}(\mN)}\min_{\Omega\in\mf(A\to B)}D_{\max}\big(\mM_{A\to B}(\sigma_{R'AE})\|\Omega_{A\to B}(\sigma_{R'AE})\big)\\
&=\underline{LR}_{\mf}^{\epsilon}\left(\mN_{A\to B}\right)\;.
\ea
The second line follows by restricting $\mN'$ to have the form $\Theta[\mM]$. The third line by restricting $\Phi$ to have the form $\Theta[\Omega]$. The fourth line from data processing inequality of $D_{\max}$. The fifth line by substituting 
$\sigma_{R'AE}=\mE^{\pree}_{A'\to AE}(\varphi_{R'A'})$ and then optimizing over all $\sigma\in\mf(R'AE)$. The sixth line from the contractivity of the trace norm, and finally, the seventh and eighth by definition.
The monotonicity of ${LR}_{\mf}^{\epsilon}$ follows similar lines.
\end{proof}

\subsection{Product-State Regularization}

One can define the regularized version of $D_\mf$ and $LR_\mf^{\epsilon}$ as 
in~\eqref{regularization}.
Note, however, that unlike the analogous quantity in the state domain, for 
channels the limit $n\to\infty$ of $\frac{1}{n}D_{\mf}(\mN^{\otimes n})$ 
may not exist in general, so we had to take in~\eqref{regularization} the 
$\liminf$ instead.
Moreover, it could even be that for some $\mN$
\be
  D_{\mf}^{\infty}(\mN)>D_{\mf}(\mN)\quad\text{and even}\quad D_{\mf}^{\infty}(\mN^{\otimes 2})>2D_{\mf}^{\infty}(\mN)\;!
\ee
Therefore, this type of regularization does not seem to be very promising, and we will adopt a different type of regularization that avoid these complications.

The type of regularization that we consider here is as follows. For each $n\in\mbb{N}$, and a channel $\mN\in\cptp(A\to B)$, we define the quantities
\ba
D_{\mf}^{(n)}(\mN) &\eqdef \frac{1}{n}\max_{\varphi\in\mD(RA)}\min_{\mE\in\mf(A^n\to B^n)}D\left(\mN_{A\to B}^{\otimes n}(\varphi_{RA}^{\otimes n})\|\mE_{A^n\to B^n}(\varphi_{RA}^{\otimes n})\right),\\
E_{\mf}^{(n)}(\mN) &\eqdef \frac{1}{n}\max_{\varphi\in\mf(RA)}\min_{\mE\in\mf(A^n\to B^n)}D\left(\mN_{A\to B}^{\otimes n}(\varphi_{RA}^{\otimes n})\|\mE_{A^n\to B^n}(\varphi_{RA}^{\otimes n})\right).
\ea
To motivate these definition, we first discuss some of their properties.

First, note that if $\mN\in\cptp(A\to B)$ is the constant channel $\mN_{A\to B}(X_A)=\tr[X_A]\omega_B$ then
\be
D_{\mf}^{(n)}(\mN)=E_{\mf}^{(n)}(\mN)=\frac{1}{n}D_{\mf}(\omega_B^{\otimes n})\eqdef\frac{1}{n}\min_{\sigma\in\mf(B^n)}D(\omega_B^{\otimes n}\|\sigma_{B^n})\;.
\ee
since both $D_\mf(\mN)$ and $E_\mf(\mN)$ reduces to $D_\mf(\omega_B)$ for replacement channels. Therefore, this type of regularization, reduces to the standard one when $\mN$ is a replacement channel. Next, we prove the following lemma.

\begin{lemma}
For any $\mN\in\cptp({A\to B})$ we have
\be
D_{\mf}^{(n+m)}(\mN_{A\to B})\leq \frac{n}{n+m}D_{\mf}^{(n)}(\mN_{A\to B})+\frac{m}{n+m}D_{\mf}^{(m)}(\mN_{A\to B}). 
\ee
The same relation also holds for $E_\mf^{(n)}$.
\end{lemma}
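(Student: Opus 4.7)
The plan is to exploit the product structure built into the definition of $D_\mf^{(k)}$ together with two basic properties: tensor products of free channels remain free (so the minimization domain in $\mf(A^{n+m}\to B^{n+m})$ contains all product channels $\mE_n\otimes\mE_m$ with $\mE_n\in\mf(A^n\to B^n)$ and $\mE_m\in\mf(A^m\to B^m)$), and the relative entropy is additive under tensor products. The key observation is that the test state in the definition of $D_\mf^{(n+m)}$ is $\varphi^{\otimes(n+m)}=\varphi^{\otimes n}\otimes\varphi^{\otimes m}$, which already factorizes through the $(n,m)$-split.

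First, I would restrict the inner minimization to product free channels, obtaining the upper bound
\begin{align*}
\min_{\mE\in\mf(A^{n+m}\to B^{n+m})} &D\!\left(\mN^{\otimes(n+m)}(\varphi^{\otimes(n+m)})\big\|\mE(\varphi^{\otimes(n+m)})\right) \\
&\leq \min_{\mE_n,\mE_m} D\!\left((\mN^{\otimes n}\otimes\mN^{\otimes m})(\varphi^{\otimes n}\otimes\varphi^{\otimes m})\big\|(\mE_n\otimes\mE_m)(\varphi^{\otimes n}\otimes\varphi^{\otimes m})\right).
\end{align*}
By additivity of $D$ on tensor products the right-hand side splits as $\min_{\mE_n}D_n(\varphi) + \min_{\mE_m}D_m(\varphi)$, where $D_k(\varphi)\eqdef D(\mN^{\otimes k}(\varphi^{\otimes k})\|\mE_k(\varphi^{\otimes k}))$.

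Next, I would take the maximum over $\varphi\in\mD(RA)$ on both sides and use subadditivity of $\max$ over sums, $\max_\varphi[X(\varphi)+Y(\varphi)]\leq\max_\varphi X(\varphi)+\max_\varphi Y(\varphi)$, yielding
\begin{equation*}
(n+m)D_\mf^{(n+m)}(\mN) \leq n D_\mf^{(n)}(\mN) + m D_\mf^{(m)}(\mN).
\end{equation*}
Dividing by $n+m$ gives the stated inequality. For $E_\mf^{(n)}$ the identical argument applies with $\mD(RA)$ replaced by $\mf(RA)$; one only needs to note that $\varphi\in\mf(RA)$ implies $\varphi^{\otimes n}\in\mf(R^n A^n)$ by the tensor product structure of the free set, so that the same $\varphi^{\otimes(n+m)}$ decomposition is valid.

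There is no substantive obstacle here; the only thing to watch is that both the test state and the free channel are chosen to factorize along the same $(n,m)$ split, which together with additivity of $D$ and closure of $\mf$ under tensor products does all the work.
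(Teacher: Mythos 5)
Your proof is correct and follows essentially the same route as the paper's: restrict the minimization over $\mf(A^{n+m}\to B^{n+m})$ to product free channels $\mE_n\otimes\mE_m$ (allowed since tensor products of free channels are free), split via additivity of $D$ on tensor products, and then bound the supremum of a sum by the sum of suprema. If anything, you are slightly more careful than the paper, which writes the final step as an equality where the subadditivity of the supremum only gives an inequality (which is all that is needed).
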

\begin{proof}
We have
\ba
(n+m)E_{\mf}^{(n+m)}(\mN_{A\to B})&=\sup_{\varphi\in\mf(RA)}\min_{\mE\in\mf(A^{n+m}\to B^{n+m})}D\left(\mN_{A\to B}^{\otimes {(n+m)}}\left(\varphi_{RA}^{\otimes {(n+m)}}\right)\big\|\mE\left(\varphi_{RA}^{\otimes {(n+m)}}\right)\right)\\
&\leq \sup_{\varphi\in\mf(RA)}\min_{\substack{\mE^1\in\mf(A^{n}\to B^{n})\\ \mE^2\in\mf(A^{m}\to B^{m})}}D\left(\big(\mN_{A\to B}(\varphi_{RA})\big)^{\otimes {(n+m)}}\big\|\mE^1\left(\varphi_{RA}^{\otimes {n}}\right)\otimes \mE^2\left(\varphi_{RA}^{\otimes {m}}\right)\right)\\
&=nE_{\mf}^{(n)}(\mN_{A\to B})+mE_{\mf}^{(m)}(\mN_{A\to B})\;.
\ea
The same lines of reasoning holds for $D_\mf$ as well.
\end{proof}

This lemma implies that the limits of $\mD_\mf^{(n)}$ and $E_{\mf}^{(n)}$, as $n\to\infty$, exist.
We therefore define the regularized version of $D_{\mf}$ and $E_{\mf}$ to be
\be
D_{\mf}^{(\infty)}(\mN)=\lim_{n\to\infty}E_{\mf}^{(n)}(\mN)\quad\text{and}\quad E_{\mf}^{(\infty)}(\mN)=\lim_{n\to\infty}E_{\mf}^{(n)}(\mN)\;.
\ee
From the lemmas above, the regularized quantities above satisfy
\be
D_{\mf}^{(\infty)}(\mN)\leq D_{\mf}^{(n)}(\mN)\quad\text{and}\quad E_{\mf}^{(\infty)}(\mN)\leq E_{\mf}^{(n)}(\mN)\quad\forall\;n\in\mbb{N}\;\;,\;\;\forall\;\mN\in\cptp(A\to B)
\ee 
and they are also resource monotones. Furthermore, note that the product-state regularization, $D_{\mf}^{(\infty)}(\mN)$, is no greater than the standard regularization $D_{\mf}^{\infty}(\mN)$ as defined in~\eqref{regularization}.
 
We can use this regularization method also for the smoothed logarithmic robustness 
quantities $LR_{\mf}^{\epsilon}$ and $\underline{LR}_{\mf}^{\epsilon}$. Define
\begin{align}
  LR^{\epsilon,n}_{\mf}(\mN) &\eqdef\frac{1}{n}\sup_{\varphi\in\mD(RA)}\min_{\mN'\in B_\epsilon^{\varphi^{\otimes n}}(\mN^{\otimes n})}\min_{\mE\in\mf(A^n\to B^n)}D_{\max}\big(\mN'_{A^n\to B^n}\|\mE_{A^n\to B^n}\big), \\
  \underline{LR}^{\epsilon,n}_{\mf}(\mN) &\eqdef\frac{1}{n}\sup_{\varphi^{\otimes n}\in\mF_n(RA)}\min_{\mN'\in B_\epsilon^{\varphi^{\otimes n}}(\mN^{\otimes n})}\min_{\mE\in\mf(A^n\to B^n)}D_{\max}\big(\mN'_{A^n\to B^n}(\varphi^{\otimes n})\|\mE_{A^n\to B^n}(\varphi^{\otimes n})\big), \\
  LR^{(\infty)}_{\mf}(\mN) &\eqdef\lim_{\epsilon\to 0}\liminf_{n\to\infty}LR^{\epsilon,n}_{\mf}(\mN), \\
  \underline{LR}^{(\infty)}_{\mf}(\mN) &\eqdef\lim_{\epsilon\to 0}\liminf_{n\to\infty}\underline{LR}^{\epsilon,n}_{\mf}(\mN).
\end{align}

\subsection{Proof of Theorem~\ref{aep}}

\begin{theorem*}
For all $\mN\in\cptp(A\to B)$,
\ba
  D_{\mf}^{(\infty)}(\mN)
    &= \lim_{\epsilon\to 0}\limsup_{n\to\infty}\frac{1}{n}{LR}_{\mf}^{\epsilon,n}(\mN)
     = \lim_{\epsilon\to 0}\liminf_{n\to\infty}\frac{1}{n}{LR}_{\mf}^{\epsilon,n}(\mN)
     \equiv LR^{(\infty)}_{\mf}(\mN), \\
  {E}_{\mf}^{(\infty)}(\mN)
    &= \lim_{\epsilon\to 0}\limsup_{n\to\infty}\frac{1}{n}\underline{LR}_{\mf}^{\epsilon,n}(\mN)
     =\lim_{\epsilon\to 0}\liminf_{n\to\infty}\frac{1}{n}\underline{LR}_{\mf}^{\epsilon,n}(\mN)
     \equiv \underline{LR}^{(\infty)}_{\mf}(\mN).
\ea
\end{theorem*}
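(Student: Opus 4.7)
The plan is a sandwich: prove $\liminf_n n^{-1}LR^{\epsilon,n}_\mf(\mN) \geq D_\mf^{(\infty)}(\mN) - o_\epsilon(1)$ and $\limsup_n n^{-1}LR^{\epsilon,n}_\mf(\mN) \leq D_\mf^{(\infty)}(\mN) + o_\epsilon(1)$, which jointly force the two limits to agree and to equal $D_\mf^{(\infty)}(\mN)$. The observation that drives everything is that for any pure $\varphi_{RA}$ of full Schmidt rank, the affine map $\mN'\mapsto \mN'(\varphi^{\otimes n})$ is an invertible bijection between CP maps $A^n\to B^n$ and positive operators on $R^nB^n$ (a weighted Choi isomorphism), so the channel-level max-relative entropy collapses to the state-level one at $\varphi^{\otimes n}$:
\[
  LR_\mf(\mN') \;=\; \min_{\mE\in\mf}\,D_{\max}\!\bigl(\mN'(\varphi^{\otimes n})\,\big\|\,\mE(\varphi^{\otimes n})\bigr).
\]
The liberal smoothing then reduces exactly to the Brand\~ao--Plenio state-smoothed max-relative entropy of $\mN^{\otimes n}(\varphi^{\otimes n})$ against the convex set $\mathcal{F}_\varphi^n := \{\mE(\varphi^{\otimes n}) : \mE \in \mf(A^n\to B^n)\}$.

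For the lower bound, I would use $D \leq D_{\max}$ inside the liberal smoothing to drop to the ordinary relative entropy $D(\mN'(\varphi^{\otimes n})\|\mE(\varphi^{\otimes n}))$, then apply Winter's continuity bound to revert $\mN'(\varphi^{\otimes n})$ to $\mN^{\otimes n}(\varphi^{\otimes n})$ at an additive cost $\epsilon\log|RB| + O(h(\epsilon)/n)$, which vanishes with $\epsilon$ uniformly in $n$ because for pure full-Schmidt-rank $\varphi$ we may take $|R|=|A|$. Taking the supremum over such $\varphi$ recovers $D_\mf^{(n)}(\mN)$ on the right, and since $D_\mf^{(n)}\to D_\mf^{(\infty)}$ by the subadditivity of $nD_\mf^{(n)}$ proved above, the lower bound follows.

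For the upper bound I would apply the Brand\~ao--Plenio generalized quantum Stein's lemma to $\mathcal{F}_\varphi^n$ pointwise in $\varphi$. First, check its closure axioms: convexity and tensor-product closure are inherited from $\mf$; closure under partial trace follows from discarding being free and, in the $E_\mf$ case, from $\varphi_A\in\mf(A)$; permutation symmetry descends from the permutation axiom on $\mf$ together with the symmetry of $\varphi^{\otimes n}$. To pass from pointwise-in-$\varphi$ convergence to a bound on $\sup_\varphi$, the idea is to use subadditivity directly: fix $n_0$, and for each $\varphi$ let $\mE^*_\varphi\in\mf(A^{n_0}\to B^{n_0})$ attain (up to $o(1)$) the minimum $n_0 f_{n_0}(\varphi) := \min_\mE D(\mN^{\otimes n_0}(\varphi^{\otimes n_0})\|\mE(\varphi^{\otimes n_0}))$; then $(\mE^*_\varphi)^{\otimes n/n_0}$ is free (restricting to $n$ a multiple of $n_0$), and the i.i.d.\ state-level AEP yields a smoothed $\rho'$ close to $\mN^{\otimes n}(\varphi^{\otimes n})$ with $D_{\max}(\rho' \| (\mE^*_\varphi)^{\otimes n/n_0}(\varphi^{\otimes n})) \leq n f_{n_0}(\varphi) + o(n)$; lifting $\rho'$ back to a CP map $\mN'\in B_\epsilon^{\varphi^{\otimes n}}(\mN^{\otimes n})$ via the Choi bijection and taking the supremum gives $n^{-1} LR^{\epsilon,n}_\mf(\mN) \leq D_\mf^{(n_0)}(\mN) + o(1)$, after which $n_0\to\infty$ closes the bound.

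The main obstacle I anticipate is the uniform-in-$\varphi$ control of the remainder in the state-level AEP: the Berry--Esseen-type correction scales with the relative-entropy variance $V(\rho_\varphi\|\sigma_\varphi)$, which varies with $\varphi$ and can diverge where $\sigma_\varphi = \mE^*_\varphi(\varphi^{\otimes n_0})$ loses support on $\rho_\varphi = \mN^{\otimes n_0}(\varphi^{\otimes n_0})$. I would close this gap via compactness of the set of pure full-Schmidt-rank $\varphi$ with $|R|=|A|$, continuity of the variance on this set, and the standing hypothesis that $\mf(A)$ contains a full-rank state, which lets one arrange that $\sigma_\varphi$ stays bounded away from zero. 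The $E_\mf$ statement follows by the identical scheme with $\mD(RA)$ replaced by $\mf(RA)$; the hypothesis that the extreme points of $\mf(A)$ are pure supplies a full-Schmidt-rank pure free state and keeps $|R|=|A|$, so the same compactness and continuity arguments apply without change.
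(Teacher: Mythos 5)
Your overall architecture coincides with the paper's: lower bound via $D\le D_{\max}$ plus a continuity argument reverting $\mN'(\varphi^{\otimes n})$ to $\mN^{\otimes n}(\varphi^{\otimes n})$, and upper bound via a block-$m$ optimal free channel $(\mE^\varphi)^{\otimes n}$, a state-level smoothing of the i.i.d.\ output (Ogawa--Nagaoka exponent plus the Datta/Brand\~ao--Plenio operator-smoothing lemma), and a lift back to a CP map using purity of $\varphi$. The weighted-Choi observation is correct and is indeed what lets one pass between channel-level and state-level $D_{\max}$ at a pure full-Schmidt-rank input.

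The gap is in your lower bound, at the step ``apply Winter's continuity bound \ldots\ at an additive cost $\epsilon\log|RB|+O(h(\epsilon)/n)$, which vanishes with $\epsilon$ uniformly in $n$.'' The function being perturbed is $\rho\mapsto\min_{\mE\in\mf}D\bigl(\rho\big\|\mE(\varphi^{\otimes n})\bigr)$, and every reference state $\mE(\varphi^{\otimes n})$ has marginal $\varphi_R^{\otimes n}$ on $R^n$. Because the smoothed $\mN'$ is only CP, not trace-preserving, the cross term $-\tr\bigl[(\mN'-\mN^{\otimes n})(\varphi^{\otimes n})\log\mE(\varphi^{\otimes n})\bigr]$ contributes, through the $R^n$ marginal, a quantity of order $\epsilon\log\|\varphi_R^{-1}\|_\infty$ per copy rather than $\epsilon\log|RB|$. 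The optimizing input $\gamma_n$ for $D^{(n)}_\mf$ (resp.\ $E^{(n)}_\mf$) depends on $n$, and $\lambda_{\min}(\gamma_{n,R})$ may tend to $0$, so this term is not uniformly small; compactness of the input set alone does not rescue it. The paper's proof spends its most technical passage exactly here: it writes the deviation of the $R$-marginal as $\gamma_R^{1/2}\bigl(J^{\mM^\gamma}_R-I_R\bigr)\gamma_R^{1/2}$, splits $\gamma_R$ along a subsequence into a part $\alpha_k$ with eigenvalues bounded below (contributing $\epsilon\log\|\alpha_k^{-1}\|_\infty$) and a vanishing part $\beta_k$, whose contribution is killed by $\tr\bigl[\beta_k^{1/2}(\log\beta_k^{-1})\beta_k^{1/2}\bigr]\to0$ via $x\log x^{-1}\to 0$. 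Without this (or an equivalent) argument the lower bound does not close. On the upper bound you correctly flag the uniformity-in-$\varphi$ issue, which the paper itself treats lightly; but the needed uniformity is of the exponent $r_m t-f_\varphi(t)$, and it follows from the uniform gap $\sup_\varphi f'_\varphi(0)\le r_m-\epsilon$ together with continuity of $f_\varphi$ on the compact input set --- Berry--Esseen/relative-entropy-variance estimates are not required at this fixed-$\epsilon$, first-order level.
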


\begin{proof}
We prove the theorem in two steps. First we prove the inequality
\be
E_{\mf}^{(\infty)}(\mN)\leq \underline{LR}^{(\infty)}_{\mf}(\mN)\quad\forall\mN\in\cptp(A\to B)\;.
\ee
Let $\epsilon>0$ and $\varphi\in\mf(RA)$. Let $\mM^{\varphi}_n\in\text{CP}(A^n\to B^n)$ 
be the optimal CP map such that 
\be
\|\mM^{\varphi}_n(\varphi^{\otimes n})-\mN^{\otimes n}(\varphi^{\otimes n})\|_{1}\leq\epsilon
\ee
and
\be
\min_{\substack{\mN'\in B_\epsilon^{\varphi^{\otimes n}}(\mN^{\otimes n})\\ \mE\in\mf(A^n\to B^n)}}D_{\max}\big(\mN'(\varphi^{\otimes n})\|\mE(\varphi^{\otimes n})\big)=\min_{\mE\in\mf(A^n\to B^n)}D_{\max}\big(\mM^{\varphi}_n(\varphi^{\otimes n})\|\mE(\varphi^{\otimes n})\big)\;.
\ee
Since $D_{\max}(\rho\|\sigma)\geq D(\rho\|\sigma)$ for all $\rho$ and $\sigma$ 
it follows from the above equation that
\be
\frac{1}{n}\min_{\mE\in\mf(A^n\to B^n)}D(\mM^{\varphi}_n(\varphi^{\otimes n})\|\mE(\varphi^{\otimes n}))\leq\frac{1}{n}\min_{\substack{\mN'\in B_\epsilon^{\varphi^{\otimes n}}(\mN^{\otimes n})\\ \mE\in\mf(A^n\to B^n)}}D_{\max}\big(\mN'(\varphi^{\otimes n})\|\mE(\varphi^{\otimes n})\big).
\ee
Therefore, taking the maximum over $\varphi\in\mf(RA)$ on both sides gives
\ba
\frac{1}{n}\max_{\varphi\in\mf(RA)}\min_{\mE_n\in\mf(A^n\to B^n)}D(\mM^{\varphi}_n(\varphi^{\otimes n})\|\mE_n(\varphi^{\otimes n}))\leq\frac{1}{n}\underline{LR}_{\mf}^{\epsilon,n}(\mN^{\otimes n}).
\ea
Combining this with the asymptotic continuity (see Lemma~\ref{ac} with $\mS$ being 
the set whose extreme points are the states of the form $\varphi^{\otimes n}$ with 
$\varphi\in\mf(A\to B)$) gives
 \ba\label{xgx}
 \frac{1}{n}E_{\mf}^{(n)}(\mN)\leq\frac{1}{n}\underline{LR}_{\mf}^{\epsilon,n}(\mN)+f(\epsilon)\log|AB|+\frac{1}{n}\tr\left[\left(\gamma_{R}^{\otimes n}-\tr_B\big[\mM^{\gamma}_{n}\left(\gamma_{RA}^{\otimes n}\right)\big]\right)\log\left(\gamma_R^{-1}\right)^{\otimes n}\right],
\ea
where  $\gamma_{RA}$ is defined  such that 
\ba\label{gamma}
E_{\mf}^{(n)}(\mN)&=\max_{\varphi\in\mf(RA)}\min_{\mE_n\in\mf(A^n\to B^n)}D\left(\mN^{\otimes n}(\varphi^{\otimes n})\big\|\mE_{n}(\varphi^{\otimes n})\right)\\
&=\min_{\mE_n\in\mf(A^n\to B^n)}D\left(\mN^{\otimes n}(\gamma_{RA}^{\otimes n})\big\|\mE_{n}(\gamma_{RA}^{\otimes n})\right)\;.
\ea
All that is left to show is that the last term in~\eqref{xgx} goes to zero.
Note that  $\gamma_{RA}$ can depend on $n$. Therefore, we will use the notation 
$\omega_n\equiv\gamma_R\in\mf(R)$ to emphasize this dependence.  
 
Let $\{k\}$ be a subsequence such that
\be
 \lim_{k\to\infty}\frac{1}{k}{LR}_{\mf}^{\epsilon,k}(\mN^{\otimes {k}})
   = \liminf_{n\to\infty}\frac{1}{n}{LR}_{\mf}^{\epsilon,n}(\mN^{\otimes n}).
\ee
To simplify the notations, we used the notation $k$ instead of something like $n_k$. 
Now recall that 
$$\|\mM^{\gamma}_n\left(\gamma_{RA}^{\otimes n}\right)-\mN^{\otimes n}\left(\gamma_{RA}^{\otimes n}\right)\|_{1}\leq\epsilon\;,$$ 
and in particular, from the contractivity of the trace norm,
\be\label{x}
 \|\tr_{B^n}\left[\mM^{\gamma}_n\left(\gamma_{RA}^{\otimes n}\right)\right]-\gamma_{R}^{\otimes n}\|_{1}\leq\epsilon\quad\forall\;n\in\mbb{N}.
\ee
Therefore, if $\|(\omega_{k}^{-1})^{\otimes k}\|_{\infty}$ is bounded, then
\be
 \frac{1}{k}\tr\left[\left(\omega_{k}^{\otimes k}-\tr_B\big[\mM^{\gamma}_{k}\left(\gamma_{RA}^{\otimes k}\right)\big]\right)\log\left(\omega_{k}^{-1}\right)^{\otimes k}\right]\leq\epsilon\log\|\omega_{k}^{-1}\|_{\infty}
\ee
is bounded and goes to zero as $\epsilon\to 0$. We therefore assume now that $\|\omega_{k}^{-1}\|_{\infty}$ is not bounded. Then, there exists a subsequence $\{j\}\subset\{k\}$ such that $\lambda_{\min}(\omega_{j})\to 0$ as $j\to\infty$. Next, we continue to check if there exists a subsequence of $\{\omega_{j}\}$ for which the second smallest eigenvalue of $\omega_{j}$ also goes to zero. If there isn't then we stop. Otherwise, we continue in this way until we find a subsequence of $n$, lets call it again for simplicity $\{k\}$, such that the first $m$ largest eigenvalues of $\omega_{k}$ are bounded from below, and the remaining $|R|-m$ eigenvalues are all going to zero in the limit $k\to\infty$.

We now bound the term 
\be
 \frac{1}{k}\tr\left[\left(\omega_{k}^{\otimes k}-\tr_B\big[\mM^{\gamma}_{k}\left(\gamma_{RA}^{\otimes k}\right)\big]\right)\log\left(\omega_{k}^{-1}\right)^{\otimes k}\right],
\ee
which can be expressed equivalently as
\be
 \frac{1}{k}\tr\left[(\omega_{k}^{1/2})^{\otimes k}\left(I_{R^{k}}-J^{\mM^{\gamma}_{k}}_{R^{k}}\right)\left(\omega_{k}^{1/2}\right)^{\otimes k}\log(\omega_{k}^{-1})^{\otimes k}\right],
\ee
where $J^{\mM^{\gamma}_{k}}_{R^{k}}$ is the marginal of the Choi matrix of $\mM^{\gamma}_{k}$.
Next, observe that
\be
 \log(\omega_{k}^{-1})^{\otimes k}=\big(\log\omega_{k}^{-1}\otimes I_R\otimes\cdots\otimes I_R\big)+\cdots+\big(I_R\otimes \cdots\otimes I_R\otimes \log\omega_{k}^{-1}\big)
\ee
It is therefore enough to bound each of the terms
\be
\tr\left[(\omega_{k}^{1/2})^{\otimes k}\left(I_{R^{k}}-J^{\mM^{\gamma}_{k}\circ\mE^{\gamma}_{k}}_{R^{k}}\right)\left(\omega_{k}^{1/2}\right)^{\otimes k}\left(\log\omega_{k}^{-1}\otimes I_R\otimes\cdots\otimes I_R\right)\right]
 = \tr\left[\omega_{k}^{1/2}\xi_R\omega_{k}^{1/2}\log\omega_{k}^{-1}\right],
\ee
where
\be
 \xi_R\equiv\tr_{\neq 1}\left[\left(I_R\otimes\left(\omega_{k}^{1/2}\right)^{\otimes (k-1)}\right)\left(I_{R^{k}}-J^{\mM^{\gamma}_{k}\circ\mE^{\gamma}_{k}}_{R^{k}}\right)\left(I_R\otimes\left(\omega_{k}^{1/2}\right)^{\otimes (k-1)}\right)\right],
\ee
with $\tr_{\neq 1}$ denoting a trace over all the $k$ $R$-systems except for the first one. 
Note that from~\eqref{x}  we have
$\tr[(\omega_{k}^{1/2}\xi_R\omega_{k}^{1/2})_+]\leq\epsilon$. Now, decompose $\omega_k=\alpha_k+\beta_k$, where $\alpha_k=\omega_kP_k$, and $P_k$ is the projection to the eigenspace of the $m$ largest eigenvalues of $\omega_k$, and $\beta_k=\omega_k(I_R-P_k)$. Since $\alpha_k\beta_k=\beta_k\alpha_k=0$ we have
\be
\tr\left[\omega_{k}^{1/2}\xi_R\omega_{k}^{1/2}\log\omega_{k}^{-1}\right]=\tr\left[\alpha_{k}^{1/2}\xi_R\alpha_{k}^{1/2}\log\alpha_{k}^{-1}\right]+\tr\left[\beta_{k}^{1/2}\xi_R\beta_{k}^{1/2}\log\beta_{k}^{-1}\right],
\ee
where the inverses of $\alpha_k$ and $\beta_k$ understood as the generalized inverses. 
Now, observe that
\be
\tr\left[\alpha_{k}^{1/2}\xi_R\alpha_{k}^{1/2}\log\alpha_{k}^{-1}\right]=\tr\left[\omega_{k}^{1/2}\xi_R\omega_{k}^{1/2}P_k\log\alpha_{k}^{-1}\right]\leq\tr[(\omega_{k}^{1/2}\xi_R\omega_{k}^{1/2})_+]\log\|\alpha_k^{-1}\|_\infty\leq\epsilon\log\|\alpha_k^{-1}\|_\infty,
\ee
where $\|\alpha_k^{-1}\|_\infty$ is bounded. For the other term, note that by definition, 
since $\mM^{\gamma}_{k}$ is a CP map, its Choi matrix is positive semidefinite so 
that $\xi_R\leq I_R$. Hence,
\be
\tr\left[\beta_{k}^{1/2}\xi_R\beta_{k}^{1/2}\log\beta_{k}^{-1}\right]=\tr\left[\xi_R\beta_{k}^{1/2}(\log\beta_{k}^{-1})\beta_{k}^{1/2}\right]
\leq \tr\left[\beta_{k}^{1/2}(\log\beta_{k}^{-1})\beta_{k}^{1/2}\right]\to 0
\ee
as $k\to\infty$ (since $\beta_k\to 0$ as $k\to\infty$). 
To summarize, there exists some constant $c>0$ such that for sufficiently large $k$
\be
 \tr\left[\omega_{k}^{1/2}\eta_R\omega_{k}^{1/2}\log\omega_{k}^{-1}\right]\leq \epsilon c\;.
\ee
Since this bound holds for each of the $k$ terms, we conclude that
\be
\frac{1}{k}\tr\left[\left(\omega_{k}^{1/2}\right)^{\otimes k}\left(I_{R^{k}}-J^{\mM^{\gamma}_{k}}_{R^{k}}\right)\left(\omega_{k}^{1/2}\right)^{\otimes k}\log\left(\omega_{k}^{-1}\right)^{\otimes k}\right].\leq\epsilon c
\ee
Therefore, by taking on both sides of~\eqref{xgx} the limit $n\to\infty$ followed by $\epsilon\to 0$ gives 
\be
 E_{\mf}^{(\infty)}(\mN)\leq\lim_{\epsilon\to 0}\liminf_{n\to\infty}\frac{1}{n}\underline{LR}_{\mf}^{\epsilon,n}(\mN^{\otimes n})\equiv \underline{LR}^{(\infty)}_{\mf}(\mN)\;.
\ee

We next prove the inequality
\be
E_{\mf}^{(\infty)}(\mN)\geq \lim_{\epsilon\to 0}\limsup_{n\to\infty}\frac{1}{n}\underline{LR}_{\mf}^{\epsilon,n}(\mN^{\otimes n})\quad\forall\mN\in\cptp(A\to B)\;.
\ee
This  inequality follows by a reasoning very similar to that given in~\cite{BP2010} 
for the sate domain. Let $\epsilon>0$ and define
\be
r_m\eqdef E_{\mf}^{(m)}(\mN)+\epsilon=\max_{\varphi\in\mf(RA)}\min_{\mE\in\mf(A^{m}\to B^{m})}D\left(\left(\mN_{A\to B}(\varphi_{RA})\right)^{\otimes m}\|\mE_{A^m\to B^m}(\varphi^{\otimes m}_{RA})\right)+\epsilon, 
\ee
We will also denote by $\mE^\varphi\in\mf({A^m\to B^m})$ the optimal 
channel in $\mf(A^m\to B^m)$ that satisfies
\be
\min_{\mE\in\mf(A^{m}\to B^{m})}D\left(\left(\mN_{A\to B}(\varphi_{RA})\right)^{\otimes m}\|\mE(\varphi^{\otimes m}_{RA})\right)=D\left(\left(\mN_{A\to B}(\varphi_{RA})\right)^{\otimes m}\|\mE^\varphi(\varphi^{\otimes m}_{RA})\right).
\ee
For every $n\in\mbb{N}$ and $\varphi_{RA}\in \mf(RA)$, we have
\be
\label{mn}
\left(\mN(\varphi_{RA})\right)^{\otimes mn}\leq 2^{nr_m}\left(\mE^\varphi(\varphi_{RA}^{\otimes m})\right)^{\otimes n}+\left(\left(\mN(\varphi_{RA})\right)^{\otimes mn}-2^{nr_m}\left(\mE^\varphi(\varphi_{RA}^{\otimes m})\right)^{\otimes n}\right)_{+}.
\ee
Denote by
\be\label{one}
\delta_{mn}\eqdef\tr\left(\left(\mN(\varphi_{RA})\right)^{\otimes mn}-2^{nr_m}\left(\mE^\varphi(\varphi_{RA}^{\otimes m})\right)^{\otimes n}\right)_{+}
\ee
From~\cite{Ogawa-2000} we have
\be\label{on}
\delta_{mn}\leq 2^{-n(r_m t-f(t))},
\ee
where $t\in[0,1]$ and 
\be
f(t)=\log\tr\left[\left(\mN^{\otimes m}(\varphi_{RA}^{\otimes m})\right)^{1+t}\left(\mE^\varphi(\varphi_{RA}^{\otimes m})\right)^{-t}\right].
\ee
Note that $f(0)=0$ and 
\be
f'(0)=D(\mN^{\otimes m}(\varphi_{RA}^{\otimes m})\|\mE^\varphi(\varphi_{RA}^{\otimes m}))\leq E_{\mf}^{(m)}(\mN)=r_m-\epsilon.
\ee
Hence, for small enough $t>0$ we get that $r_m t-f(t)>0$ which together with~\eqref{on} proves that $\lim_{n\to\infty}\delta_{mn}=0$ for all $m\in\mbb{N}$ and all $\varphi\in\mf(RA)$.
Now, recall the following lemma.
\begin{lemma*}[\cite{Datta-2009,BP2010}]
Let $\rho$ and $\sigma$ be two density matrices, and $P\geq 0$ be some positive semidefinite operator satisfying
$\rho\leq P+\epsilon\sigma$ for some $\epsilon>0$. Then, there exists a density matrix $\trho$ satisfying
\be
\trho\leq\frac{1}{1-\epsilon}P\quad\text{and}\quad\|\rho-\trho\|_1\leq 4\sqrt{\epsilon}\;.
\ee
\end{lemma*}
From this lemma and~\eqref{mn} it follows that there exists a sequence of 
density matrices $\eta_{R^{nm}B^{nm}}$ such that
\be
\|\left(\mN(\varphi_{RA})\right)^{\otimes mn} -\eta_{R^{mn}B^{mn}}\|_1\leq 4\sqrt{\delta_{mn}}\quad\text{and}\quad\eta_{R^{mn}B^{mn}}\leq \frac{1}{1-\delta_{mn}}2^{nr_m}\left(\mE^\varphi(\varphi_{RA}^{\otimes m})\right)^{\otimes n}.
\ee
Now, define the  the CP map $\mM_{mn}^{\varphi}\in\text{CP}({A^{mn}\to B^{mn}})$ that satisfy
\be
\eta_{R^{mn}B^{mn}}=\mM^\varphi_{{mn}}\left(\varphi_{RA}^{\otimes mn}\right).
\ee
Such a CP map always exists as long as the bipartite state $\varphi_{RA}$ is 
pure. This also implies that 
\be
\mM^\varphi_{{mn}}
\leq \frac{1}{1-\delta_{mn}}2^{nr_m}\left(\mE^\varphi\right)^{\otimes n}\quad\text{and}\quad \left\|\left(\mN_{A\to B}(\varphi_{RA})\right)^{\otimes mn} -\mM_{{mn}}^{\varphi}\left(\varphi_{RA}^{\otimes mn}\right)\right\|_1\leq 4\sqrt{\delta_{mn}}\;.
\ee
Let $n$ be large enough such that $4\sqrt{\delta_{nm}}\leq\epsilon$. Hence, 
\begin{align}
\underline{LR}_{\mf}^{\epsilon,mn}(\mN)&=\max_{\varphi\in\mf(RA)}\min_{\mN'\in B_\epsilon^{\varphi^{\otimes nm}}\left(\mN^{\otimes nm}\right)}\min_{\mE'\in\mf(A^{nm}\to B^{nm})}D_{\max}\big(\mN'\left(\varphi_{RA}^{\otimes mn}\right)\|\mE'\left(\varphi_{RA}^{\otimes mn}\right)\big)\\
&\leq \max_{\varphi\in\mf(RA)}\min_{\mN'\in B_\epsilon^{\varphi^{\otimes nm}}\left(\mN^{\otimes nm}\right)}D_{\max}\big(\mN'\left(\varphi_{RA}^{\otimes mn}\right)\|(\mE^\varphi)^{\otimes n}\left(\varphi_{RA}^{\otimes mn}\right)\big)\\
&\leq \max_{\varphi\in\mf(RA)}D_{\max}\big(\mM^\varphi_{nm}\left(\varphi_{RA}^{\otimes mn}\right)\|(\mE^\varphi)^{\otimes n}\left(\varphi_{RA}^{\otimes mn}\right)\big)\\
&\leq nr_m-\log(1-\delta_{mn})\\
&=nE_{\mf}^{(m)}(\mN)+n\epsilon-\log(1-\delta_{mn}).
\end{align}
Hence, 
\be\label{gg}
\frac{1}{nm}\underline{LR}_{\mf}^{\epsilon,mn}(\mN)\leq \frac{1}{m}E_{\mf}^{(m)}(\mN)+\frac{\epsilon}{m}-\frac{1}{nm}\log(1-\delta_{mn}).
\ee
Now, similar to the arguments given in~\cite{BP2010} in the state domain, also here we have for any $m\in\mbb{N}$
\be
\limsup_{n\to\infty}\frac{1}{nm}\underline{LR}_{\mf}^{\epsilon,mn}(\mN)=\limsup_{n\to\infty}\frac{1}{n}\underline{LR}_{\mf}^{\epsilon,n}(\mN^{\otimes n}).
\ee
Hence, taking on both sides of~\eqref{gg} the limit $n\to\infty$ followed by the limit $\epsilon\to 0$ gives
\be
\lim_{\epsilon\to 0}\limsup_{n\to\infty}\frac{1}{n}\underline{LR}_{\mf}^{\epsilon,n}(\mN^{\otimes n})\leq \frac{1}{m}E_{\mf}^{(m)}(\mN)\;.
\ee
Since the above equation holds for all $m\in\mbb{N}$, this completes the proof. The proof of the equality for $D_{\mf}^{(\infty)}=LR^{(\infty)}_{\mf}$ follows the exact same lines with $\mD(RA)$ replacing $\mf(RA)$ everywhere.
\end{proof}

\section{Proof of Theorem~\ref{stein}.}

Recall the two types of errors:
\begin{enumerate}
\item The observer guesses that the channel belongs to $\mf(A^n\to B^n)$, 
      while the channel really is $\mN^{\otimes n}_{A\to B}$. This occurs with probability
\be
\alpha^{(n)}(\mN,P_n,\varphi_{RA})\eqdef\tr\left[\mN^{\otimes n}_{A\to B}\left(\varphi_{RA}^{\otimes n}\right)(I-P_n)\right].
\ee
\item The observer guesses that the channel is $\mN^{\otimes n}_{A\to B}$, while the channel 
      really is some $\mM_n\in\mf(A^n\to B^n)$. This occurs with probability
\be
\beta^{(n)}(P_n,\mM_n,\varphi_{RA})\eqdef\tr\left[\mM_{n}\left(\varphi_{RA}^{\otimes n}\right)P_n\right],
\ee
and the worst case for a given $\varphi_{RA}\in\mf(RA)$ is
\be
\beta_{\mf}^{(n)}(P_n,\varphi_{RA})\eqdef\max_{\mM_n\in\mf(A^n\to B^n)}\tr\left[\mM_{n}\left(\varphi_{RA}^{\otimes n}\right)P_n\right].
\ee
\end{enumerate}

We further define
\be
\beta_{\mf,\epsilon}^{(n)}\left(\mN,\varphi_{RA}\right)\eqdef\min\left\{\beta_{\mf}^{(n)}(P_n,\varphi_{RA})\;:\;\alpha^{(n)}(\mN,P_n,\varphi_{RA})\leq\epsilon\;\;;\;\;0\leq P_n\leq I_{R^nB^n}\right\}.
\ee

\begin{theorem*}
Let $\mf$ be a closed convex resource theory admitting the tensor product structure, with the set of free states containing a full rank state. Then, for all $\epsilon\in(0,1)$ and all $\varphi\in\mf(RA)$
\be
\lim_{n\to\infty}-\frac{\log\beta_{\mf,\epsilon}^{(n)}\left(\mN,\varphi_{RA}\right)}{n}
=\lim_{n\to\infty}\min_{\mM\in\mf(A^n\to B^n)}\frac{D\left(\mN_{A\to B}^{\otimes n}(\varphi_{RA}^{\otimes n})\big\|\mM_{A^n\to B^n}(\varphi_{RA}^{\otimes n})\right)}{n} . \nonumber
\ee
\end{theorem*}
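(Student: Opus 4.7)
The plan is to reduce the statement to the generalized quantum Stein's lemma of Brand\~ao--Plenio~\cite{BP2010}, applied to the pair consisting of the state $\rho\eqdef\mN_{A\to B}(\varphi_{RA})$ (so that $\rho^{\otimes n}=\mN^{\otimes n}(\varphi^{\otimes n})$) and the sequence of composite alternative hypotheses
\be
\mS_n \eqdef \bigl\{\mM_{A^n\to B^n}(\varphi_{RA}^{\otimes n})\;:\;\mM\in\mf(A^n\to B^n)\bigr\}\subseteq\mD(R^nB^n).\nonumber
\ee
Once we verify that $\{\mS_n\}$ satisfies the Brand\~ao--Plenio axioms, the generalized Stein's lemma yields directly the equality between the optimal error exponent and $\lim_n \tfrac{1}{n}\min_{\sigma\in\mS_n}D(\rho^{\otimes n}\|\sigma)$, which is exactly the right-hand side of the theorem for the given $\varphi$; maximizing over $\varphi\in\mf(RA)$ recovers the main-text form.

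The axioms to verify are the following. \textbf{Convexity and closedness} of each $\mS_n$ follow from the corresponding properties of $\mf(A^n\to B^n)$ together with the linearity and continuity of the evaluation map $\mM\mapsto\mM(\varphi^{\otimes n})$. \textbf{Tensor closure} ($\sigma_n\otimes\tau_m\in\mS_{n+m}$ whenever $\sigma_n\in\mS_n$, $\tau_m\in\mS_m$) follows from condition~3 in the introduction of the paper, since if $\sigma_n=\mM_n(\varphi^{\otimes n})$ and $\tau_m=\mM_m(\varphi^{\otimes m})$ with both $\mM_n,\mM_m$ free, then $\mM_n\otimes\mM_m\in\mf(A^{n+m}\to B^{n+m})$. \textbf{Permutation invariance} of $\mS_n$ comes from the last standing assumption on $\mf$ (closure under conjugation by permutation channels) together with the permutation symmetry of $\varphi^{\otimes n}$. \textbf{Partial-trace closure} is the content-bearing axiom: writing an element of $\mS_{n+1}$ as $\mM(\varphi^{\otimes(n+1)})$ for $\mM\in\mf(A^{n+1}\to B^{n+1})$, we define
\be
\tilde\mM(X_{A^n})\eqdef \tr_{B_{n+1}}\!\bigl[\mM\bigl(X_{A^n}\otimes \varphi_{A_{n+1}}\bigr)\bigr],\nonumber
\ee
where $\varphi_A=\tr_R\varphi_{RA}\in\mf(A)$ is free (the trace being a free operation). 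Since appending free states and discarding a system are free channels, $\tilde\mM\in\mf(A^n\to B^n)$, and by construction $\tilde\mM(\varphi^{\otimes n})=\tr_{R_{n+1}B_{n+1}}\mM(\varphi^{\otimes(n+1)})\in\mS_n$. Finally, for the \textbf{full-rank element} required by~\cite{BP2010}: by the assumption that $\mf(B)$ contains a full-rank state, $\mf(B^n)$ contains a full-rank state $\omega_{B^n}$, and the replacement channel $X\mapsto\tr[X]\omega_{B^n}$ is free; thus $\varphi_R^{\otimes n}\otimes\omega_{B^n}\in\mS_n$, which has full rank on $\operatorname{supp}(\varphi_R)^{\otimes n}\otimes\sH_{B^n}$. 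If $\varphi_R$ is not full rank, we pass to its support (equivalently, replace $R$ by $\operatorname{supp}(\varphi_R)$), on which both $\rho^{\otimes n}$ and the chosen element of $\mS_n$ have full rank; this reduction is standard and does not affect either side of the claimed equality.

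With the five axioms established, the generalized Stein's lemma of~\cite{BP2010} applies verbatim and furnishes the required limit.

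The main obstacle is the partial-trace closure step: it is where the resource-theoretic structure really enters, and its verification must use several of the standing assumptions on $\mf$ simultaneously (closure under tensor product, appending free states, partial trace, and composition). A secondary technicality is the full-rank requirement, which is mild here but must be handled cleanly when $\varphi_R$ is rank-deficient so that no state in $\mS_n$ is strictly full rank on $R^n B^n$; restricting to $\operatorname{supp}(\varphi_R)$ resolves this. Everything else is a routine translation between the channel and state settings.
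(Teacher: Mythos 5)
Your proposal is correct and follows essentially the same route as the paper: both reduce the statement to the generalized quantum Stein's lemma of Brand\~ao--Plenio by verifying the five axioms for the sets $\mM_n(\varphi)=\{\mM(\varphi_{RA}^{\otimes n}):\mM\in\mf(A^n\to B^n)\}$, with the partial-trace axiom handled by the same construction $X\mapsto\tr_{B_{n+1}}[\mM(X\otimes\varphi_{A})]$ and the full-rank element supplied by a free replacement channel. Your explicit restriction to $\operatorname{supp}(\varphi_R)$ when $\varphi_R$ is rank-deficient is a point the paper glosses over, and is a welcome clarification rather than a deviation.
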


\begin{proof}
Fix $\varphi\in\mf(RA)$ and define 
\be
\mM_n(\varphi)\eqdef\left\{\mM_{A^n\to B^n}\left(\varphi_{RA}^{\otimes n}\right)\;:\;\mM\in\mf(A^n\to B^n)\right\}.
\ee
We show that the set $\mM_n(\varphi)$ satisfies the 5 properties of~\cite{BP2010}:
\begin{enumerate}
\item $\mM_n(\varphi)\subset\mD(R^nB^n)$ is closed and convex. This holds trivially since $\mf(A^n\to B^n)$ is closed and convex. 
\item $\mM_n(\varphi)$ contains a state $\sigma_{RB}^{\otimes n}$ with $\sigma\in\mD(RA)$ being full rank.
Indeed, note that by taking $\mM_n=\Omega_{A\to B}^{\otimes n}$ with $\Omega\in\mf({A\to B})$ we get that $\big(\Omega_{A\to B}(\varphi_{RA})\big)^{\otimes n}\in\mM_n(\varphi)$.
Further, taking $\Omega_{A\to B}$ to be the constant channel outputting the fixed full rank state $\omega_B\in\mf(B)$ we get that $\Omega_{A\to B}(\varphi_{RA})=\varphi_R\otimes\omega_B$ is a full rank state.
\item For every $\gamma\in\mM_{n+1}(\varphi)$ then $\tr_k(\gamma)\in\mM_n(\varphi)$ for any $k=1,...,n+1$. Indeed, suppose $\gamma\in\mM_{n+1}(\varphi)$. Then, 
\be
\gamma_{R^{n+1}B^{n+1}}=\mM_{A^{n+1}\to B^{n+1}}\left(\varphi_{RA}^{\otimes n}\otimes\varphi_{RA}\right).
\ee
Now, by tracing out the last subsystem $RB$ we get that
\be
\omega_{R^{n}B^{n}}\eqdef\tr_{RB}\left[\gamma_{R^{n+1}B^{n+1}}\right]=\tr_B\circ\mM_{A^{n+1}\to B^{n+1}}\left(\varphi_{RA}^{\otimes n}\otimes\varphi_{A}\right).
\ee
Define $\Omega\in\cptp(A^n\to B^n)$ as
\be
\Omega_{A^n\to B^n}(X_{A^n})\eqdef\tr_B\circ\mM_{A^{n+1}\to B^{n+1}}\left(X_{A^n}\otimes\varphi_{A}\right)\quad\forall\;X_{A^n}\in\mB(\mH_{A^n})\;.
\ee
Now, since $\mf$ is a QRT admitting the tensor product structure, and since $\varphi_A$ is free, it follows that $\Omega\in\mf(A^n\to B^n)$ (i.e. $\Omega$ is free). Hence,
\be
\omega_{R^nB^n}=\Omega_{A^n\to B^n}(\varphi_{RA}^{\otimes n})\in\mM_n(\varphi)\;.
\ee
The same conclusion holds if we traced out from $\gamma_{R^{n+1}B^{n+1}}$ any of the $n+1$ $RB$ systems.
\item If $\gamma\in\mM_n(\varphi)$ and $\eta\in\mM_m(\varphi)$ then $\gamma\otimes\eta\in\mM_{n+m}(\varphi)$. Indeed, write 
$\gamma_{R^nB^n}=\mM_{A^n\to B^n}\left(\varphi_{RA}^{\otimes n}\right)$ and 
$\eta_{R^mB^m}=\Omega_{A^m\to B^m}\left(\varphi_{RA}^{\otimes m}\right)$. Then, denote by $\Delta_{A^{n+m}\to B^{n+m}}\eqdef\mM_{A^n\to B^n}\otimes \Omega_{A^m\to B^m}\in\mf(A^{n+m}\to B^{n+m})$ and note that 
\be
\gamma_{R^nB^n}\otimes\eta_{R^mB^m}=\Delta_{A^{n+m}\to B^{n+m}}\left(\varphi_{RA}^{\otimes nm}\right)\in\mM_{n+m}(\varphi)\;.
\ee
\item If $\gamma\in\mM_n(\varphi)$ then $P^{\pi}\gamma P^{\pi^{-1}}\in\mM_n(\varphi)$ for every permutation $\pi\in S_n$. Recall that we assume that $\mf$ has the property that if $\mM_n\in\mf(A^n\to B^n)$ then also 
\be
\Pi^{\pi^{-1}}_{B^n\to B^n}\circ\mM_n\circ\Pi^{\pi}_{A^n\to A^n}\in\mf(A^n\to B^n),
\ee
where
\be
\Pi^{\pi}_{A^n\to A^n}(X_{A^n})=P^\pi X_{A^n}P^{\pi^{-1}},
\ee
with $\{P^\pi_{A^n}\}$ a representation of the permutation group in $\mH_{A}^{\otimes n}$. 
Then, for any permutation
$\pi\in S_n$
\ba
P^{\pi}_{R^nB^n}\Big(\mM_{A^n\to B^n}\left(\varphi_{RA}^{\otimes n}\right)\Big)P^{\pi^{-1}}_{R^nB^n}
&=\Pi^{\pi}_{R^nB^n}\circ\left(\id_{R^n}\otimes\mM_{A^n\to B^n}\right)\left(\varphi_{RA}^{\otimes n}\right)\\
&=
\Pi^{\pi}_{R^nB^n}\circ\left(\id_{R^n}\otimes\mM_{A^n\to B^n}\right)\circ\Pi^{\pi^{-1}}_{R^nA^n}\left(\varphi_{RA}^{\otimes n}\right)\\
&=\left(\id_{R^n}\otimes\Pi^{\pi}_{B^n}\circ\mM_{A^n\to B^n}\circ\Pi^{\pi^{-1}}_{A^n}\right)\left(\varphi_{RA}^{\otimes n}\right)
\in\mM_n(\varphi).
\ea
\end{enumerate}
Since the set $\mM_n(\varphi)$ satisfies all the 5 properties of~\cite{BP2010}, the main result of~\cite{BP2010}, which includes both the direct part and strong converse, can be applied to $\mM_n(\varphi)$. In particular, it follows that for any $\epsilon\in(0,1)$
\be
\lim_{n\to\infty}-\frac{\log\beta_{\mf,\epsilon}^{(n)}\left(\mN,\varphi_{RA}\right)}{n}= \lim_{m\to\infty}\frac{1}{m}\min_{\mM\in\mf(A^m\to B^m)}D\left(\mN_{A\to B}^{\otimes m}(\varphi_{RA}^{\otimes m})\big\|\mM(\varphi_{RA}^{\otimes m})\right)\;.
\ee
This concludes the proof.
\end{proof}

\section{Lower bound on the Chernoff bound}
Suppose Alice is given with $t_0$ probability the channel $\mN^{\otimes n}_{A\to B}$ and with $t_1$ probability one of the channels in $\mf(A^n\to B^n)$. Alice's goal is to determine if she is holding in her lab $\mN^{\otimes n}_{A\to B}$ or one of the channels in $\mf(A^n\to B^n)$. The probability of error is therefore given by 
\be
P_{error}^{(n)}(\varphi)=\max_{\mM_n\in\mf(A^n\to B_n)}\frac{1}{2}\left(1-\left\|t_0\mN^{\otimes n}_{A\to B}(\varphi_{R^nA^n})-t_1\mM_n(\varphi_{R^nA^n})\right\|_1\right).
\ee
We had to maximize the error over all possible channels in $\mf$ to get the worst case scenario. She will therefore choose $\varphi$ to minimize the above quantity. That is,
\be
P_{error}^{(n)}\equiv\min_{\varphi\in\mf(R^nA^n)}P_e(\varphi)=\frac{1}{2}\left(1-\max_{\varphi\in\mf(R^nA^n)}\min_{\mM_n\in\mf(A^n\to B^n)}\left\|t_0\mN^{\otimes n}_{A\to B}(\varphi_{R^nA^n})-t_1\mM_n(\varphi_{R^nA^n})\right\|_1\right).
\ee

In~\cite{Audenaert-2007} it was shown that for any two positive operators $A$ and 
$B$ and $\alpha\in(0,1)$ we have
\be
  \tr[A^\alpha B^{1-\alpha}]\geq\frac{1}{2}\tr[A+B-|A-B|]\;.
\ee
Hence, for any $0\leq \alpha\leq 1$,
\be
\frac{1}{2}\left\|t_0\mN^{\otimes n}_{A\to B}(\varphi_{R^nA^n})-t_1\mM_n(\varphi_{R^nA^n})\right\|_1\geq \frac{1}{2}-t_0^\alpha t_1^{1-\alpha}\tr\left[\left(\mN^{\otimes n}_{A\to B}(\varphi_{R^nA^n})\right)^\alpha\left(\mM_n(\varphi_{R^nA^n})\right)^{1-\alpha}\right],
\ee
so that
\be
P_{error}^{(n)}\leq t_0^\alpha t_1^{1-\alpha}\max_{\varphi\in\mf(R^nA^n)}\min_{\mM_n\in\mf(A^n\to B^n)}\tr\left[\left(\mN^{\otimes n}_{A\to B}(\varphi_{R^nA^n})\right)^\alpha\left(\mM_n(\varphi_{R^nA^n})\right)^{1-\alpha}\right].
\ee
We therefore conclude that 
\be
  \liminf_{n\to\infty}-\frac{1}{n}\log P_{error}^{(n)}
    \geq \max_{\alpha\in[0,1]}(1-\alpha)G_{\mf,\alpha}^{\infty}(\mN),
\ee
where
\be
  G_{\mf,\alpha}^{\infty}(\mN) 
    \eqdef \liminf_{n\to\infty}\frac{1}{n}\min_{\varphi\in\mf(R^nA^n)}
                                 \max_{\mM_n\in\mf(A^n\to B^n)}
           D_{\alpha}\left(\mN^{\otimes n}_{A\to B}(\varphi_{R^nA^n})\big\|\mM_n(\varphi_{R^nA^n})\right),
\ee
where $D_\alpha$ is the Petz quantum Renyi divergence.

\end{document}